\DeclareSymbolFont{bbold}{U}{bbold}{m}{n}
\DeclareSymbolFontAlphabet{\mathbbold}{bbold}
\newcommand{\vect}[1]{\mathbbold{#1}}
\newcommand{\vectorones}[1][]{\vect{1}_{#1}}
\newcommand{\vectorzeros}[1][]{\vect{0}_{#1}}
\def\diag{\operatorname{diag}}
\newcommand{\subscr}[2]{#1_{\textup{#2}}}
\newtheorem{theorem}{Theorem}
\newtheorem{lemma}[theorem]{Lemma}
\newtheorem{definition}[theorem]{Definition}
\newtheorem{approximation}[theorem]{Approximation}
\newtheorem{assumption}[theorem]{Assumption}
\newtheorem{proposition}[theorem]{Proposition}
\newtheorem{corollary}[theorem]{Corollary}
\newtheorem{notation}[theorem]{Notation}
\title{Competitive Propagation: Models, Asymptotic Behavior and Multi-stage Games 
  \thanks{This material is based upon work supported by, or in part by, the U. S. Army Research Laboratory and the U. S. Army Research Office under grant number W911NF-15-1-0577, and the UCSB Institute for
    Collaborative Biotechnology under grant W911NF-09-D-0001 from the
    U.S.\ Army Research Office.  The content of the information does not
    necessarily reflect the position or the policy of the Government, and
    no official endorsement should be inferred. This paper is related to an
    early conference article~\cite{WM-FB:14a-conference}: this article
    treats a more general and different set of phenomena, with new results on theoretical analysis, complete
    proofs and new simulation results.}}
\author{\qquad Wenjun Mei \qquad Francesco Bullo \thanks{Wenjun Mei and
    Francesco Bullo are with the Department of Mechanical Engineering and
    with Center for Control, Dynamical Systems, and Computation, University
    of California, Santa Barbara, Santa Barbara, CA 93106, USA,
    \texttt{meiwenjunbd@gmail.com}, \texttt{bullo@engineering.ucsb.edu}}}
\begin{document}
\maketitle 
\begin{abstract}
  In this paper we propose a class of propagation models for multiple competing products over a social network. We consider two propagation mechanisms: social conversion and self conversion, corresponding, respectively, to endogenous and exogenous factors. A novel concept, the product-conversion graph, is proposed to characterize the interplay among competing products. According to the chronological order of social and self conversions, we develop two Markov-chain models and, based on the independence approximation, we approximate them with two respective difference equations systems. 
  Theoretical analysis on these two approximation models reveals the dependency of the systems' asymptotic behavior on the structures of both the product-conversion graph and the social network, as well as the initial condition. In addition to the theoretical work, accuracy of the independence approximation and the asymptotic behavior of the Markov-chain model are investigated via numerical analysis, for the case where social conversion occurs before self conversion.  
  Finally, we propose a class of multi-player and multi-stage competitive propagation games and discuss the seeding-quality trade-off, as well as the allocation of seeding resources among the individuals. We investigate the unique Nash equilibrium at each stage and analyze the system's behavior when every player is adopting the policy at the Nash equilibrium.
\end{abstract}

 \begin{keywords}
   competitive propagation, independence approximation, network structure, stability analysis, multi-stage uncooperative game, seeding, product quality 
 \end{keywords}

\section{Introduction}

\emph{a) Motivation and problem description}
It is of great scientific interest to model some sociological phenomenon as dynamics on networks, such as consensus, polarization, synchronization and propagation. Indeed, the past fifteen years have witnessed a flourishing of research on propagation of diseases, opinions, commercial products etc, collectively referred to as memes, on social networks. Much progress has been made both statistically~\cite{NAC-JHF:07,NAC-JHF:10,DC:10,LC-YS-ADIK-CM-MF-JHF:14} and theoretically~\cite{HWH:00,MEA-SMM:05,WW-XZ:06,AG-DG-LH-XM-JP:11}. In a more recent set of extensions, scientists have begun studying the simultaneous propagation of multiple memes, in which not only the interaction between nodes (or equivalently referred to as individuals) in the network, but also the interplay of multiple memes, plays an important role in determining the system's dynamical behaviors. These two forms of interactions together add complexity and research value to the multi-meme propagation model.

This paper proposed a series of mathematical models on the propagation of
competing products. Three key elements: the interpersonal network, the individuals and the competing products, are modeled respectively as a graph with fixed topology, the nodes on the graph, and the states of nodes. Our models are based on
the characterization of individuals' decision making behaviors under the
social pressure. Two factors determine individuals' choices on which
product to adopt: the endogenous factor and the exogenous factor. The
endogenous factor is the social contact between nodes via social links,
which forms a tendency of imitation, referred to as social pressure in this
paper. The exogenous factor is what is unrelated to the network, e.g., the
products' quality.

In the microscopic level, we model the endogenous and exogenous factors respectively as two types of product-adoption processes: the social conversion and the self conversion. In social conversion, any node randomly picks one of its neighbors and follows that neighbor's state with some given probability characterizing how open-minded the node is. In the self conversion, each node independently converts from one product to another with some given probability depending on the two products involved. Although individuals exhibit subjective preferences when they are choosing the products, statistics on a large scale of different individuals' behaviors often reveal that the relative qualities of the competing products are objective. For example, although some people may have special affections on feature phones, the fact that more people have converted from feature phones to smart phones, rather than the other way around, indicates that the latter is relatively better. We assume that the transit probabilities between the competing products are determined by their relative qualities and thus homogeneous among the individuals.  

\emph{b) Literature review:} Various models have been proposed to describe the propagation on networks, such as the percolation model on random graphs~\cite{RPS-AV:01,MEJN:02}, the independent cascade model~\cite{JG-BL-EM:01, DK-JK-ET:03, SB-DK-MS:07}, the linear threshold model~\cite{DA-AO-EY:11,EY-DA-AO-AS:11,EA-MD-AO:13} and the epidemic-like mean-field model~\cite{YW-DC-CW-CF:03,PvM-JO-RK:09,MY-CS:11}. The first two are stochastic models while the linear threshold model is usually deterministic. The epidemic-like model is an ODE system as the approximation of a Markov chain, which gives the dynamics of any node's state probability distribution.

As extensions to the propagation of a single meme, some recent papers have
discussed the propagation of multiple memes, e.g.,
see~\cite{LW-AF-AV-FM:12,KA-EM:11, AF-AJ:12,FU-KS:12,BP-AB-RR-CF:12,
  AB-BP-RR-CF:12,FDS-CS:14,AS-DT-LK:14,TC-CN-SMW-AZ:07,SS-BB-HV-AS-PH:12,SG-HH-MK:14,AF-AA-AJ:15}. Some
of these papers adopt a Susceptible-Infected-Susceptible (SIS)
epidemic-like model and discuss the long-term coexistence of multiple memes
in single/multiple-layer networks, e.g.,
see~\cite{BP-AB-RR-CF:12,AB-BP-RR-CF:12,FDS-CS:14}. Some papers focus
instead on the strategy of initial seeding to maximize or prevent the
propagation of one specific meme in the presence of
adversaries~\cite{TC-CN-SMW-AZ:07,SS-BB-HV-AS-PH:12,SG-HH-MK:14,AF-AA-AJ:15}.
Among all these papers mentioned in this paragraph, our model is most
closely related to the work by Stanoev et.\ al.~\cite{AS-DT-LK:14}
but the social contagion process in~\cite{AS-DT-LK:14} is different from our model and theoretical analysis
on the general model is not included

\emph{c) Contribution:} Firstly we propose a generalized and novel model for
the competitive propagation on social networks. By taking into account both
the endogenous and exogenous factors and considering the individual
variance as well as the interplay of the competing products, our model is
general enough to describe a large class of multi-meme propagation
processes. Moreover, in the modeling of multi-meme contagions, many models
come across the problem of dealing with multiple contagions to one node by
different memes at a single time step, which is usually avoided by assuming
the infinitesimal step length so that it only allows for a single contagion
at every step. Differently from these models, the problem of multiple
contagions does not occur in our model since we model the contagion process
as the individual's initiative choice under the social pressure, which is
more suitable for the product-adoption process. In addition, 
compared with the independent cascade model, in which individuals' choices are
irreversible, our models adopt a more realistic assumption that conversions
from one product to another are reversible and occur persistently.

Secondly, we propose a new concept, the product-conversion graph, to characterize the interplay between the products. There are two graphs in our model: the social network describing the interpersonal connections, and the product-conversion graph defining the transitions between the products in self conversion, which in turn reflect the products' relative quality. 

Thirdly, starting from the description of individuals' behavior, we develop two Markov-chain competitive propagation models different in the chronological order of the social conversion and the self conversion processes. Applying the independence approximation, we propose two respective network competitive propagation models, which are difference equations systems, such that the dimension of our problem is reduced and some theorems in the area of dynamical systems can be applied to the analysis of the approximation models.

Fourthly, both theoretical analysis and simulation results are
presented on the dynamical properties of the network competitive
propagation models. We discuss the existence, uniqueness and stability of
the fixed point, as well as how the systems' asymptotic state probability distribution is
determined by the social network structure, the individuals'
open-mindedness, the initial condition and, most importantly, the structure
of the product-conversion graph. We find that, if the product-conversion graph contains only one absorbing strongly connected component, then the self
conversion dominates the system's asymptotic behavior; With multiple absorbing strongly connected components in the product-conversion graph, the system's asymptotic state
probability distribution also depends on the initial condition, the network
topology and the individual open-mindedness.
In addition, simulation results are presented to show the high accuracy of
the independence approximation and reveal that the original Markov-chain
model also exhibits the same asymptotic behavior.

At last, based on the network competitive propagation model, and according to what actions can be taken by the players, we propose two different models of multi-player and multi-stage non-cooperative games, in which the players are the competing companies with bounded investment budgets. Each company can invest on seeding, e.g., advertisement and promotion, or its product's quality. For each game model, we investigate the unique Nash equilibrium at each stage. Theoretical analysis reveals some strategic insights on the seeding-quality trade-off and the allocation of seeding resources. The insights are directive and realistic. We also discuss the systems' behaviors if every player is adopting the policy at the Nash equilibrium. 

\emph{d) Organization:} The rest of this paper is organized as follows. Section II give the assumptions of two Markov-chain propagation models. Section III and IV discuss the approximation of these two models separately. In Section V, we discuss the multi-player and multi-stage competitive propagation games. Section VI is the conclusion.

\section{Model Description and Notations}
\emph{a) Social network as a graph:} In this model, a social network is considered as an undirected, unweighted, fixed-topology graph $G=(V,E)$ with $n$ nodes. The nodes are indexed by $i\in V=\{ 1,2,\dots,n \}$. The adjacency matrix is denoted by $A=(a_{ij})_{n\times n}$ with $a_{ij}=1$ if $(i,j)\in E$ and $a_{ij}=0$ if $(i,j)\notin E$.

The row-normalized adjacency matrix is denoted by $\tilde{A}=(\tilde{a}_{ij})_{n \times n}$, where $\tilde{a}_{ij}=\frac{1}{N_i}a_{ij}$ with $N_i=\sum_{j=1}^n a_{ij}$. The graph $G=(V,E)$ is always assumed connected and there is no self loop, i.e., $\tilde{a}_{ii}=0$ for any $i\in V$.

\emph{b) Competing products and the states of nodes:} Suppose there are $R$ competing products, denoted by $H_1, H_2,\dots,H_R$,  propagating in the network. We consider a discrete-time model, i.e., $t\in \mathbb{N}$, and assume the products are mutually exclusive. We do not specify the state of adopting no product and collectively refer to all the states as ``products''. Denote by $D_i(t)$ the state of node $i$ after time step $t$. For any $t\in \mathbb{N}$, $D_i(t)\in \{ H_1,H_2,\dots,H_R \}$. For simplicity let $\Theta=\{1,2,\dots,R\}$, i.e., the set of the product indexes.

\emph{c) Nodes' production adoption behavior: }Two mechanisms define the individuals' behavior: the social conversion and the self conversion. The following two assumptions propose respectively two models different in the chronological order of the social and self conversions.
\begin{assumption}[Social-self conversion model]\label{asmp_social_self_conv}
Consider the competitive propagation of $R$ products in the network $G=(V,E)$. At time step $t+1$ for any $t\in \mathbb{N}$, suppose the previous state of any node $i$ is $D_i(t)=H_r$. Node $i$ first randomly pick one of its neighbor $j$ and following $j$'s previous state, i.e., $D_i(t+1)=D_j(t)$, with probability $\alpha_i$. If node $i$ does not follow $j$'s state in the social conversion, with probability $1-\alpha_i$, then node $i$ converts to product $H_s$ with probability $\delta_{rs}$ for any $s\neq r$, or stay in $H_r$ with probability $\delta_{rr}$.
\end{assumption}    

\begin{assumption}[Self-social conversion model]\label{asmp_self_social_conv}
At any time step $t+1$, any node $i$ with $D_i(t)=H_r$ converts to $H_s$ with probability $\delta_{rs}$ for any $s\neq r$, or stay in the state $H_r$ with probability $\delta_{rr}$. If node $i$ stays in $H_r$ in the process above, then node $i$ randomly picks a neighbor $j$ and following $D_j(t)$ with probability $\alpha_i$, or still stay in $H_r$ with probability $1-\alpha_i$.
\end{assumption}

\begin{figure}
\vspace{0.08cm}
\begin{center}
\includegraphics[width=.8\linewidth]{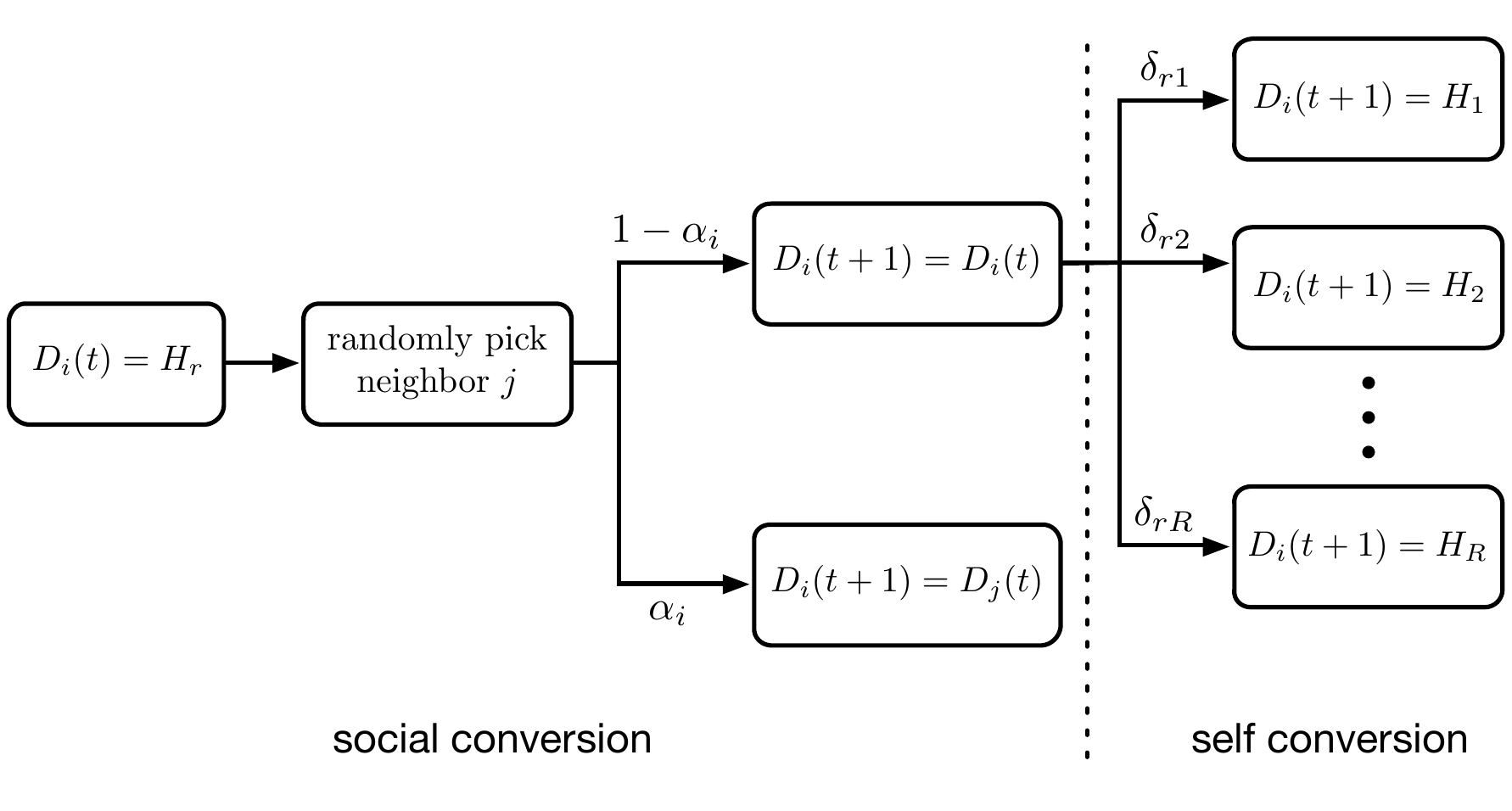}
\caption{Diagram illustration for the social-self conversion model}
\label{fig_diagram_social_self}
\end{center}
\end{figure}

\begin{figure}
\vspace{0.01cm}
\begin{center}
\includegraphics[width=.75\linewidth]{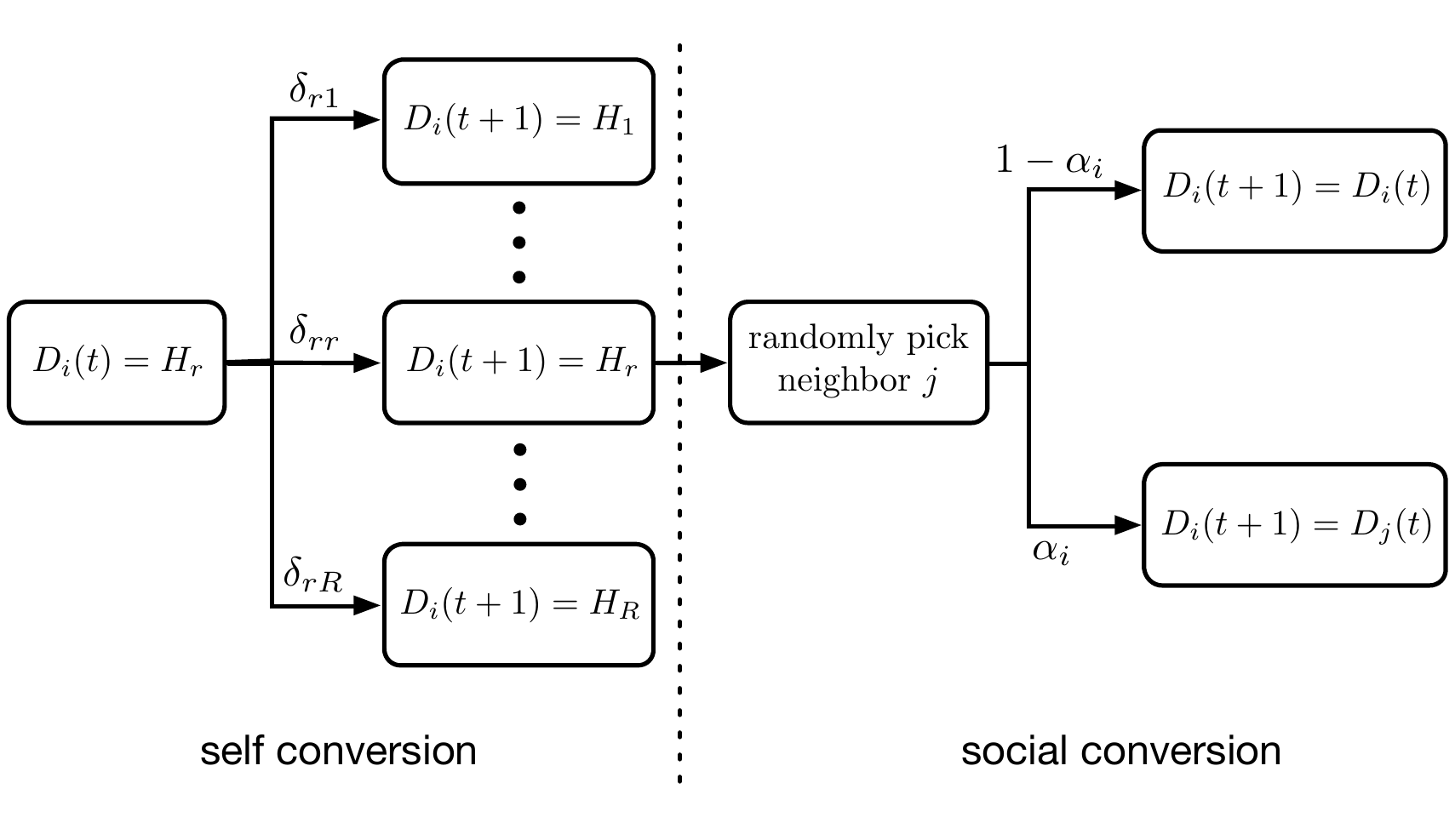}
\caption{Diagram illustration for the self-social conversion model}
\label{fig_diagram_self_social}
\end{center}
\end{figure}

Assumptions~\ref{asmp_social_self_conv} and~\ref{asmp_self_social_conv} are illustrated by Figure~\ref{fig_diagram_social_self} and Figure~\ref{fig_diagram_self_social} respectively. By introducing the parameters $\delta_{rs}$ we define a directed and weighted graph with the adjacency matrix $\Delta=(\delta_{rs})_{R\times R}$, referred to as the \emph{product-conversion graph}. Figure~\ref{fig_prod_conv_graph} gives an example of the product-conversion graph for different smart phone operation systems. Based on either of the two assumptions, $\Delta$ is row-stochastic. In this paper we discuss several types of structures of the product-conversion graph, e.g., the case when it is  strongly connected, or consists of a transient subgraph and some isolated absorbing subgraphs. The parameter $\alpha_i$ characterizes node $i$'s inclination to be influenced by social pressure. Define $\bm{\alpha}=(\alpha_1,\alpha_2,\dots,\alpha_n)^{\top}$ as the individual open-mindedness vector. Assume $0<\alpha_i<1$ for any $i\in V$.  

\begin{figure}
\begin{center}
\includegraphics[width=.65\linewidth]{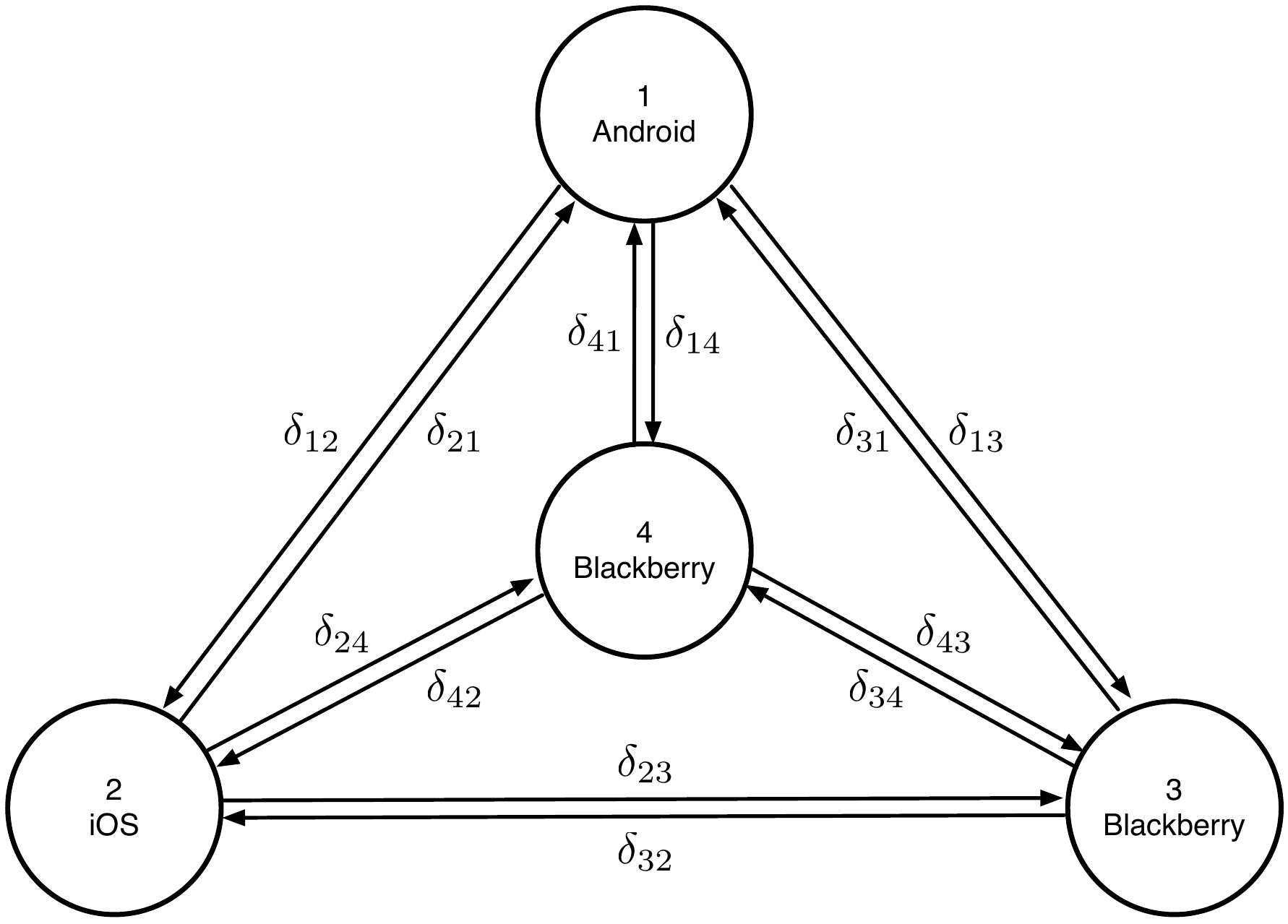}
\caption{An example of the product-conversion graph for different smart phone operation systems.}
\label{fig_prod_conv_graph}
\end{center}
\end{figure}

\emph{d) Problem description:} According to either Assumption~\ref{asmp_social_self_conv} or Assumption~\ref{asmp_self_social_conv}, at any time step $t+1$, the probability distribution of any node's states depends on its own state as well as the states of all its neighbors at time $t$. Therefore, the collective evolution of nodes' states is a $R^n$-state discrete-time Markov chain. Define $p_{ir}(t)$ as the probability that node $i$ is in state $H_r$ after time step $t$, i.e., $p_{ir}(t)=\mathbb{P}[D_i(t)=H_r]$. We aim to understand the dynamics of $p_{ir}(t)$. Since the Markov chain models have exponential dimensions and are difficult to analyze, we approximate it with lower-dimension difference equations systems and analyze instead the dynamical properties of the approximation systems. 

\emph{e) Notations:} Before proceeding to the next section, we introduce some frequently used notations in Table~\ref{table_notations}.
\begin{table}[htbp]\caption{Notations frequently used in this paper}\label{table_notations}
\begin{center}
\begin{tabular}{r p{6cm} }
\toprule
$\succeq$ ($\preceq$) & entry-wise no less(greater) than\\
$\succ$ ($\prec$) & entry-wise strictly greater(less) than\\
$\vectorones[n]$, $\vectorzeros[n]$ & $(1,1,\dots,1)^{\top} \in \mathbb{R}^{n\times 1}$, $(0,0,\dots,0)^{\top}\in \mathbb{R}^{n\times 1}$ \\
$\vectorzeros[n\times m]$ & $(0)_{n \times m}$\\
$S_{nm}(\bm{a})$ & The set $\{\bm{X}\in \mathbb{R}^{n\times m}\,|\, X\succeq \vectorzeros[n\times m],X\vectorones[m]=\bm{a}\}$ for any $\bm{a}\in \mathbb{R}^n$\\
$\tilde{S}_{nm}(\bm{a})$ & The set $\{\bm{X}\in \mathbb{R}^{n\times m}\,|\, X\succeq \vectorzeros[n\times m],X\vectorones[m]\preceq \bm{a}\}$ for any $\bm{a}\in \mathbb{R}^n$\\
$\bm{w}(M)$ & The normalized dominant left eigenvector for matrix $M\in \mathbb{R}^{l\times l}$ if it has one\\
$\bm{x}^r$ & The $r$-th column vector of the matrix $X\in \mathbb{R}^{n\times m}$\\
$\bm{x}^{(i)}$ & The $i$-th row vector of the matrix $X\in \mathbb{R}^{n\times m}$\\
$\bm{x}^{(-i)}$ & The $i$-th row vector of the matrix $\tilde{A}X\in \mathbb{R}^{n\times m}$, i.e., $\bm{x}^{(-i)}=(x_{-i1},x_{-i2},\dots,x_{-im})$ where $x_{-ir}=\sum_{j=1}^n \tilde{a}_{ij}x_{jr}$\\
$G(A)$ & The graph with the adjacency matrix $A$\\
\bottomrule
\end{tabular}
\end{center}
\end{table} 




\section{Network Competitive Propagation Model with Social-self conversion}
This section is based on Assumption~\ref{asmp_social_self_conv}. We first derive an approximation model for the time evolution of $p_{ir}(t)$, referred to as the \emph{social-self conversion network competitive propagation model} (social-self NCPM), and then analyze the asymptotic behavior of the approximation model and its relation to the social network topology, the product-conversion graph, the initial condition and the individuals open-mindedness. Further simulation work is presented in the end of this section.

\subsection{Derivation of the social-self NCPM}
Some notations are used in this section.
\begin{notation}
For the competitive propagation of products $\{ H_1,H_2,\dots,H_R \}$ on the network $G=(V,E)$, 

(1) define the random variable $X_i^r(t)$ by
\begin{equation*}
X_i^r(t)=
\begin{cases}
\displaystyle 1, & \quad \text{if }D_i(t)=H_r,\\
\displaystyle 0, & \quad \text{if }D_i(t)\neq H_r.
\end{cases}
\end{equation*}
According to the mutual exclusiveness of the competing products, for any $i\in V$, if $X_i^r(t)=1$, then $X_i^s(t)=0$ for any $s\neq r$;

(2) Define the $n-1$ tuple $\bm{D}_{-i}(t)=(D_1(t),\dots, D_{i-1}(t),D_{i+1}(t),\dots, D_n(t))$, i.e., the states of all the nodes except node $i$ after time step $t$;

(3) Define the following notations for simplicity:
\begin{align*}
P_{ij}^{rs}(t) & =\mathbb{P}[X_i^r(t)=1\,|\, X_j^s(t)=1],\\
P_i^r(t;-i) & =\mathbb{P}[X_i^r(t)=1\,|\, \bm{D}_{-i}(t)],\\
\Gamma_i^r(t;s,-i) & =\mathbb{P}[X_i^r(t+1)=1\,|\, X_i^s(t)=1,\bm{D}_{-i}(t)].
\end{align*}
\end{notation}

In the derivation of the network competitive propagation model, the following approximation is adopted:
\begin{approximation}[Independence Approximation]\label{indep_approx}
For the competitive propagation of $R$ products on the network $G=(V,E)$, when deriving the equation for $p_{ir}(t)$, approximate the conditional probability $P_{ij}^{ms}(t)$ by its corresponding total probability $p_{im}(t)$ for any $m,s\in \Theta$ and any $i,j\in V$.
\end{approximation}

With the \emph{independence approximation}, the social-self NCPM is presented in the theorem below.
\begin{theorem}[Social-self NCPM]\label{thm_social_self_mf_model}
Consider the competitive propagation based on Assumption~\ref{asmp_social_self_conv}, with the social network and the product-conversion graph represented by their adjacency matrices $\tilde{A}=(\tilde{a}_{ij})_{n\times n}$ and $\Delta=(\delta_{rs})_{R\times R}$ respectively. The probability $p_{ir}(t)$ satisfies
\begin{equation}\label{eq_social_self_NCPM_exact}
\begin{split}
p_{ir}&(t+1)-p_{ir}(t) \\
& = \sum_{s\neq r} \alpha_i \sum_{j=1}^n \tilde{a}_{ij} \big( P_{ij}^{sr}(t)p_{jr}(t)-P_{ij}^{rs}(t)p_{js}(t) \big)\\
& \quad + \sum_{s\neq r}(1-\alpha_i)(\delta_{sr}p_{is}(t)-\delta_{rs}p_{ir}(t)),
\end{split}
\end{equation}
for any $i\in V$ and $r\in \Theta$. Applying the independence approximation, the approximation model for equation~\eqref{eq_social_self_NCPM_exact}, i.e., the social-self NCPM, is
\begin{equation}\label{eq_mf_social_self_NCPM_entry}
\begin{split}
p_{ir}( & t+1)\\
&=\alpha_i \sum_{j=1}^n \tilde{a}_{ij}p_{jr}(t) + (1-\alpha_i)\sum_{s=1}^R \delta_{sr}p_{is}(t).
\end{split}
\end{equation} 
\end{theorem}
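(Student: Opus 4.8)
The plan is to begin from the definition $p_{ir}(t+1)=\Prob[D_i(t+1)=H_r]$ and condition on node $i$'s previous state and the states of its neighbors. The key conceptual device is the one-step transition probability $\Gamma_i^r(t;s,-i)=\Prob[X_i^r(t+1)=1\mid X_i^s(t)=1,\bm{D}_{-i}(t)]$ introduced in the Notation. First I would write, by the total probability formula over the previous state $H_s$ of node $i$,
\begin{equation*}
p_{ir}(t+1)=\Ex\!\Big[\textstyle\sum_{s=1}^R \Gamma_i^r(t;s,-i)\,P_i^s(t;-i)\Big],
\end{equation*}
where the expectation is over the neighbor configuration $\bm{D}_{-i}(t)$. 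The heart of the derivation is to compute $\Gamma_i^r(t;s,-i)$ directly from Assumption~\ref{asmp_social_self_conv}: with probability $\alpha_i$ node $i$ performs social conversion and adopts $H_r$ iff the randomly chosen neighbor is in $H_r$, contributing $\alpha_i\sum_{j}\tilde a_{ij}X_j^r(t)$; with probability $1-\alpha_i$ it performs self conversion from $H_s$ to $H_r$, contributing $(1-\alpha_i)\delta_{sr}$. Substituting and simplifying the self-conversion terms by splitting $s=r$ off from $s\neq r$ (using $\sum_{s}\delta_{rs}=1$) produces the ``gain minus loss'' structure that, after taking expectations and writing $\Ex[X_j^r(t)]=p_{jr}(t)$ and recognizing conditional probabilities $P_{ij}^{sr}(t)$, yields exactly~\eqref{eq_social_self_NCPM_exact}.

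\textbf{Applying the approximation.}
For the second equation, I would return to the pre-simplified form of $p_{ir}(t+1)$ rather than to the differenced equation~\eqref{eq_social_self_NCPM_exact}, since the approximation is cleanest there. The social term, before differencing, is the probability that node $i$ adopts some neighbor's $H_r$; writing this as $\alpha_i\sum_{j}\tilde a_{ij}\Prob[X_j^r(t)=1]$ already gives $\alpha_i\sum_j \tilde a_{ij}p_{jr}(t)$ with no correlation issue, because node $j$'s state appears alone. The self term is $(1-\alpha_i)\sum_{s}\delta_{sr}\Prob[X_i^s(t)=1]=(1-\alpha_i)\sum_{s}\delta_{sr}p_{is}(t)$. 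Collecting these reproduces~\eqref{eq_mf_social_self_NCPM_entry} immediately.

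\textbf{Where the approximation actually enters.}
The subtle point—and the step I expect to be the main obstacle to state carefully—is reconciling the two equations. Equation~\eqref{eq_social_self_NCPM_exact} contains genuine two-node conditional probabilities $P_{ij}^{rs}(t)$ because the exact bookkeeping couples node $i$'s own state at time $t$ with its neighbor $j$'s state. When one derives~\eqref{eq_mf_social_self_NCPM_entry} by the direct route above, these conditional probabilities are implicitly replaced: the term $\alpha_i\sum_j\tilde a_{ij}P_{ij}^{sr}(t)p_{jr}(t)$ in the exact loss/gain expression must be identified, via Approximation~\ref{indep_approx}, with $\alpha_i\sum_j\tilde a_{ij}p_{ir}(t)p_{jr}(t)$-type quantities, and the telescoping over $s$ together with $\sum_s p_{is}(t)=1$ collapses them to the clean neighbor-averaged social term. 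I would therefore verify consistency by taking~\eqref{eq_social_self_NCPM_exact}, substituting $P_{ij}^{sr}(t)\to p_{is}(t)$ and $P_{ij}^{rs}(t)\to p_{ir}(t)$ per the approximation, and checking that the resulting difference equation telescopes exactly into~\eqref{eq_mf_social_self_NCPM_entry}; confirming this algebraic collapse, and making explicit that the marginalization $\sum_s p_{is}(t)=1$ is what removes all residual $i$-state dependence from the social term, is the one place where care is genuinely required.
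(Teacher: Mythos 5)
Your derivation of the exact equation~\eqref{eq_social_self_NCPM_exact} coincides with the paper's: condition on $\bm{D}_{-i}(t)$, use the transition probability $\Gamma_i^r(t;s,-i)=\alpha_i\sum_j\tilde a_{ij}X_j^r(t)+(1-\alpha_i)\delta_{sr}$, and identify $\mathbb{E}[X_j^r(t)P_i^s(t;-i)]=P_{ij}^{sr}(t)p_{jr}(t)$; starting from the full sum over $s$ and splitting off $s=r$ is equivalent to the paper's gain-minus-loss bookkeeping. Where you diverge is in obtaining~\eqref{eq_mf_social_self_NCPM_entry}: the paper substitutes $P_{ij}^{sr}(t)\to p_{is}(t)$, $P_{ij}^{rs}(t)\to p_{ir}(t)$ into~\eqref{eq_social_self_NCPM_exact} and collapses the sums via $\sum_{s\neq r}p_{is}(t)=1-p_{ir}(t)$ and $\sum_{s\neq r}\delta_{rs}=1-\delta_{rr}$ (exactly the telescoping you propose to verify, and it does go through: the cross terms $p_{is}p_{jr}$ and $p_{ir}p_{js}$ cancel after summing over $s$). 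Your direct route is also valid and buys something the paper does not state: since the one-step conditional probability $\alpha_i\sum_j\tilde a_{ij}X_j^r(t)+(1-\alpha_i)\sum_s\delta_{sr}X_i^s(t)$ is linear in the indicator variables, taking expectations gives~\eqref{eq_mf_social_self_NCPM_entry} as an \emph{exact} closed recursion for the marginals of the social-self Markov chain -- the independence approximation is invoked only because the authors route through the differenced equation, and the approximated terms cancel identically. This observation is consistent with (and explains) the near-perfect overlap reported in the paper's accuracy simulations. One cosmetic slip: the substituted social term should read $p_{is}(t)p_{jr}(t)$ rather than ``$p_{ir}(t)p_{jr}(t)$-type,'' but this does not affect the argument.
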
 
\smallskip
\begin{proof}
By definition,
\begin{equation*}
\begin{split}
p_{ir}(t+ & 1)-p_{ir}(t)\\
& =\mathbb{E}[X_i^r(t+1)-X_i^r(t)]\\
& = \mathbb{E}\big[ \mathbb{E}[X_i^r(t+1)-X_i^r(t)\,|\, \bm{D}_{-i}(t)] \big],
\end{split}
\end{equation*}
where the conditional expectation is given by
\begin{equation*}
\begin{split}
\mathbb{E}[X_i^r(t+&1)-X_i^r(t)\,|\, \bm{D}_{-i}(t)]\\
& =\mathbb{P}[X_i^r(t+1)-X_i^r(t)=1\,|\,\bm{D}_{-i}(t)]\\
&\quad -\mathbb{P}[X_i^r(t+1)-X_i^r(t)=-1\,|\, \bm{D}_{-i}(t)]\\
& =\sum_{s\neq r} \Gamma_i^r(t;s,-i)P_i^s(t;-i)\\
&\quad -\sum_{s\neq r}\Gamma_i^s(t;r,-i)P_i^r(t;-i).
\end{split}
\end{equation*}
According to Assumption~\ref{asmp_social_self_conv},
\begin{equation*}
\begin{split}
&\Gamma_i^r(t;s,-i)P_i^s(t;-i)\\
&\text{ }=\alpha_i \sum_j \tilde{a}_{ij}X_j^r(t)P_i^s(t;-i)+(1-\alpha_i)\delta_{sr}P_i^s(t;-i).
\end{split}
\end{equation*}
Therefore, 
\begin{equation*}
\begin{split}
\mathbb{E}[\Gamma_i^r(t;s,-i)P_i^s(t;-i)] & =\alpha_i \sum_j \tilde{a}_{ij} \mathbb{E}[X_j^r(t)P_i^s(t;-i)]\\
&\quad +(1-\alpha_i)\delta_{sr}\mathbb{E}[P_i^s(t;-i)].
\end{split}
\end{equation*}
One the right-hand side of the equation above, $\mathbb{E}[P_i^s(t;-i)]=p_{is}(t)$. Moreover,
\begin{equation*}
\begin{split}
&\mathbb{E}[X_j^r(t)P_i^s(t;-i)]\\
&\text{ } =\sum_{\bm{d}_{-i-j}}\mathbb{P}[X_i^s(t)=1\,|\, \bm{D}_{-i-j}(t), X_j^r(t)=1]\\
&\text{ }\quad \cdot \mathbb{P}[\bm{D}_{-i-j}(t)=\bm{d}_{-i-j},X_j^r(t)=1]\\
&\text{ } =\sum_{\bm{d}_{-i-j}} \mathbb{P}[X_i^s(t)=1,X_j^r(t)=1,\bm{D}_{-i-j}(t)=\bm{d}_{-i-j}]\\
&\text{ } =P_{ij}^{sr}(t)p_{jr}(t).
\end{split}
\end{equation*}
Apply the same computation to $\mathbb{E}[\Gamma_i^s(t;r,-i)P_i^r(t;-i)]$ and then we obtain equation~\eqref{eq_social_self_NCPM_exact}. Replace $P_{ij}^{sr}(t)$ and $P_{ij}^{rs}(t)$ by $p_{is}(t)$ and $p_{ir}(t)$ respectively and according to the equations $\sum_{s\neq r}p_{is}(t)=1-p_{ir}(t)$ and $\sum_{s\neq r}\delta_{rs}=1-\delta_{rr}$, we obtain equation~\eqref{eq_mf_social_self_NCPM_entry}.
\end{proof}

The derivation of Theorem~\ref{thm_social_self_mf_model} is equivalent to the widely adopted mean-field approximation in the modeling of the network epidemic spreading~\cite{PvM-JO-RK:09,MB-JM:12,FDS-CS-PvM:13}. 
Notice that the independence approximation neither neglects the correlation between any two nodes' states, nor destroys the network topology, since $p_{jr}(t)$, $p_{js}(t)$ and $\tilde{a}_{ij}$ all appear in the dynamics of $p_{ir}(t)$.   

\subsection{Asymptotic behavior of the social-self NCPM}
Define the map $f: \mathbb{R}^{n\times R}\rightarrow \mathbb{R}^{n\times R}$ by
\begin{equation}\label{eq_social_self_matrix_f_map}
f(X)=\diag(\bm{\alpha})\tilde{A}X+(I-\diag(\bm{\alpha}))X\Delta.
\end{equation} 
According to equation~\eqref{eq_mf_social_self_NCPM_entry}, the matrix form of the social-self NCPM is written as
\begin{equation}\label{eq_mf_social_self_matrix}
P(t+1)=f\big( P(t) \big),
\end{equation}
where $P(t)=(p_{ir}(t))_{n\times R}$. We analyze how the asymptotic behavior of system~\eqref{eq_mf_social_self_matrix}, i.e., the existence, uniqueness and stability of the fixed point of the map $f$, is determined by the two graphs introduced in our model: the social network with the adjacency matrix $\tilde{A}$, and the product-conversion graph with the adjacency matrix $\Delta$.


\subsubsection{Structures of the social network and the product-conversion graph}
Assume that the social network $G(\tilde{A})$ has a globally reachable node. As for the product-conversion graph, we consider the more general case. Suppose that the product-conversion graph $G(\Delta)$ has $m$ absorbing strongly connected components (absorbing SCCs) and a transient subgraph. Re-index the products such that the product index set for any $l$-th absorbing SCCs is given by 
\begin{align*}
\Theta_1 & = \{1,2,\dots,k_1\},\text{ and}\\
\Theta_l & = \{ \sum_{u=1}^{l-1} k_u+1,\sum_{u=1}^{l-1}k_u+2,\dots,\sum_{u=1}^l k_u \},
\end{align*}
for any $l\in \{2,3,\dots,m\}$, and the index set for the transient subgraph is $\Lambda=\{\sum_{l=1}^m k_l+1,\dots, \sum_{l=1}^m k_l+2,\dots, R\}$. then the adjacency matrix $\Delta$ of the product-conversion graph takes the following form:
\begin{equation}\label{eq_Delta_combined_struc}
\Delta=
\begin{bmatrix}
\bar{\Delta} & \vectorzeros[(R-k_0)\times k_0]\\
 B_{k_0\times (R-k_0)} & \Delta_0
\end{bmatrix},
\end{equation}
where $\bar{\Delta}=\diag[\Delta_1,\Delta_2,\dots,\Delta_m]$ and $B=[B_1,B_2,\dots,B_m]$, with $B_l\in \mathbb{R}^{k_0\times k_1}$ for any $l\in \{1,2,\dots,m\}$, is nonzero and entry-wise non-negative.
Matrix $\Delta_l=(\delta_{rs}^{\Theta_l})_{k_l\times k_l}$, with $\delta_{rs}^{\Theta_1}=\delta_{rs}$ and $\delta_{rs}^{\Theta_l}=\delta_{\sum_{u=1}^{l-1} k_u+r,\sum_{u=1}^{l-1} k_u+s}$ for any $l \in \{2,3,\dots,m\}$, is the adjacency matrix of the $l$-th absorbing SCC, and is thus irreducible and row-stochastic. The following definition classifies four types of structures of $G(\Delta)$.

\begin{definition}[Four sets of product-conversion graphs]\label{def_foure_cases_product_conv_graph}
  Based on whether the product-conversion graph $G(\Delta)$ has a
  transient subgraph and a single or multiple absorbing SCCs, we
  classify the adjacency matrix $\Delta$ into the following four
  cases:
  \begin{enumerate}
\item Case 1 (single SCC): The graph $G(\Delta)$ is strongly connected, i.e., $\Delta=\Delta_1$, with $k_1=R$;
\item Case 2 (single SCC + transient subgraph): The graph $G(\Delta)$ contains one absorbing SCC and a transient subgraph, i.e., $\bar{\Delta}=\Delta_1$ and $k_0\ge 1$; 
\item Case 3 (multi-SCC): The graph $G(\Delta)$ contains $m$ absorbing SCCs, i.e., $\Delta=\diag[\Delta_1,\Delta_2,\dots,\Delta_m]$, with $\sum_{l=1}^m k_l=R$;
\item Case 4 (multi-SCC + transient subgraph): The graph $G(\Delta)$ contains $m$ absorbing SCCs and a transient subgraph, with $\Delta$ given by equation~\eqref{eq_Delta_combined_struc}.
\end{enumerate}
\end{definition}

\subsubsection{Stability analysis of the social-self NCPM}
The following theorem states the distinct asymptotic behaviors of the
social-self NCPM, with different structures of the product-conversion
graph.
\begin{theorem}[Asymptotic behavior for social-self NCPM]\label{thm_asym_behav_social_self_NCPM}
Consider the social-self NCPM on a strongly connected social network $G(\tilde{A})$, with the product-conversion graph $G(\Delta)$. Assume that
\begin{enumerate}
\item Each absorbing SCC $G(\Delta_l)$ of $G(\Delta)$ is aperiodic;
\item For any $\Delta_l$, $l\in \{1,2,\dots,m\}$, as least one column of $\Delta_l$ is entry-wise strictly positive;
\item For any $r\in \Lambda$, $\sum_{s\in \Lambda}\delta_{rs}<1$, i.e., $\Delta_0 \vectorones[k_0]\prec \vectorones[k_0]$.
\end{enumerate}
Then, for any $P(0)\in S_{nR}(\vectorones[n])$, the solution $P(t)$ to
equation~\eqref{eq_mf_social_self_matrix} has the following
properties, depending upon the structure of $\Delta$:
\begin{enumerate}
\item in Case 1, $P(t)$ converges to $P^*=\vectorones[n]\bm{\Delta}^{\top}$ exponentially fast, where $P^*$ is the unique fixed point in $S_{nR}(\vectorones[n])$ for the map $f$ defined by equation~\eqref{eq_social_self_matrix_f_map}. Moreover, the convergence rate is $\epsilon(\Delta)=\subscr{\alpha}{max}+(1-\subscr{\alpha}{max})\zeta(\Delta)$, where $\subscr{\alpha}{max}=\max_i\alpha_i$ and $\zeta(\Delta)=1-\sum_{r=1}^R \min_s \delta_{sr}$;
\item in Case 2, for any $i\in V$, 
\begin{equation*}
\lim_{t\rightarrow \infty} p_{ir}(t)=
\begin{cases}
\displaystyle 0,&\quad \text{for any }r\in \Lambda,\\
\displaystyle w_r(\Delta_1),& \quad \text{for any }r\in \Theta_1;
\end{cases}
\end{equation*}
\item in Case 3, for any $l\in \{1,2,\dots,m\}$ and $i\in V$,
\begin{equation*}
\lim_{t\rightarrow \infty}\bm{p}^{\Theta_l(i)}(t)=\big( \bm{w}^{\top}(M)P^{\Theta_l}(0)\vectorones[k_l] \big)\bm{w}^{\top}(\Delta_l),
\end{equation*}
where $M=\diag(\bm{\alpha})\tilde{A}+I-\diag{\bm{\alpha}}$ and $P^{\Theta_l}(t)=\big( p_{ir}^{\Theta_l}(t) \big)_{n\times k_l}$, with $p_{ir}^{\Theta_l}(t)=p_{i,\sum_{u=1}^{l-1}k_u+r}(t)$ and $\bm{p}^{\Theta_l(i)}(t)$ being the $i$-th row of $P^{\Theta_l}(t)$;
\item in Case 4, for any $l\in \{1,2,\dots,m\}$ and $i\in V$,
\begin{equation*}
\lim_{t\rightarrow \infty} p_{ir}(t)=
\begin{cases}
\displaystyle 0,&\quad \text{for any }r\in \Lambda,\\
\displaystyle \gamma_l w_r(\Delta_l),& \quad \text{for any }r\in \Theta_l,
\end{cases}
\end{equation*}
where $\gamma_l$ depends on $\tilde{A}$, $B_l$, $P^{\Theta_l}(0)$, $P^{\Lambda}(0)$ and satisfies $\sum_{l=1}^m \gamma_l=1$.
\end{enumerate}
\end{theorem}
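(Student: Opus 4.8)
The plan is to exploit two structural facts about the update map $f$ in~\eqref{eq_social_self_matrix_f_map}: it is \emph{linear} in $P$, and it preserves the constraint set $S_{nR}(\vectorones[n])$ (each $\tilde A$ and $\Delta_l$ is row-stochastic, and $\diag(\bm\alpha)$ mixes convexly). The whole theorem then reduces to understanding how $f$ acts on two complementary pieces of each column block: the per-node \emph{mass} carried by a block of products (its row sums) and the \emph{shape} of the distribution inside the block (a zero-row-sum deviation). First I would prove the single engine lemma behind every case. For a row vector $\bm v$ with $\bm v\vectorones[R]=0$ and any row-stochastic $\Delta$, one has $\norm{\bm v\Delta}_1\le\zeta(\Delta)\norm{\bm v}_1$ with $\zeta(\Delta)=1-\sum_r\min_s\delta_{sr}$, obtained by subtracting the column minima (which cancel because $\sum_s v_s=0$). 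Applying this row-by-row to $f(P)-f(Q)$ for $P,Q\in S_{nR}(\vectorones[n])$ and using that $\tilde A$ is row-stochastic gives $\max_i\norm{(f(P)-f(Q))^{(i)}}_1\le\epsilon(\Delta)\max_i\norm{(P-Q)^{(i)}}_1$ with $\epsilon(\Delta)=\alpha_{\max}+(1-\alpha_{\max})\zeta(\Delta)$ (the bound $\alpha_i+(1-\alpha_i)\zeta(\Delta)$ is increasing in $\alpha_i$ once $\zeta<1$). Condition (ii), a strictly positive column, forces $\zeta(\Delta)<1$, hence $\epsilon(\Delta)<1$, so $f$ is a contraction in the $\max_i\ell_1$-row norm. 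Verifying directly that $P^*=\vectorones[n]\bm w^\top(\Delta)$ is a fixed point (from $\tilde A\vectorones[n]=\vectorones[n]$ and $\bm w^\top(\Delta)\Delta=\bm w^\top(\Delta)$) then settles Case~1: existence, uniqueness, and exponential convergence at rate $\epsilon(\Delta)$.

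Next I would isolate the mass sub-dynamics produced by the block form~\eqref{eq_Delta_combined_struc}. Because $\Delta$ has a zero upper-right block, the transient columns close up, $P^\Lambda(t+1)=\diag(\bm\alpha)\tilde A\,P^\Lambda(t)+(I-\diag(\bm\alpha))P^\Lambda(t)\Delta_0$. Summing columns and using $\Delta_0\vectorones[k_0]\preceq c\vectorones[k_0]$ with $c<1$ from condition (iii) shows the mass vector $P^\Lambda(t)\vectorones[k_0]$ is entry-wise dominated by iterates of the nonnegative matrix $\diag(\bm\alpha)\tilde A+c(I-\diag(\bm\alpha))$, all of whose row sums are $<1$, so its spectral radius is $<1$ and $P^\Lambda(t)\to\vectorzeros$. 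Dually, for any absorbing block the per-node block mass $\bm\rho_l(t)=P^{\Theta_l}(t)\vectorones[k_l]$ evolves (in Case~3) by $\bm\rho_l(t+1)=M\bm\rho_l(t)$ with $M=\diag(\bm\alpha)\tilde A+I-\diag(\bm\alpha)$; since $\tilde A$ is strongly connected and $M$ has positive diagonal, $M$ is primitive and row-stochastic, whence $\bm\rho_l(t)\to(\bm w^\top(M)\bm\rho_l(0))\vectorones[n]=\gamma_l\vectorones[n]$, and $\sum_l\gamma_l=1$ follows from $\sum_l\bm\rho_l(0)=\vectorones[n]$ and $\bm w^\top(M)\vectorones[n]=1$.

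The decisive structural step for Cases~3 and~4 is an $f_l$-invariant splitting of each block: writing $Y=(Y\vectorones[k_l])\bm w^\top(\Delta_l)+Y^\perp$ with $Y^\perp$ zero-row-sum, one checks $f_l\big((Y\vectorones[k_l])\bm w^\top(\Delta_l)\big)=(M Y\vectorones[k_l])\bm w^\top(\Delta_l)$ using $\bm w^\top(\Delta_l)\Delta_l=\bm w^\top(\Delta_l)$. Thus the mass part evolves exactly by $M$, while the shape part $Y^\perp$ remains zero-row-sum and contracts at rate $\epsilon(\Delta_l)<1$ by the engine lemma. Combining the two gives $P^{\Theta_l}(t)\to\gamma_l\vectorones[n]\bm w^\top(\Delta_l)$, which is Case~3 (Case~2 being the special case $m=1$, where all mass drains into the single absorbing block). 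For Case~4 the block recursion acquires the source term $(I-\diag(\bm\alpha))P^\Lambda(t)B_l$, so the mass recursion becomes $\bm\rho_l(t+1)=M\bm\rho_l(t)+(I-\diag(\bm\alpha))P^\Lambda(t)B_l\vectorones[k_l]$; the input decays exponentially and is therefore summable, so $\bm\rho_l(t)$ converges to a limit $\gamma_l\vectorones[n]$ now depending on $\tilde A,B_l,P^{\Theta_l}(0),P^\Lambda(0)$, while the contracting shape part, driven by an input tending to zero, still settles at $\bm w^\top(\Delta_l)$. That $\sum_l\gamma_l=1$ follows because the transient mass vanishes and each node's total probability is conserved.

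I expect the main obstacle to be Case~4 (and its degenerate instance Case~2): one must make rigorous the passage from ``contraction plus asymptotically vanishing input'' to convergence of \emph{both} the mass and the shape, i.e.\ a discrete-time asymptotic-input argument, and simultaneously pin down the limiting masses $\gamma_l$ and verify $\sum_l\gamma_l=1$. The coupling through $B_l$ means the $\gamma_l$ are no longer the clean $\bm w^\top(M)P^{\Theta_l}(0)\vectorones[k_l]$ of Case~3 but also collect the transient mass redistributed over all time; controlling this redistribution while keeping the within-block shape pinned to $\bm w(\Delta_l)$ is the technically delicate point.
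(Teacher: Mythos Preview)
Your proposal is correct and shares the paper's core engine: the same $\ell_1$ contraction estimate for zero-sum row vectors under a row-stochastic $\Delta$, the same bound $\epsilon(\Delta)=\alpha_{\max}+(1-\alpha_{\max})\zeta(\Delta)$ for Case~1, the same decay of the transient mass via $\rho\big(\diag(\bm\alpha)\tilde A+c(I-\diag(\bm\alpha))\big)<1$, and the same consensus matrix $M=\diag(\bm\alpha)\tilde A+I-\diag(\bm\alpha)$ governing the block masses. Where you diverge from the paper is in Cases~3--4. The paper does not use your a~priori invariant splitting $Y=(Y\vectorones[k_l])\bm w^\top(\Delta_l)+Y^\perp$; instead it first shows the block mass $P^{\Theta_l}(t)\vectorones[k_l]$ converges to $c_l\vectorones[n]$, then observes that the restriction of $f_l$ to the affine slice $S_{nk_l}(c_l\vectorones[n])$ is a contraction toward $c_l\vectorones[n]\bm w^\top(\Delta_l)$, and finally uses continuity of $\lVert f_l(X)-\hat P^*\rVert_\infty/\lVert X-\hat P^*\rVert_\infty$ to argue that once $P^{\Theta_l}(t)$ is close enough to that slice the contraction rate extends. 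Your splitting is cleaner: it decouples mass and shape exactly for all $t$ (linearity of $f_l$ plus $\bm w^\top(\Delta_l)\Delta_l=\bm w^\top(\Delta_l)$ and $\Delta_l\vectorones[k_l]=\vectorones[k_l]$), so the shape part contracts from the start and no continuity/approximation step is needed. The paper's route, on the other hand, avoids having to split the source term $(I-\diag(\bm\alpha))P^\Lambda(t)B_l$ into mass and shape pieces in Case~4.

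One point you flag correctly but should make explicit: in Case~4 the mass recursion $\bm\rho_l(t+1)=M\bm\rho_l(t)+\bm\phi(t)$ has $\rho(M)=1$, so ``summable input'' alone is not enough; you must separate the eigenvalue~$1$ direction from the rest. The paper does this via a pairwise-difference similarity transform (its Lemma on $QMQ^{-1}$), which isolates a stable $(n-1)\times(n-1)$ block $M_{\text{red}}$ driving the disagreement to zero, while the consensus coordinate accumulates the summable input and converges. Your sketch would need the same decomposition (or the equivalent projection onto $\operatorname{span}\{\vectorones[n]\}$ via $\bm w(M)$) to make the convergence of $\bm\rho_l(t)$ to $\gamma_l\vectorones[n]$ rigorous.
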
 
\smallskip

Before proving the theorem above, a useful and well-known lemma is stated without the proof.
\begin{lemma}[Row-stochastic  matrices after pairwise-difference similarity transform]\label{lem_pairwise_diff_transform}
Let $M\in \mathbb{R}^{n\times n}$ be row-stochastic. Suppose the graph $G(M)$ is aperiodic and has a globally reachable node. Then the nonsingular matrix 
\begin{displaymath}
    Q=
    \begin{bmatrix}
      -1      &1 &      &     \\
      &\ddots  &\ddots      & \\           
      &      &-1&      1 \\ 
      1/n&      \dots &1/n    &      1/n
    \end{bmatrix}
\end{displaymath}
satisfies
\begin{equation*}
QMQ^{-1}=
\begin{bmatrix}
\subscr{M}{red} & \vectorzeros[n-1]\\
\bm{c}^{\top} & 1
\end{bmatrix}
\end{equation*}
for some $\bm{c}\in \mathbb{R}^{n-1}$ and $\subscr{M}{red}\in \mathbb{R}^{(n-1)\times (n-1)}$. Moreover, $\subscr{M}{red}$ is discrete-time exponentially stable. 
\end{lemma}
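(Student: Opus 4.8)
The plan is to first obtain the block-triangular form by a direct computation using the special structure of $Q$, and only afterwards read off the stability of $\subscr{M}{red}$ from Perron--Frobenius spectral theory. Since $M$ is row-stochastic we have $M\vectorones[n]=\vectorones[n]$, so $\vectorones[n]$ is a right eigenvector of $M$ for the eigenvalue $1$; this is the single algebraic fact that drives the whole argument. The first step is to identify the last column of $Q^{-1}$. Writing $Qv=\bm{e}_n$ for the unknown last column $v$ (where $\bm{e}_n$ is the $n$-th standard basis vector), the top $n-1$ rows of $Q$ are the pairwise-difference rows $(\bm{e}_{k+1}-\bm{e}_k)^\top$, which force $v_{k+1}=v_k$ for every $k$, so $v=c\,\vectorones[n]$; the averaging row then imposes $\tfrac1n\vectorones[n]^\top v=1$, fixing $c=1$. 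Hence the last column of $Q^{-1}$ is exactly $\vectorones[n]$.

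Second, I would compute the last column of $QMQ^{-1}$, which equals $QM\vectorones[n]=Q\vectorones[n]$ by row-stochasticity. Each difference row annihilates $\vectorones[n]$ while the averaging row returns $1$, so $Q\vectorones[n]=\bm{e}_n$. This shows the last column of $QMQ^{-1}$ is $\bm{e}_n$, i.e. the top-right block is $\vectorzeros[n-1]$ and the $(n,n)$ entry is $1$, leaving some $\subscr{M}{red}\in\mathbb{R}^{(n-1)\times(n-1)}$ and $\bm{c}\in\mathbb{R}^{n-1}$ in the remaining blocks. This establishes the claimed block-triangular structure.

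Third, where the graph hypotheses enter, I would analyze the spectrum. Similarity preserves eigenvalues, and the block-triangular form gives the factorization $\det(\lambda I-M)=(\lambda-1)\det(\lambda I-\subscr{M}{red})$, so the spectrum of $M$ is the spectrum of $\subscr{M}{red}$ together with a single copy of $1$, counted with multiplicity. The key fact to invoke is the Perron--Frobenius theory for row-stochastic matrices: because $G(M)$ has a globally reachable node, its unique closed (absorbing) strongly connected component is reached from every node, which makes the eigenvalue $1$ algebraically simple; the aperiodicity of that component then forces every other eigenvalue $\lambda$ of $M$ to satisfy $\abs{\lambda}<1$. Since $1$ is simple in $M$ and the bottom-right block already accounts for its single copy, the factorization above shows $1$ cannot be a root of $\det(\lambda I-\subscr{M}{red})$; therefore all eigenvalues of $\subscr{M}{red}$ are among the strictly-inside-the-disk eigenvalues of $M$, giving $\rho(\subscr{M}{red})<1$, which is precisely discrete-time exponential stability.

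I expect the main obstacle to be the spectral step rather than the algebra: one must correctly argue that the combinatorial hypotheses (a globally reachable node, together with aperiodicity) yield a Perron eigenvalue at $1$ that is \emph{simple} and a genuine spectral gap, and then track multiplicities carefully so that the single unit eigenvalue is entirely absorbed into the trivial bottom-right block and does not reappear inside $\subscr{M}{red}$. The first two steps, by contrast, are purely mechanical consequences of $M\vectorones[n]=\vectorones[n]$ and the definition of $Q$.
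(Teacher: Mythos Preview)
Your argument is correct: the last column of $Q^{-1}$ is indeed $\vectorones[n]$, row-stochasticity gives $QMQ^{-1}\bm{e}_n=Q\vectorones[n]=\bm{e}_n$ and hence the block form, and the block-triangular factorization $\det(\lambda I-M)=(\lambda-1)\det(\lambda I-\subscr{M}{red})$ together with the Perron--Frobenius fact that under the stated graph hypotheses the eigenvalue $1$ of $M$ is algebraically simple and the rest of the spectrum lies strictly inside the unit disk forces $\rho(\subscr{M}{red})<1$. The paper itself states this lemma as ``well-known'' and gives no proof, so there is nothing to compare your approach against; your write-up is a clean, self-contained justification of the result.
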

\smallskip

\textbf{Proof of Theorem~\ref{thm_asym_behav_social_self_NCPM}:} (1) Case 1:

Since matrix $\Delta$ is row-stochastic, irreducible and aperiodic, according to the Perron-Frobenius theorem, $\bm{w}(\Delta)\in \mathbb{R}^R$ is
well-defined. By substituting $P^*$, defined by $\bm{p}^{*(i)}=\bm{w}(\Delta)^{\top}$
for any $i\in V$, into equation~\eqref{eq_social_self_matrix_f_map}, we
verify that $P^*$ is a fixed point of $f$. 

For any $X$ and $Y\in \mathbb{R}^{n\times R}$, define the distance $d(\cdot, \cdot)$ by $d(X,Y)=\lVert X-Y \rVert_{\infty}$. Then $(S_{nR}(\vectorones[n]),d)$ is a complete metric space. For any $X\in S_{nR}(\vectorones[n])$, it is easy to check that $f(X)\succeq \vectorzeros[n\times R]$ and
\begin{equation*}
f(X)\vectorones[R]=\diag(\bm{\alpha})\tilde{A}X\vectorones[R]+(I-\diag(\bm{\alpha}))X\vectorones[R]= \vectorones[n].
\end{equation*}
Therefore, $f$ maps $S_{nR}(\vectorones[n])$ to $S_{nR}(\vectorones[n])$.

For any $X\in S_{nR}(\vectorones[n])$, according to equation~\eqref{eq_social_self_matrix_f_map},
\begin{equation}\label{eq_soc_sel_conencted_row_difference_1}
\begin{split}
\lVert f(X)^{(i)}-&f(P^*)^{(i)} \rVert_1 \\
& \le\alpha_i \lVert \bm{x}^{(-i)}-\bm{p}^{*(-i)} \rVert_1\\
&\quad +(1-\alpha_i)\lVert (\bm{x}^{(i)}-\bm{p}^{*(i)})\Delta \rVert_1.
\end{split}
\end{equation}
The first term of the right-hand side of~\eqref{eq_soc_sel_conencted_row_difference_1} satisfies
\begin{equation*}
\begin{split}
\lVert \bm{x}^{(-i)}-\bm{p}^{*(-i)} \rVert_1 & =\sum_{r=1}^R \lvert \sum_{j=1}^n \tilde{a}_{ij}x_{jr}-w_r(\Delta) \rvert\\
& \le \sum_{r=1}^R \sum_{j=1}^n \tilde{a}_{ij}\lvert x_{jr}-w_r(\Delta) \rvert\\
& \le \lVert X-P^* \rVert_{\infty}.
\end{split}
\end{equation*}
The second term of the right-hand side of~\eqref{eq_soc_sel_conencted_row_difference_1} satisfies
\begin{equation*}
\lVert (\bm{x}^{(i)}-\bm{p}^{*(i)})\Delta \rVert_1 = \sum_{r=1}^R \lvert \sum_{s=1}^R\big(x_{is}-w_s(\Delta)\big)\delta_{sr} \rvert.
\end{equation*}
If $\bm{x}^{(i)}=\bm{p}^{*(i)}$, then 
\begin{equation*}
\lVert f(X)^{(i)}-f(P^*)^{(i)} \rVert_1 \le \alpha_i \lVert X-P^* \rVert_{\infty}.
\end{equation*}
If $\bm{x}^{(i)}\neq \bm{p}^{*(i)}$, since
$\bm{x}^{(i)}\vectorones[R]=\bm{p}^{*(i)}\vectorones[R]=1$, both the set $\theta_1=\{ s\,|\,x_{is}\ge w_s(\Delta) \}$ and the set $\theta_2=\{ s\,|\,x_{is}<w_s(\Delta) \}$ are nonempty and  
\begin{equation*}
\begin{split}
\sum_{s\in \theta_1} \big( x_{is}-w_s(\Delta) \big)&=\sum_{s\in \theta_2} \big( w_s(\Delta)-x_{is} \big)\\
&=\frac{1}{2}\sum_{s=1}^R \lvert x_{is}-w_s(\Delta) \rvert.
\end{split}
\end{equation*}
Therefore, 
\begin{equation}\label{eq_soc_sel_connected_estimate_row_diff}
\begin{split}
\lVert (\bm{x}^{(i)}- & \bm{p}^{*(i)})\Delta \rVert_1\\
& = \sum_{r=1}^R \sum_{s=1}^R \lvert x_{is}-w_s(\Delta) \rvert \delta_{sr}\\
& \quad -2\sum_{r=1}^R \min \{ \sum_{s\in \theta_1}(x_{is}-w_s(\Delta))\delta_{sr},\\
& \qquad \qquad \qquad \text{ } \sum_{s\in \theta_2}(w_s(\Delta)-x_{is})\delta_{sr} \},
\end{split}
\end{equation}
where
\begin{equation*}
\begin{split}
\min \{ \sum_{s\in \theta_1}( & x_{is}-w_s(\Delta))\delta_{sr}, \sum_{s\in \theta_2}(w_s(\Delta)-x_{is})\delta_{sr} \}\\
& \ge \frac{1}{2} \min_s \delta_{sr} \lVert \bm{x}^{(i)}-\bm{p}^{*(i)} \rVert_1.
\end{split}
\end{equation*}
Substituting the inequality above into~\eqref{eq_soc_sel_connected_estimate_row_diff}, we obtain
\begin{equation*}
  \lVert (\bm{x}^{(i)}-\bm{p}^{*(i)})\Delta \rVert_1 \le
  \Big(1-\sum_{r=1}^R \min_s \delta_{sr}\Big) \lVert \bm{x}^{(i)}-\bm{p}^{*(i)} \rVert_1. 
\end{equation*}
Since $\sum_{r=1}^R \delta_{sr}=1$ for any $s$, $\sum_{r=1}^R \min_s \delta_{sr}$ is no larger than $1$. In addition, since at least one column of $\Delta$ is strictly positive, $\sum_{r=1}^R \min_s \delta_{sr}>0$. Therefore, 
\begin{equation*}
0\le \zeta(\Delta)=1-\sum_{r=1}^R \min_s \delta_{sr}<1,\text{ and }
\end{equation*}
\begin{equation*}
\lVert f(X)^{(i)}-\bm{p}^{*(i)} \rVert_1 \le \big( \alpha_i+(1-\alpha_i)\zeta(\Delta) \big) \lvert X-P^* \rVert_{\infty}.
\end{equation*}
This leads to 
\begin{equation*}
\lVert f(X)-f(P^*) \rVert_{\infty} \le \epsilon(\Delta)\lVert X-P^* \rVert_{\infty},
\end{equation*}
for any $X\in S_{nR}(\vectorones[n])$ and $0<\epsilon(\Delta)<1$. This concludes the proof for Case 1.

$\text{ }$

(2) Case 2:

For the transient subset $\Lambda$, define $P^{\Lambda}(t)=\big( p^{\Lambda}_{ir}(t) \big)_{n\times k_0}$, with $p^{\Lambda}_{ir}(t)=p_{i,r+k_1}(t)$, for any $i\in V$ and $r\in \{ 1,2,\dots,k_0 \}$. Then,
\begin{equation*}
P^{\Lambda}(t+1)=\diag(\bm{\alpha})\tilde{A}P^{\Lambda}(t)+(I-\diag(\bm{\alpha}))P^{\Lambda}(t)\Delta_0.
\end{equation*}
According to Assumption (iii) of Theorem~\ref{thm_asym_behav_social_self_NCPM}, 
\begin{equation*}
c=\max_{r\in \{ 1,2,\dots,k_0 \}} \sum_{s=1}^{k_0}\delta^{\Lambda}_{rs}<1,\quad \text{and}\quad\Delta_0 \vectorones[k_0]\le c\vectorones[k_0].
\end{equation*}
Therefore,
\begin{equation*}
\begin{split}
P^{\Lambda}(t+&1)\vectorones[k_0]\\
&\preceq \Big( \diag(\bm{\alpha})\tilde{A}+c\big( I-\diag(\bm{\alpha}) \big) \Big)P^{\Lambda}(t)\vectorones[k_0].
\end{split}
\end{equation*}
Since $\rho\Big( \diag(\bm{\alpha})\tilde{A}+c\big( I-\diag(\bm{\alpha}) \big) \Big)<1$, for any $P^{\Lambda}(0)\in \tilde{S}_{nk_0}(\vectorones[n])$, $P^{\Lambda}(t)\rightarrow \vectorzeros[n\times k_0]$ exponentially fast.

Define $P^{\Theta_1}(t)=(p_{ir}(t))_{n\times k_1}$. then we have
\begin{equation*}
\begin{split}
P^{\Theta_1}&(t+1)\\
& = \diag(\bm{\alpha})\tilde{A}P^{\Theta_1}(t)+\big( I-\diag(\bm{\alpha}) \big)P^{\Theta_1}(t)\Delta_1\\
& \quad + \big( I-\diag(\bm{\alpha}) \big)P^{\Lambda}(t)B.
\end{split}
\end{equation*}
Since $P^{\Lambda}(t)$ converges to $\vectorzeros[n\times k_0]$ exponentially fast, we have: 1) there exists $C>0$ and $0<\xi <1$ such that
\begin{equation*}
\lVert \big( I-\diag(\bm{\alpha})P^{\Lambda}(t)B \big) \rVert_{\infty} \le C\xi^t;
\end{equation*}
2) $\lVert P^{\Theta_1}(t)\vectorones[k_1]-\vectorones[k_1] \rVert_{\infty}\rightarrow 0$ exponentially fast, which implies $d\big(P^{\Theta_1}(t),S_{nk_1}(\vectorones[n]) \big)\rightarrow 0$ exponentially fast.

For any $X\in \tilde{S}_{nk_1}(\vectorones[n])$, define map $\tilde{f}$ by 
\begin{equation*}
\tilde{f}(X) = \diag(\bm{\alpha})\tilde{A}X+\big( I-\diag(\bm{\alpha}) \big)X\Delta_1.
\end{equation*}
According to the proof for Case 1, there exists a unique fixed point $\tilde{P}^*$ for the map $\tilde{f}$ in $S_{nk_1}(\vectorones[n])$,  given by $\tilde{p}_{ir}^*=w_r(\Delta_1)$. Moreover, there exists $0<\epsilon<1$ such that, for any $X\in S_{nk_1}(\vectorones[n])$,
\begin{equation*}
\lVert \tilde{f}(X)-\tilde{P}^* \rVert_{\infty} \le \epsilon \lVert X-\tilde{P}^* \rVert_{\infty}.
\end{equation*}
Since the function $\frac{\lVert \tilde{f}(X)-\tilde{P}^* \rVert_{\infty}}{\lVert X-\tilde{P}^* \rVert_{\infty}}$ is continuous in $\tilde{S}_{nk_1}(\vectorones[n])\setminus \tilde{P}^*$ and $d\big(P^{\Theta_1}(t),S_{nk_1}(\vectorones[n]) \big)\rightarrow 0$, there exists $T>0$ and $0<\eta<1$ such that, for any $t>T$,
\begin{equation*}
\lVert \tilde{f}\big( P^{\Theta_1}(t)\big)-\tilde{P}^*  \rVert_{\infty} \le \eta \lVert P^{\Theta_1}(t)-\tilde{P}^* \rVert_{\infty}.
\end{equation*}

For $t\in \mathbb{N}$ much larger than $T$,
\begin{equation*}
\begin{split}
\lVert P^{\Theta_1}&(t)-\tilde{P}^* \rVert_{\infty}\\
&\le \eta \lVert P^{\Theta_1}(t-1)-\tilde{P}^* \rVert_{\infty} + C\xi^t\\
& = \eta^{t-T} \lVert P^{\Theta_1}(T)-\tilde{P}^* \rVert_{\infty} + C\frac{\xi^{t}-\eta^{t-T}\xi^{T}}{\eta/\xi}.
\end{split}
\end{equation*}
Since $0<\eta<1$, $0<\xi<1$ as $t\rightarrow \infty$, $\lVert P^{\Theta_1}(t)-\tilde{P}^* \rVert_{\infty}\rightarrow 0$. This concludes the proof for Case 2.

$\text{ }$

(3) Case 3:

For any $l\in \{1,2,\dots,m\}$, 
\begin{equation*}
\begin{split}
P^{\Theta_l}(&t+1)\\
& =\hat{f}\big( P^{\Theta_l}(t) \big)\\
& = \big( I-\diag(\bm{\alpha}) \big)P^{\Theta_l}(t)\Delta_l+\diag(\bm{\alpha})\tilde{A}P^{\Theta_l}(t),
\end{split}
\end{equation*}
where $\Delta_l \vectorones[k_l]=\vectorones[k_l]$ since $\Theta_l$ is absorbing and strongly connected. Therefore,
\begin{equation*}
P^{\Theta_l}(t+1)\vectorones[k_l]=MP^{\Theta_l}(t)\vectorones[k_l],
\end{equation*}
where $M=I-\diag(\bm{\alpha})+\diag(\bm{\alpha})\tilde{A}$ is row-stochastic and aperiodic. Moreover, the graph $G(M)$ has a globally reachable node and therefore the matrix $M$ has a normalized dominant left eigenvector $\bm{w}(M)$. Applying the Perron-Frobenius theorem,
\begin{equation*}
\lim_{t\rightarrow \infty} P^{\Theta_l}(t)\vectorones[k_l] = \big( \bm{w}^{\top}(M)P^{\Theta_l}(0)\vectorones[k_l] \big)\vectorones[n].
\end{equation*}

Let $c_l=\bm{w}^{\top}(M)P^{\Theta_l}(0)\vectorones[k_l]$. Following the same line of argument in the proof for Case 2, $\hat{f}$ maps $S_{nk_l}(c_l \vectorones[n])$ to $S_{nk_l}(c_l \vectorones[n])$, and maps $\tilde{S}_{nk_l}(c_l \vectorones[n])$ to $\tilde{S}_{nk_l}(c_l \vectorones[n])$. Moreover, $\hat{P}^*\in \mathbb{R}^{n\times k_l}$ with $\hat{\bm{p}}^{*(i)}=c_l \bm{w}^{\top}(\Delta_l)$, for any $i\in V$, is the unique fixed point of the map $\hat{f}$ in $S_{nk_l}(c_l\vectorones[n])$. In addition, there exists $0<\epsilon<1$ such that for any $X\in S_{nk_l}(c_l\vectorones[n])$,
\begin{equation*}
\lVert \hat{f}(X)-\hat{P}^* \rVert_{\infty} \le \epsilon \lVert X-\hat{P}^* \rVert_{\infty}.
\end{equation*}

The function $\hat{h}(X)=\frac{\lVert \hat{f}(X)-\hat{P}^* \rVert_{\infty}}{\lVert X-\hat{P}^* \rVert_{\infty}}$ is continuous in $\tilde{S}_{nk_l}(c_l \vectorones[n])\setminus \hat{P}^*$. Since for any $P^{\Theta_l}(0)\in \tilde{S}_{nk_l}(c_l \vectorones[n])\setminus \hat{P}^*$, we have $P^{\Theta_l}(t)\vectorones[k_l]\rightarrow c_l\vectorones[k_l]$, which implies $d\big(P^{\Theta_l}(t),S_{nk_l}(c_l \vectorones[k_l])\big)\rightarrow 0$ as $t\rightarrow 0$. Therefore, there exists $0<\eta<1$ and $T>0$ such that for any $t>T$,
\begin{equation*}
\lVert \hat{f}\big(P^{\Theta_l}(t)\big)-\hat{P}^* \rVert_{\infty} \le \eta \lVert P^{\Theta_l}(t)-\hat{P}^* \rVert_{\infty}.
\end{equation*}
Therefore, $P^{\Theta_l}(t)\rightarrow \hat{P}^*$ as $t\rightarrow \infty$. 

$\text{ }$

(4) Case 4:
\begin{equation*}
\begin{split}
P^{\Theta_l}(t+1) & = \diag(\bm{\alpha})\tilde{A}P^{\Theta_l}(t)\\
& \quad + \big( I-\diag(\bm{\alpha}) \big) P^{\Theta_l}(t)\Delta_l\\
& \quad + \big( I-\diag(\bm{\alpha}) \big)P^{\Lambda}(t)B_l.
\end{split}
\end{equation*}
for any $l\in \{ 1,2,\dots,m \}$. Therefore,
\begin{equation}\label{eq_social_self_combined_struc_consensus_sys}
P^{\Theta_l}(t+1)\vectorones[k_l]=MP^{\Theta_l}(t)\vectorones[k_l] + \bm{\phi}(t),
\end{equation}
where $M=\diag(\bm{\alpha})\tilde{A}+I-\diag(\bm{\alpha})$ is row-stochastic and primitive. The vector $\bm{\phi}(t)$ is a vanishing perturbation according to the proof for Case 2.

Let $\bm{x}(t)=P^{\Theta_l}(t)\vectorones[k_l]$ and $\bm{y}(t)=Q\bm{x}(t)$ with $Q$ defined in Lemma~\ref{lem_pairwise_diff_transform}. Let $\subscr{\bm{y}}{err}(t)=(y_1(t),y_2(t),\dots,y_{n-1}(t))^{\top}$, where $y_i(t)=x_{i+1}(t)-x_i(t)$ for any $i=1,2,\dots,n-1$. Then we have
\begin{equation*}
\bm{y}(t+1)=QMQ^{-1}\bm{y}(t)+Q\bm{\phi}(t).
\end{equation*}
Let $\bm{\varphi}(t)=\big( \varphi_1(t),\varphi_2(t),\dots,\varphi_{n-1}(t) \big)^{\top}$ with $\varphi_i(t)=\sum_j Q_{ij}\phi_j(t)$. $\bm{\varphi}(t)$ is also a vanishing perturbation and 
\begin{equation*}
\subscr{\bm{y}}{err}(t+1)=\subscr{M}{red}\subscr{\bm{y}}{err}(t) +\bm{\varphi}(t).
\end{equation*}
The equation above is an exponentially stable linear system with a vanishing perturbation. Since $\rho(\subscr{M}{red})<1$, $\subscr{\bm{y}}{err}\rightarrow \vectorzeros[n-1]$ as $t\rightarrow \infty$, which implies that $P^{\Theta_l}(t)\vectorones[k_l]\rightarrow \gamma \vectorones[n]$ and $\gamma_l$ depends on $M$, $B_l$, $P^{\Theta_l}(0)$ and $P^{\Lambda}(0)$. Moreover, $\sum_l \gamma_l=1$ since $P(t)\vectorones[R]=\vectorones[n]$. Following the same argument in the proof for Case 3, we obtain
\begin{equation*}
\lim_{t\rightarrow \infty}\bm{p}^{\Theta_l(i)}(t)=\gamma_l \bm{w}^{\top}(\Delta_l).
\end{equation*}
\endproof

\subsubsection{Interpretations of Theorem~\ref{thm_asym_behav_social_self_NCPM}}
Analysis on Case 1 to 4 leads to the following conclusions: 1) The probability of adopting any product in the transient subgraph eventually decays to zero; 2) For the  product-conversion graph with only on absorbing SCC $G(\Delta_1)$, the system's asymptotic product-adoption probability distribution only depends on $\bm{w}(\Delta_1)$. In this case, the self conversion dominates the competitive propagation process; 3) With multiple absorbing SCCs in the product-conversion graph, the initial condition $P(t)$ and the structure of the social network $G(\tilde{A})$ together determine the fraction each absorbing SCC eventually takes in the total probability 1; 4) In each absorbing SCC $G(\Delta_l)$, the asymptotic adoption probability for each product is proportional to its corresponding entry of $\bm{\Delta_l}$.

\subsection{Further simulation work}
\emph{a) Accuracy of the social-self NCPM solution:} Simulation results have been presented to compare the solution to the social-self NCPM with the solution to the original Markov chain model defined by Assumption~\ref{asmp_social_self_conv}. Let the matrix $\Delta$ take the following form
\begin{equation}\label{eq_simulation_Delta_combined}
\Delta=
\begin{bmatrix}
\Delta_1 & \vectorzeros & \vectorzeros\\
\vectorzeros & \Delta_2 & \vectorzeros\\
B_1 & B_2 & \Delta_0
\end{bmatrix}
=
\begin{bmatrix}
0.6 & 0.4 & 0 & 0\\
0.3 & 0.7 & 0 & 0\\
0 & 0 & 1 & 0\\
0 & 0.8 & 0 & 0.2
\end{bmatrix}.
\end{equation}

The Markov-chin solution is computed by the Monte Carlo method. In each sampling, $A$, $\bm{\alpha}$ and $P(0)$ are randomly generated and set identical for the Markov chain and the NCPM. The probability $p_{12}(t)$ is plotted for both models on different types of social networks, such as the complete graph, the Erd\H os-R\'enyi graph, the power-law graph and the star graph. As shown in Figure~\ref{fig_accuracy_NCPM} and Figure~\ref{fig_accuracy_NCPM_power_star}, the solution to the social-self NCPM nearly overlaps with the Markov-chain solution in every plot, due to the i.i.d self conversion process. 

\begin{figure}
\begin{center}
\subfigure[$n=5$, $p=1$]{\label{fig_difference_N=1_p=1} \includegraphics[width=.34\linewidth]{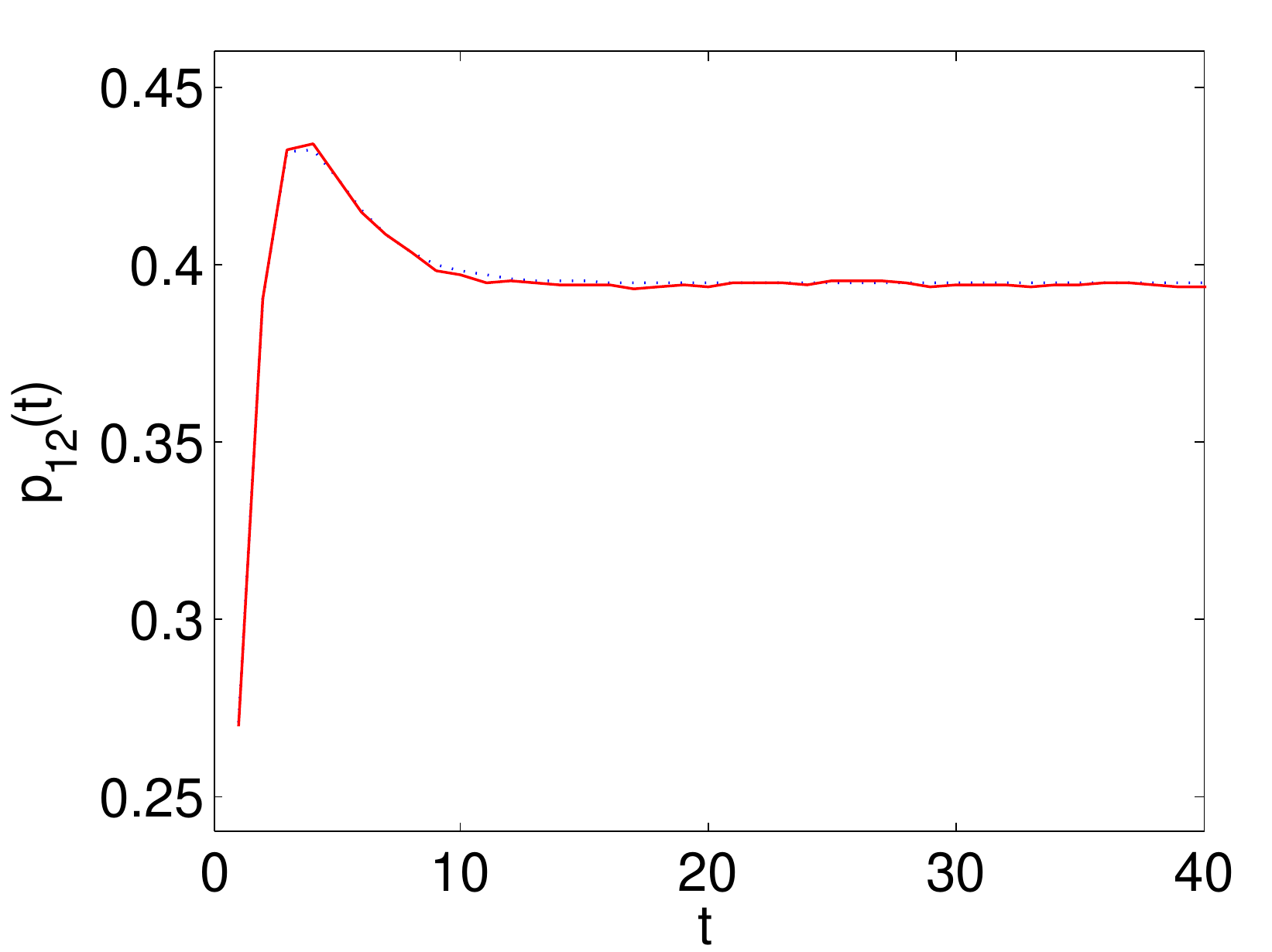}}
\subfigure[$n=10$, $p=0.5$]{\label{fig_difference_N=10_=05} \includegraphics[width=.35\linewidth]{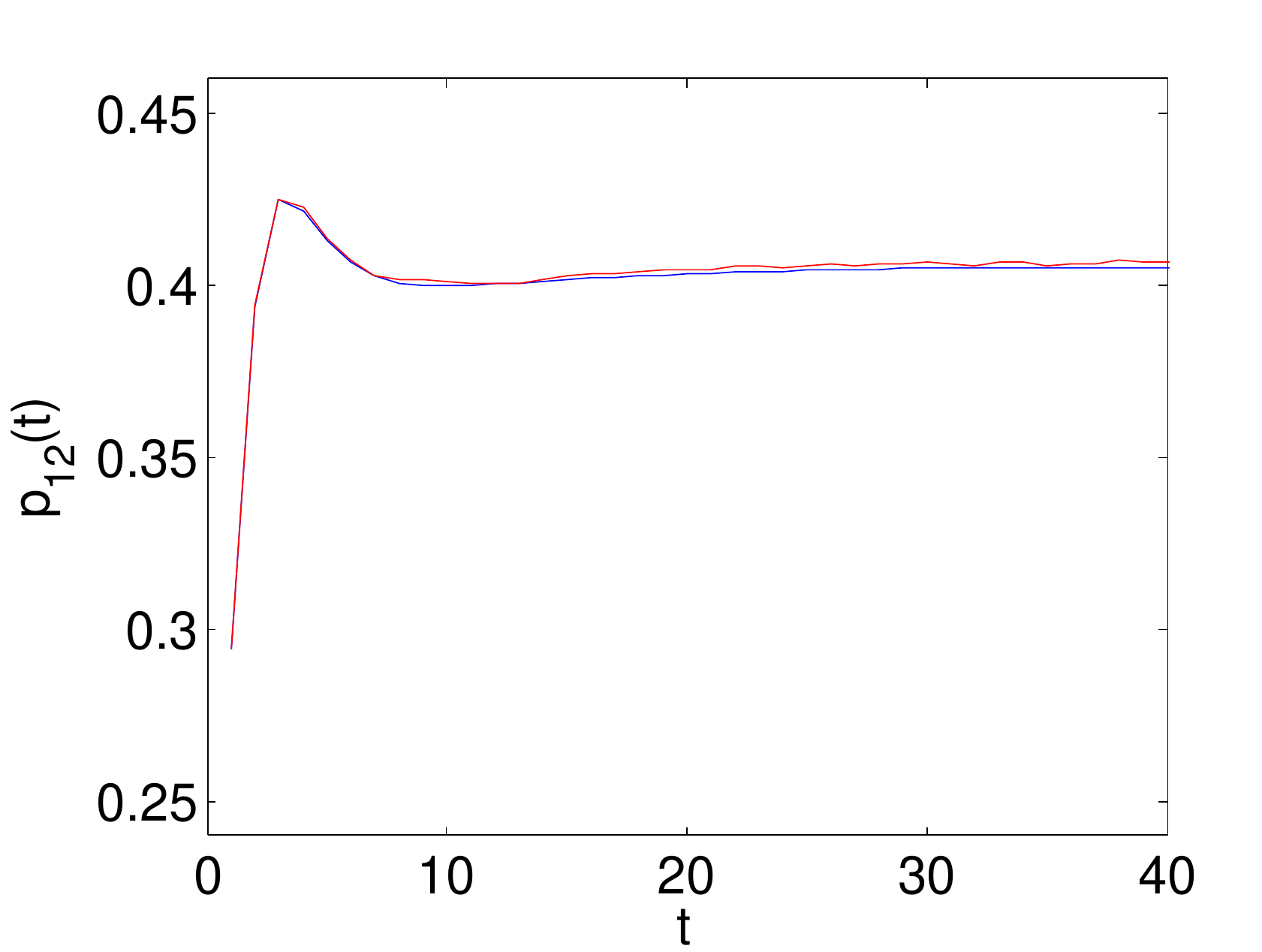}}
\subfigure[$n=50$, $p=0.1$]{\label{fig_difference_N=50_p=01} \includegraphics[width=.34\linewidth]{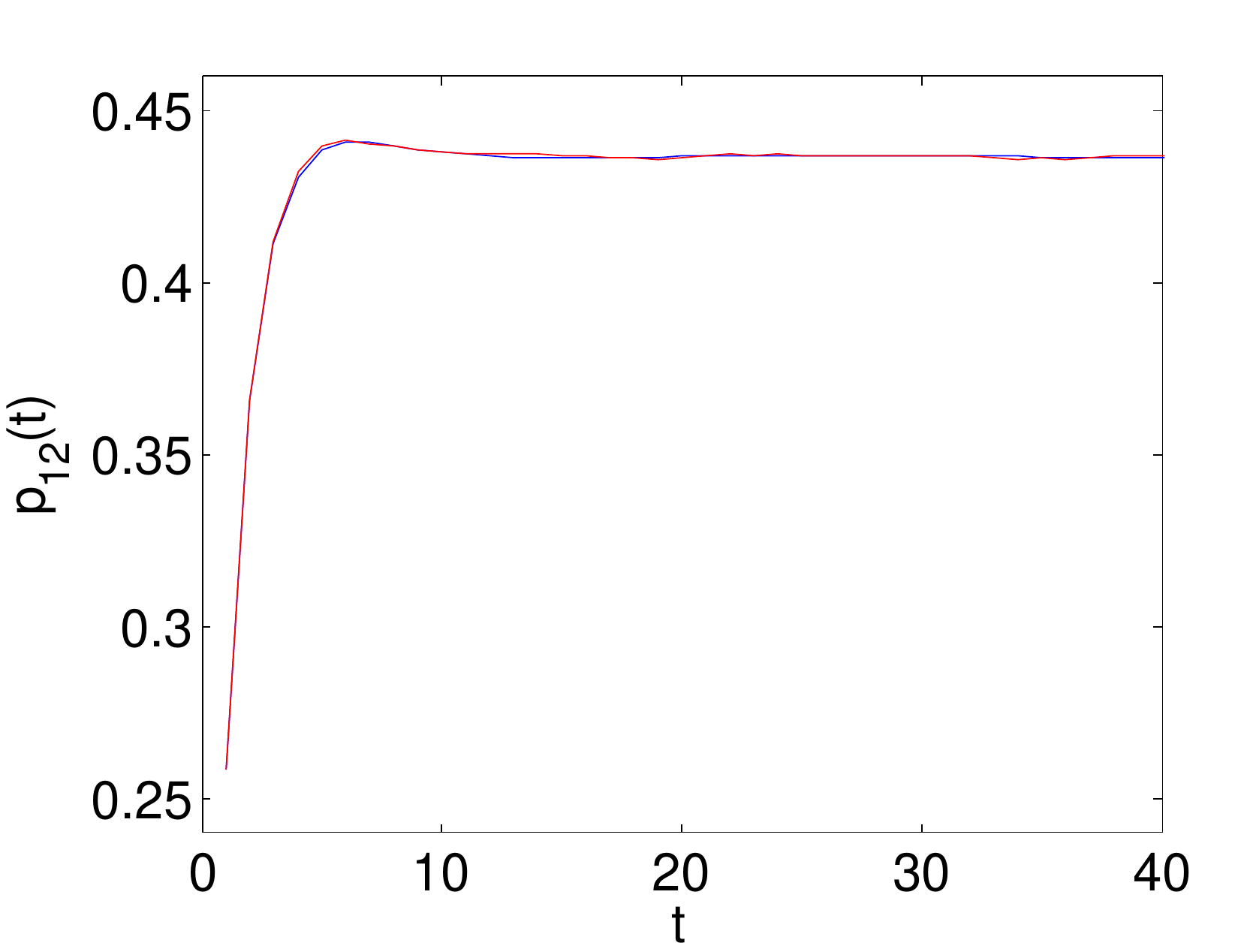}}
\subfigure[$n=50$, $p=1$]{\label{fig_difference_N=50_p=1} \includegraphics[width=.35\linewidth]{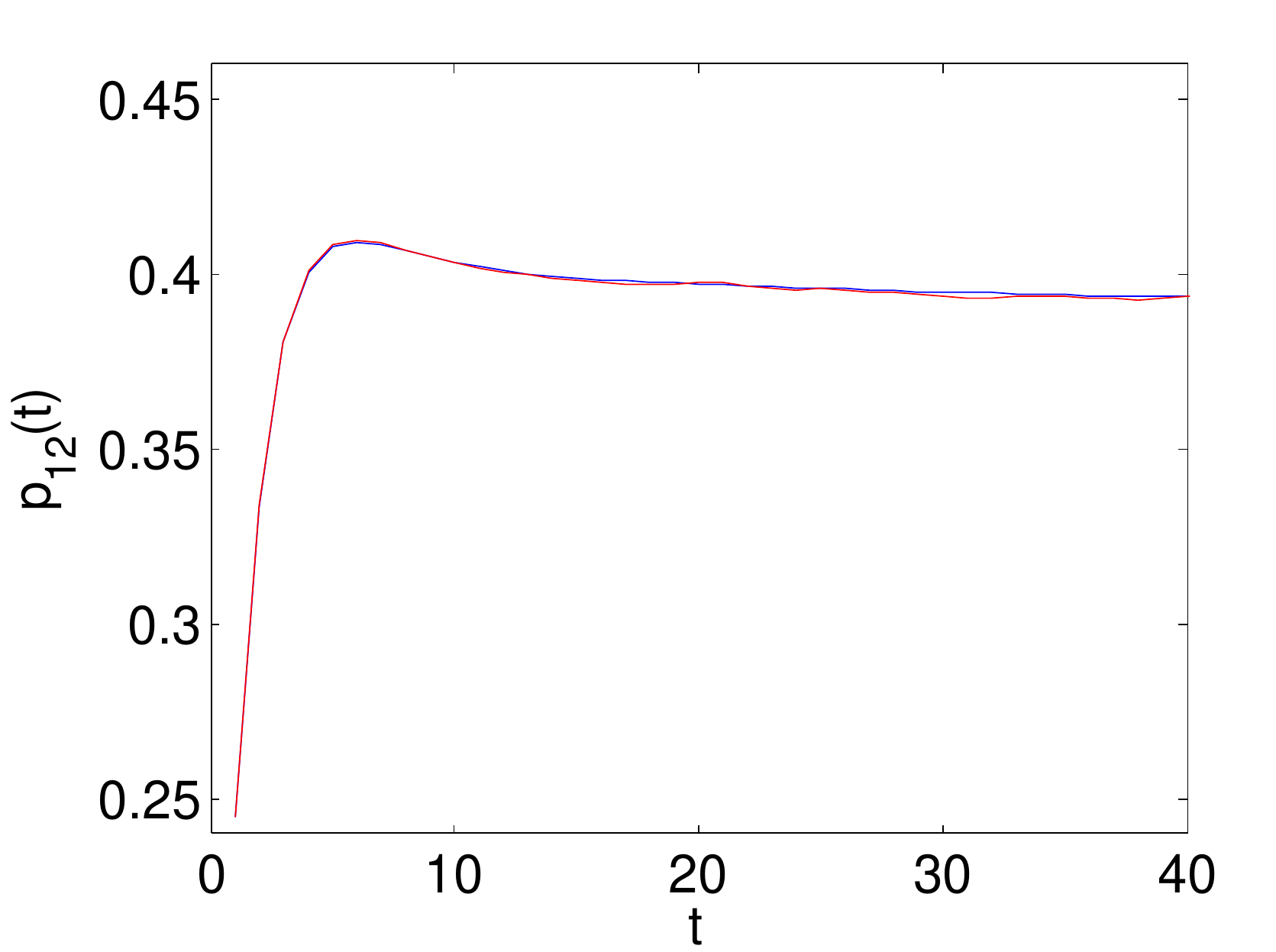}}
\caption{Difference between the solutions to the social-self NCPM (blue dash) and the original Markov-chain model (red) in complete graphs or Erd\H os-R\'enyi graphs.}
\label{fig_accuracy_NCPM}
\end{center}
\end{figure}

\begin{figure}
\begin{center}
\subfigure[power-law graph]{\label{fig_difference_power_law} \includegraphics[width=.39\linewidth]{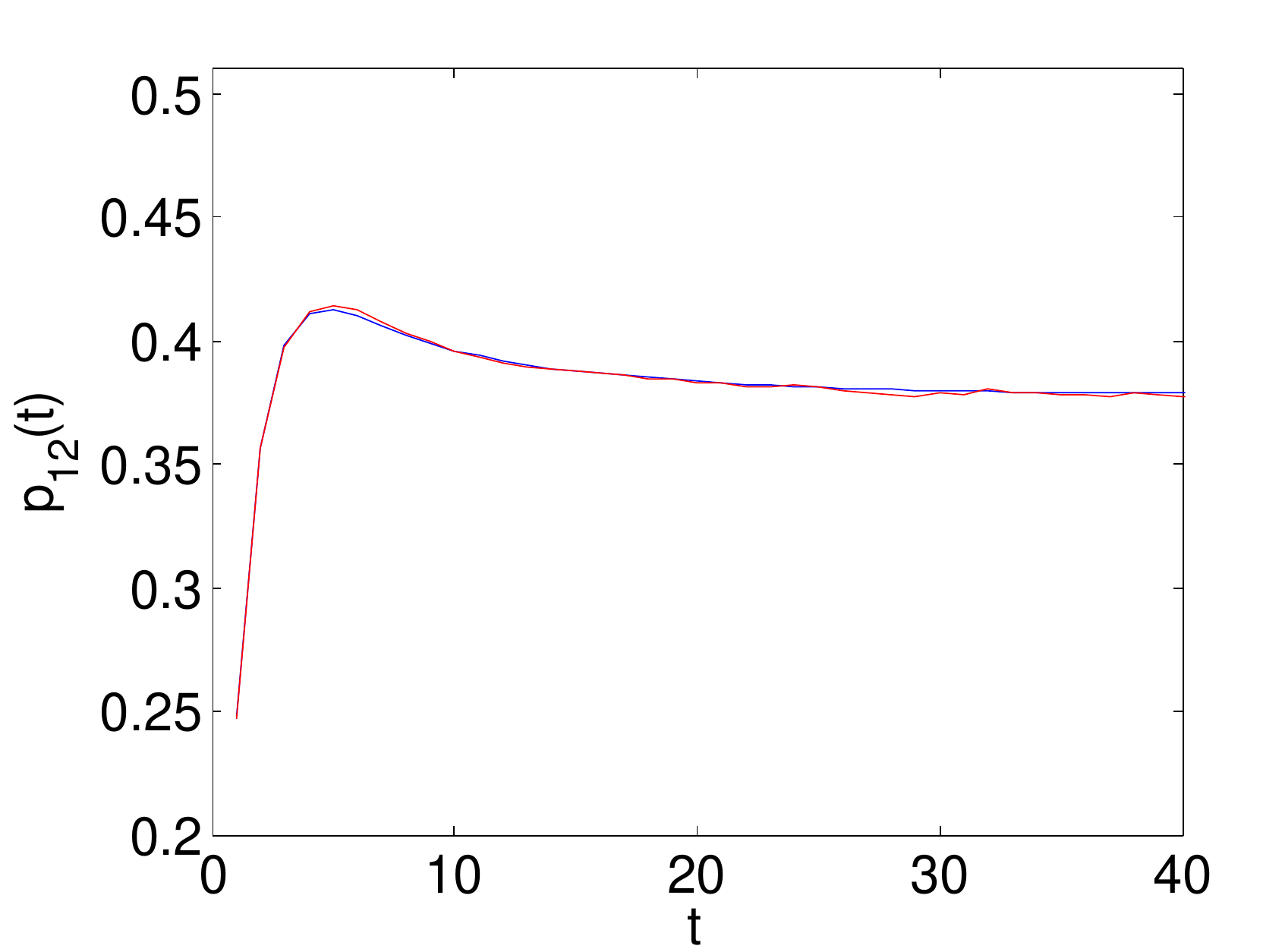}}
\subfigure[star graph]{\label{fig_difference_star_graph} \includegraphics[width=.40\linewidth]{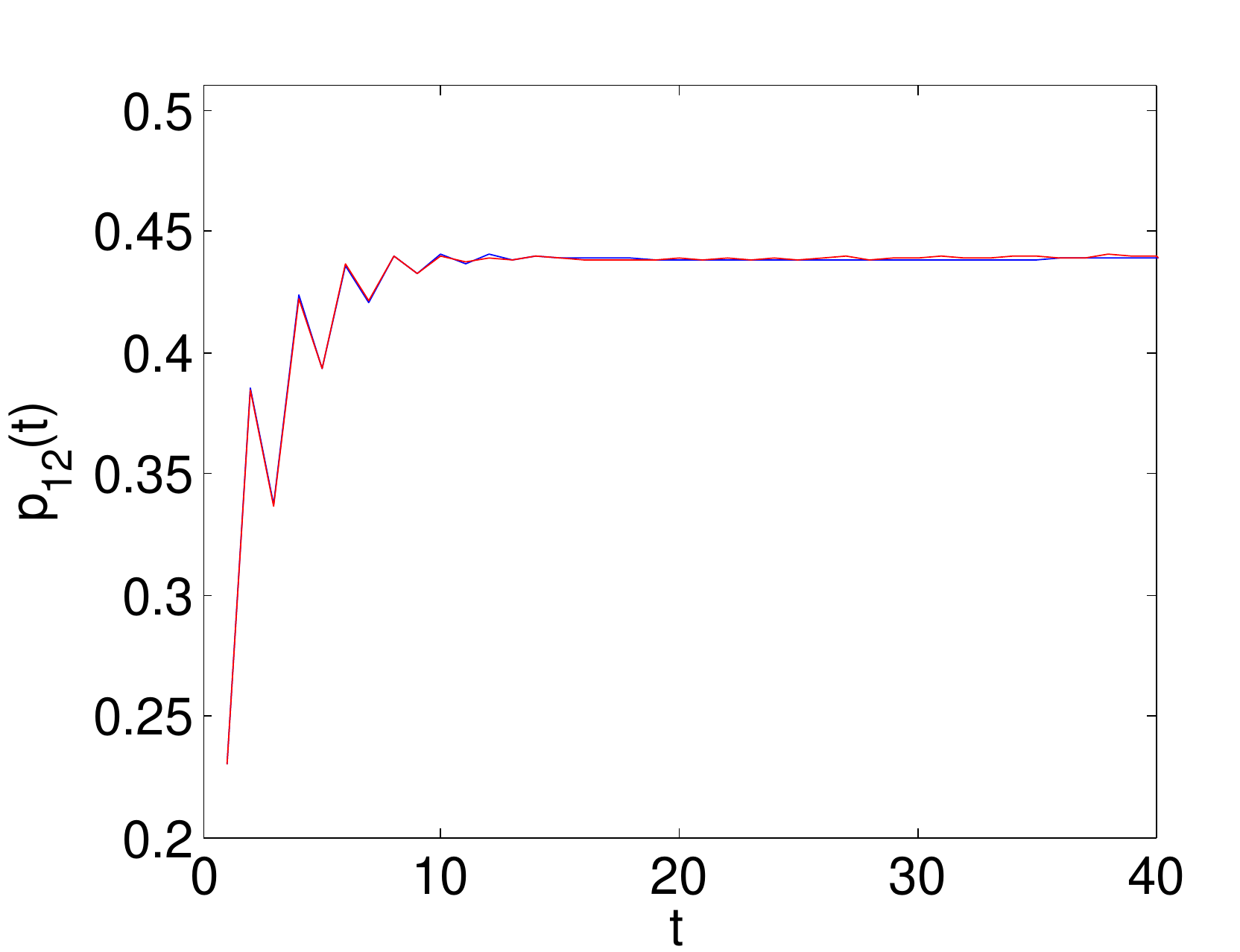}}
\caption{Difference between the solutions to the social-self NCPM (blue dash) and the original Markov-chain model (red) in the power-law graph and the star graph. The power-law graph has $100$ nodes, with the degree distribution $p(k)=1010k^{-2.87}$. The star graph consists of 10 nodes with node 1 as the center.}
\label{fig_accuracy_NCPM_power_star}
\end{center}
\end{figure}

\emph{b) Asymptotic behavior of the Markov chain model}
In Figure~\ref{fig_asym_Markov_two_structures} and Figure~\ref{fig_asym_Markov_combined}, all the trajectories $p_{ir}(t)$, for the Markov-chain model on an Erd\H os-R\'enyi graph with $n=5$, $p=0.4$ and randomly generated $\bm{\alpha}$, are computed by the Monte Carlo method. Figure~\ref{fig_asym_behav_Markov_transient_absorbing} corresponds to the structure of the product-conversion graph defined by Case 4 in Definition~\ref{def_foure_cases_product_conv_graph} with
\begin{equation*}
\Delta_1=
\begin{bmatrix}
0.6 & 0.4\\
0.3 & 0.7
\end{bmatrix},
\Delta_2=1,
\Delta_0=0.2,
B=[0\text{ }0.8\text{ }0].
\end{equation*} 
The transient subgraph is only connected to SCC $\Theta_1$ and the intial adoption probability for $H_3$ is 0. Figure~\ref{fig_asym_behav_Markov_isolated} corresponds to the structure of the product-conversion graph defined by Case 3 in Defintion~\ref{def_foure_cases_product_conv_graph} with
\begin{equation*}
\Delta=
\begin{bmatrix}
\Delta_1 & \vectorzeros\\
\vectorzeros & \Delta_2
\end{bmatrix},
\Delta_1=
\begin{bmatrix}
0.6 & 0.4\\
0.3 & 0.7
\end{bmatrix},
\Delta_2=
\begin{bmatrix}
0.5 & 0.5 \\
0.1 & 0.9
\end{bmatrix}.
\end{equation*}
The simulation results shows that, in these two cases the Markov-chain solutions converge exactly to the values indicated by the social-self NCPM, regardless of the initial condition. The matrix $\Delta$ used in Figure~\ref{fig_asym_Markov_combined} is given by equation~\eqref{eq_simulation_Delta_combined}. As illustrated by Figure~\ref{fig_asym_Markov_combined}, the asymptotic adoption probabilities vary with the initial condition in the Markov-chain model, in consistence with the results of Theorem~\ref{thm_asym_behav_social_self_NCPM}.

\begin{figure}
\begin{center}
\subfigure[multi-SCC + transient subgraph]{\label{fig_asym_behav_Markov_transient_absorbing} \includegraphics[width=.40\linewidth]{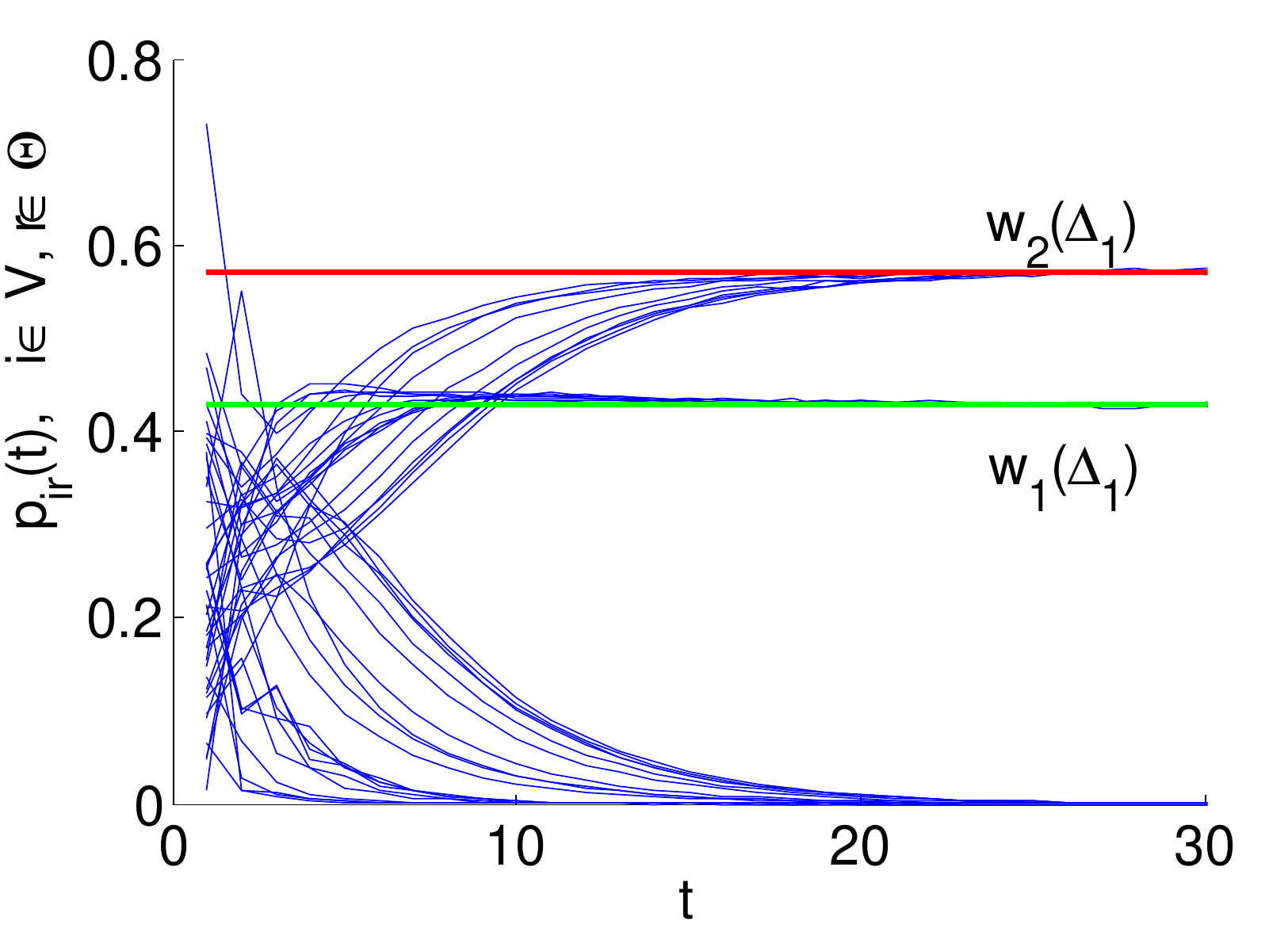}}
\subfigure[multi-SCC]{\label{fig_asym_behav_Markov_isolated} \includegraphics[width=.40\linewidth]{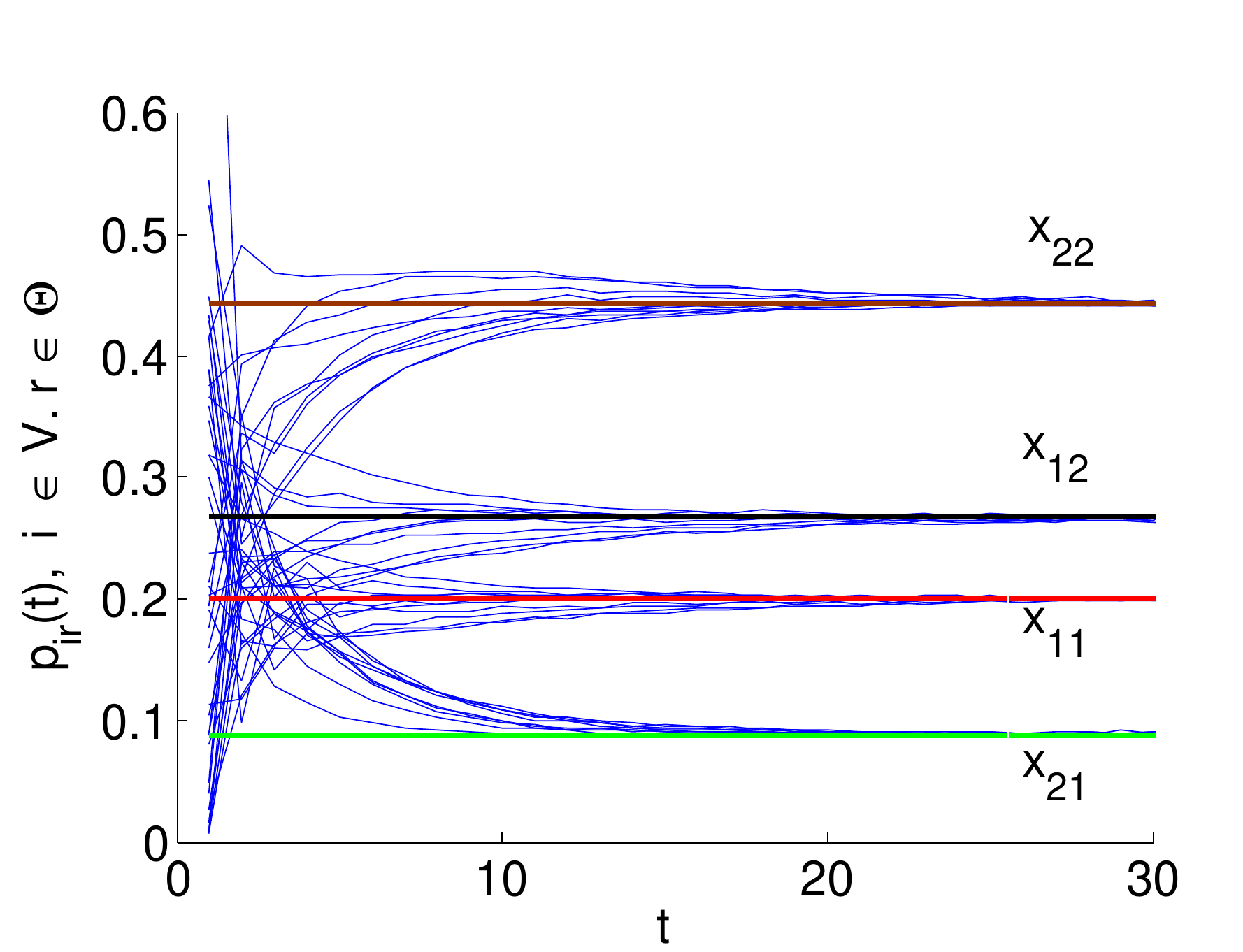}}
\caption{Asymptotic behavior of the Markov chain model with the production-conversion graphs defined by Case 3 or Case 4 in Defintion~\ref{def_foure_cases_product_conv_graph}. Every curve in this plot is a trajectory $p_{ir}(t)$ for $i \in V$ and $r\in \Theta$. Here $x_{lr}=\bm{w}^{\top}(M)P^{\Theta_l}(0)\vectorones[k_l]w_r(\Delta_l)$.}
\label{fig_asym_Markov_two_structures}
\end{center}
\end{figure}

\begin{figure}
\begin{center}
\subfigure[initial condition 1]{\label{fig_asym_behav_Markov_combined_1} \includegraphics[width=.40\linewidth]{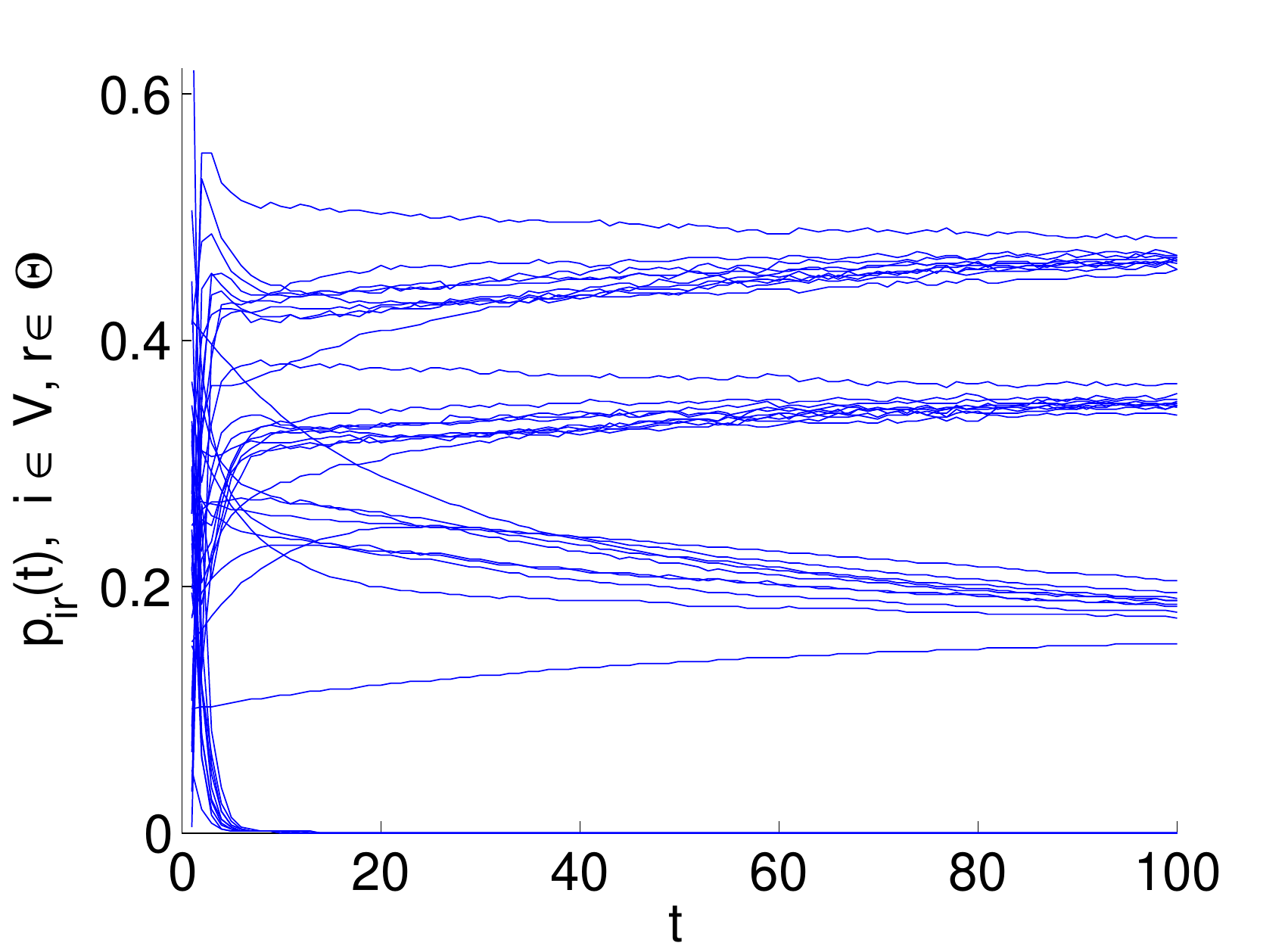}}
\subfigure[initial condition 2]{\label{fig_asym_behav_Markov_combined_2} \includegraphics[width=.40\linewidth]{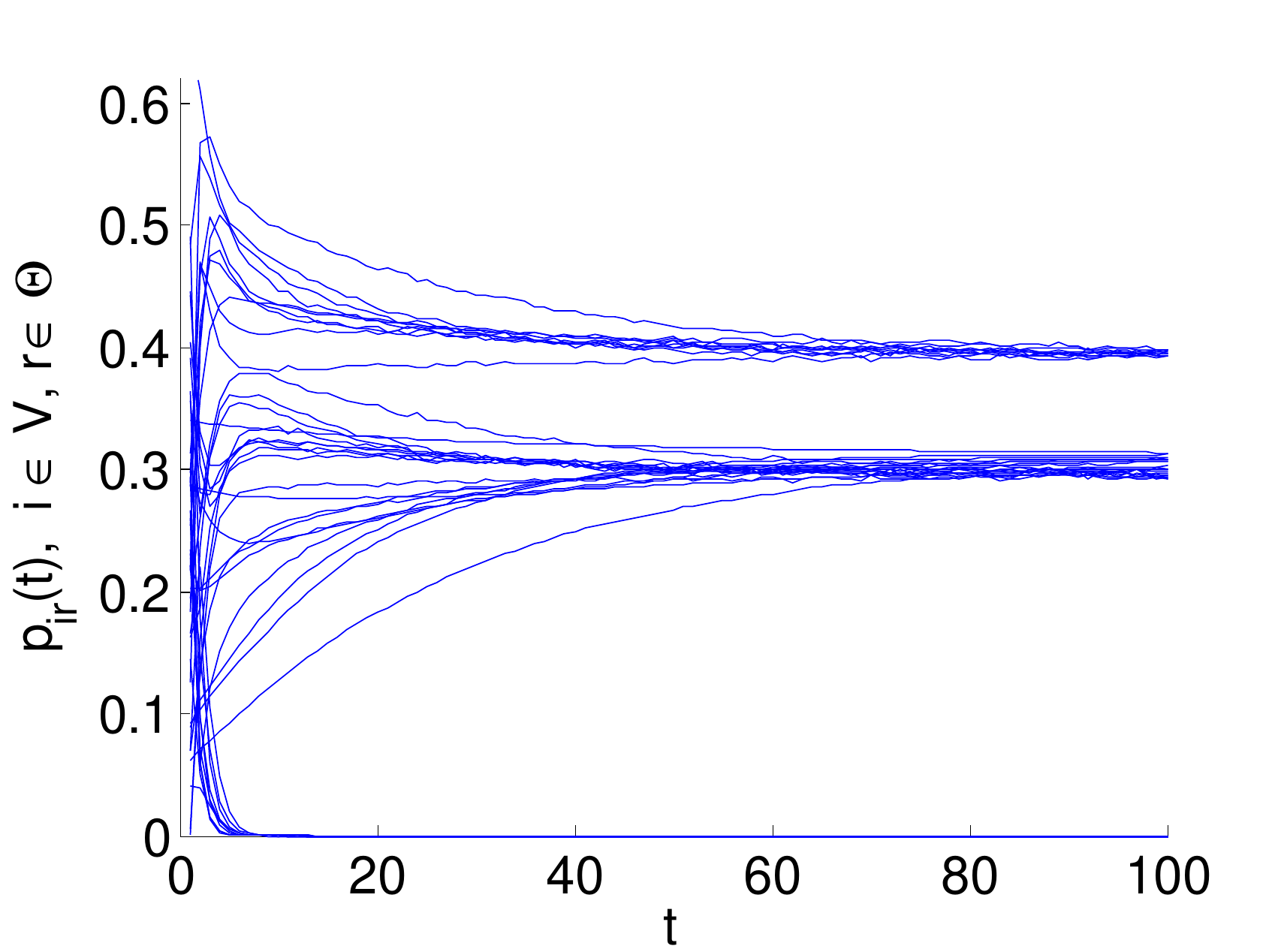}}
\caption{Asymptotic behavior of the Markov chain model with the production-conversion graph consisting of multiple SCCs and a transient subgraph. Every curve in this plot is a trajectory $p_{ir}(t)$ for $i \in V$ and $r\in \Theta$.}
\label{fig_asym_Markov_combined}
\end{center}
\end{figure}

\section{Analysis on the Self-social Network Competitive Propagation Model}
In this section we discuss the network competitive propagation model based on Assumption~\ref{asmp_self_social_conv}, i.e, the case in which self conversion occurs before social conversion at each time step. Similar to what we have done in the last section, firstly we propose an approximation model, referred to as the \emph{self-social network competitive propagation model} (self-social NCPM), and then analyze the dynamical properties of this approximation model.

The following theorem based on Approximation~\ref{indep_approx} gives the matrix form of the self-social NCPM.
\begin{theorem}[Self-social NCPM]\label{thm_self_social_mf}
Consider the competitive propagation model based on Assumption~\ref{asmp_self_social_conv}, with the social network and the product-conversion graph represented by their adjacency matrices $\tilde{A}$ and $\Delta$ respectively. The probability $p_{ir}(t)$ satisfies
\begin{equation*}
\begin{split}
p_{ir}(t+1)-p_{ir}(t) & = \sum_{s\neq r}\big( \delta_{sr}p_{is}(t)-\delta_{rs}p_{ir}(t) \big)\\
& \quad + \sum_{s\neq r}\delta_{ss}\alpha_i \sum_{j=1}^n \tilde{a}_{ij}p_{is}(t)P_{ji}^{rs}(t)\\
& \quad - \sum_{s\neq r} \delta_{rr}\alpha_i \sum_{j=1}^n \tilde{a}_{ij}p_{ir}(t)P_{ji}^{sr}(t),
\end{split}
\end{equation*}
for any $i\in V$ and $r\in \Theta$. Applying the independence assumption, the matrix form of the self-social NCPM is 
\begin{equation}\label{eq_mf_self_social_matrix}
\begin{split}
P(&t+1)\\
& = P(t)\Delta + \diag(\bm{\alpha})\diag\big( P(t)\bm{\delta} \big)\tilde{A}P(t)\\
& \quad -\diag(\bm{\alpha})P(t)\diag(\bm{\delta}),
\end{split}
\end{equation}
with $P(t)=(p_{ir}(t))_{n\times R}$ and $\bm{\delta} = (\delta_{11},\delta_{22},\dots,\delta_{RR})^{\top}$.
\end{theorem}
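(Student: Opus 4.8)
The plan is to mirror the derivation of Theorem~\ref{thm_social_self_mf_model}, but with the one-step transition probabilities dictated by the reversed chronological order in Assumption~\ref{asmp_self_social_conv}. First I would compute the conditional transition probability $\Gamma_i^r(t;s,-i)$. Under the self-social order a node in state $H_s$ performs self conversion first, so for $r\neq s$ it reaches $H_r$ either directly by self conversion (probability $\delta_{sr}$, after which no social step occurs) or by staying in $H_s$ through self conversion (probability $\delta_{ss}$) and then following a neighbor currently in $H_r$ during the social step. This yields
\begin{equation*}
\Gamma_i^r(t;s,-i)=\delta_{sr}+\delta_{ss}\,\alpha_i\sum_{j=1}^n\tilde{a}_{ij}X_j^r(t),\qquad r\neq s,
\end{equation*}
which is the essential structural difference from Theorem~\ref{thm_social_self_mf_model}: the social-conversion term is now gated by the self-retention factor $\delta_{ss}$.

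Next I would write $p_{ir}(t+1)-p_{ir}(t)=\expt\big[\expt[X_i^r(t+1)-X_i^r(t)\,|\,\bm{D}_{-i}(t)]\big]$ and split the inner conditional expectation into a gain part $\sum_{s\neq r}\Gamma_i^r(t;s,-i)P_i^s(t;-i)$ and a loss part $\sum_{s\neq r}\Gamma_i^s(t;r,-i)P_i^r(t;-i)$, where the loss uses $\Gamma_i^s(t;r,-i)=\delta_{rs}+\delta_{rr}\alpha_i\sum_j\tilde{a}_{ij}X_j^s(t)$. Substituting and taking expectations, I would reuse the identity already established in the proof of Theorem~\ref{thm_social_self_mf_model}: $\expt[X_j^r(t)P_i^s(t;-i)]$ equals the joint probability $\prob[X_i^s(t)=1,X_j^r(t)=1]$, which in the present notation is written $p_{is}(t)P_{ji}^{rs}(t)$. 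The direct self-conversion terms contribute $\delta_{sr}p_{is}(t)$ and $\delta_{rs}p_{ir}(t)$, while the gated social terms contribute $\delta_{ss}\alpha_i\sum_j\tilde{a}_{ij}p_{is}(t)P_{ji}^{rs}(t)$ and $\delta_{rr}\alpha_i\sum_j\tilde{a}_{ij}p_{ir}(t)P_{ji}^{sr}(t)$. Collecting gain minus loss reproduces exactly the stated exact equation.

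Finally I would apply Approximation~\ref{indep_approx}, replacing each $P_{ji}^{rs}(t)$ by $p_{jr}(t)$, and then assemble the matrix form entrywise. The bracket $\sum_{s\neq r}(\delta_{sr}p_{is}-\delta_{rs}p_{ir})$ collapses, after restoring $p_{ir}(t)$ and using $\sum_s\delta_{rs}=1$, to the $(i,r)$ entry of $P(t)\Delta$. For the two social terms I would extend each inner sum over $s$ to all of $\Theta$ by adding and subtracting the $s=r$ term; using $\sum_s\delta_{ss}p_{is}=(P(t)\bm{\delta})_i$, the row-stochasticity $\sum_j\tilde{a}_{ij}=1$, and $\sum_{s\neq r}p_{js}=1-p_{jr}$, these reduce to $\alpha_i(P(t)\bm{\delta})_i(\tilde{A}P(t))_{ir}$ and $-\alpha_i\delta_{rr}p_{ir}$, i.e.\ the entries of $\diag(\bm{\alpha})\diag(P(t)\bm{\delta})\tilde{A}P(t)$ and $-\diag(\bm{\alpha})P(t)\diag(\bm{\delta})$.

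The step I expect to be the main obstacle is the bookkeeping in this last simplification. When extended to full sums over $\Theta$, both the second and third approximate terms individually generate a cross term proportional to $\delta_{rr}\alpha_i p_{ir}(t)\sum_j\tilde{a}_{ij}p_{jr}(t)$, and these must cancel exactly for the clean matrix form to emerge. Keeping the correct gating factor ($\delta_{ss}$ on the gain, $\delta_{rr}$ on the loss) and maintaining the $s\neq r$ restrictions consistently while converting to sums over all of $\Theta$ is where the derivation is most error-prone; the rest is routine linear-algebraic repackaging.
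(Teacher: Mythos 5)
Your proposal is correct: the transition probabilities $\Gamma_i^r(t;s,-i)=\delta_{sr}+\delta_{ss}\alpha_i\sum_j\tilde{a}_{ij}X_j^r(t)$ are the right ones under Assumption~\ref{asmp_self_social_conv}, the identification $\expt[X_j^r(t)P_i^s(t;-i)]=P_{ji}^{rs}(t)p_{is}(t)$ carries over verbatim from the proof of Theorem~\ref{thm_social_self_mf_model}, and the cross terms $\pm\alpha_i\delta_{rr}p_{ir}(t)\sum_j\tilde{a}_{ij}p_{jr}(t)$ do cancel exactly as you anticipate, yielding the stated matrix form. The paper omits the proof of this theorem, but your argument is precisely the self-social analogue of the paper's derivation of the social-self NCPM, which is evidently what the authors intended.
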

\smallskip
It is straightforward to check that, for any $P(t)\in S_{nR}(\vectorones[n])$, $P(t+1)$ is still in $S_{nR}(\vectorones[n])$. According to the Brower fixed point theorem, there exists at least one fixed point for the system~\eqref{eq_mf_self_social_matrix} in $S_{nR}(\vectorones[n])$. Since the nonlinearity of equation~\eqref{eq_mf_self_social_matrix} add much difficulty to the analysis of it, in the remaining part of this section we discuss the special case when $R=2$.

\subsection{Existence and uniqueness of the fixed point}
For simplicity, in this section, let $\bm{p}(t)=\bm{p}_2(t)=\big( p_{12}(t),p_{22}(t),\dots,p_{n2}(t) \big)^{\top}$. Without loss of generality, we always assume that $\delta_{22}\ge \delta_{11}$. Define the map $h:\mathbb{R}^n\rightarrow \mathbb{R}^n$ by
\begin{equation}\label{eq_self_social_two_prod_h_map}
\begin{split}
h(\bm{x}) & = \delta_{12}\vectorones[n] + (1-\delta_{12}-\delta_{21})\bm{x}\\
&\quad +\delta_{11}\diag(\bm{\alpha})\tilde{A}\bm{x}-\delta_{22}\diag(\bm{\alpha})\bm{x}\\
&\quad +(\delta_{22}-\delta_{11})\diag(\bm{\alpha})\diag(\bm{x})\tilde{A}\bm{x}.\\
\end{split}
\end{equation}
Then the self-social NCPM for $R=2$ is written as
\begin{equation}\label{eq_self_social_two_product_vector_form}
\bm{p}(t+1)=h(\bm{p}(t)),
\end{equation} 
and $\bm{p}_1(t)$ is computed by $\bm{p}_1(t)=\vectorones[n]-\bm{p}(t)$.

First we prove that the two-product self-social NCPM possesses a unique fixed point.
\begin{lemma}[Invariant domain of map $h$]\label{lem_self_social_h_invariant}
The map $h$ defined by equation~\eqref{eq_self_social_two_prod_h_map} is a continuous map from $[0,1]^n$ to $[0,1]^n$ itself. 
\end{lemma}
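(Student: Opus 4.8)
The plan is to show two things separately: continuity of $h$, and invariance of the cube $[0,1]^n$. Continuity is immediate since $h$ is a polynomial (indeed quadratic) map in the entries of $\bm{x}$: every term in equation~\eqref{eq_self_social_two_prod_h_map} is a constant, a linear map applied to $\bm{x}$, or the bilinear expression $\diag(\bm{\alpha})\diag(\bm{x})\tilde{A}\bm{x}$, all of which are continuous. The substance of the lemma is therefore the invariance claim, and I would spend essentially all the effort there.

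For invariance, I would argue entry-by-entry: fix $i\in V$ and examine the $i$-th component $h_i(\bm{x})$ for an arbitrary $\bm{x}\in[0,1]^n$, aiming to prove $0\le h_i(\bm{x})\le 1$. Writing $x_{-i}=\sum_{j}\tilde{a}_{ij}x_j$ for the neighbor-average (which lies in $[0,1]$ since $\tilde{A}$ is row-stochastic and each $x_j\in[0,1]$), the $i$-th entry collects to
\begin{equation*}
h_i(\bm{x})=\delta_{12}+(1-\delta_{12}-\delta_{21})x_i+\alpha_i\big(\delta_{11}x_{-i}-\delta_{22}x_i+(\delta_{22}-\delta_{11})x_i\,x_{-i}\big).
\end{equation*}
The cleanest route is to recognize that $h_i$ is the probability that node $i$ adopts product $H_2$ at the next step, so it must automatically be a valid probability; but since the lemma is purely about the map, I would instead verify the bounds directly. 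For the lower bound, I would group the terms so that each grouping is manifestly nonnegative: the social-conversion contribution $\alpha_i\big(\delta_{11}x_{-i}(1-x_i)+\delta_{22}x_i(x_{-i}-1)+\dots\big)$ can be rearranged using $\delta_{11}+\delta_{12}=1$ and $\delta_{21}+\delta_{22}=1$ into a convex combination of the boundary values, and then evaluated at the extreme cases $x_i,x_{-i}\in\{0,1\}$.

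The key simplification is that $h_i(\bm{x})$ is affine in $x_{-i}$ once $x_i$ is fixed, and affine in $x_i$ once $x_{-i}$ is fixed; hence $h_i$ attains its extrema over the square $(x_i,x_{-i})\in[0,1]^2$ at the four corners. So the main step is simply to evaluate $h_i$ at $(x_i,x_{-i})\in\{(0,0),(0,1),(1,0),(1,1)\}$ and check each value lies in $[0,1]$, using only $\delta_{rs}\in[0,1]$, the row-stochasticity relations $\delta_{11}+\delta_{12}=1$, $\delta_{21}+\delta_{22}=1$, and $\alpha_i\in(0,1)$. For instance at $(0,0)$ one gets $\delta_{12}\in[0,1]$, at $(1,1)$ one gets $1-\delta_{21}\in[0,1]$, and the mixed corners give convex-combination-type values that are likewise bounded; since every corner value lies in $[0,1]$ and the corner of $[0,1]^2$ carries the extrema of a separately-affine (hence biaffine) function, the bound holds throughout. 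The only mild obstacle is bookkeeping the cross term $(\delta_{22}-\delta_{11})x_i x_{-i}$ and confirming the biaffine-extremum argument is applied correctly; once the corner evaluations are in hand, invariance, and hence the lemma, follows.
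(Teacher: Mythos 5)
Your proof is correct, and it takes a genuinely different (and arguably cleaner) route than the paper. The paper splits the two bounds into separate ad hoc arguments: for $h(\bm{x})\succeq\vectorzeros[n]$ it regroups the terms of $h$ into manifestly nonnegative pieces (using $\delta_{21}+\delta_{22}=1$ and the standing assumption $\delta_{22}\ge\delta_{11}$ to sign the quadratic term), and for $h(\bm{x})\preceq\vectorones[n]$ it writes $h(\bm{x})_i=\delta_{12}+\delta_{11}\alpha_i x_{-i}+\eta_i x_i$ and bounds the coefficient $\eta_i$ by its value at $x_{-i}=1$, arriving at $h(\bm{x})_i\le\max(\delta_{22},\,\delta_{12}+\delta_{11}\alpha_i)$. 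Your observation that $h_i$ is biaffine in $(x_i,x_{-i})\in[0,1]^2$ (legitimate because $x_{-i}$ is a convex combination of entries of $\bm{x}$, so the pair always lies in the unit square) handles both bounds at once by checking the four corners, whose values $\delta_{12}$, $\delta_{12}+\alpha_i\delta_{11}$, $(1-\alpha_i)\delta_{22}$, $\delta_{22}$ all lie in $[0,1]$; the resulting upper bound coincides with the paper's and your lower bound $\min(\delta_{12},(1-\alpha_i)\delta_{22})$ is in fact sharper than the paper's $0$. A further small advantage is that your argument nowhere uses $\delta_{22}\ge\delta_{11}$, whereas the paper's nonnegativity regrouping does. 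Two minor remarks: the intermediate paragraph in which you start rearranging the social-conversion contribution into convex combinations is a false start that your final corner argument supersedes and should be deleted, and you should state the four corner evaluations explicitly rather than describing the mixed corners as ``convex-combination-type values,'' since writing them out is the entire content of the verification.
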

\smallskip
\begin{proof}
The map $h$ is polynomial and hence continuous. Firstly, we prove
that, for any $x\in [0,1]^n$, $h(\bm{x})\succeq \vectorzeros[n]$.
Since
\begin{equation*}
\begin{split}
h(\bm{x})&=\delta_{12}(\vectorones[n]-\bm{x})+\delta_{11}\diag(\bm{\alpha})\tilde{A}\bm{x}\\
&\quad +(1-\delta_{21})\bm{x}-\delta_{22}\diag(\bm{\alpha})\bm{x}\\
&\quad +(\delta_{22}-\delta_{11})\diag(\bm{\alpha})\diag(\bm{x})\tilde{A}\bm{x}\\
\end{split}
\end{equation*}
and 
\begin{equation*}
(1-\delta_{21})\bm{x}-\delta_{22}\diag(\bm{\alpha})\bm{x}\succeq (1-\delta_{21}-\delta_{22})\bm{x}=\vectorzeros[n],
\end{equation*}
the right-hand side of the expression of $h$ is
nonnegative. Therefore, for any $\bm{x}\in [0,1]^n$, $h(\bm{x})\succeq
\vectorzeros[n]$.

Secondly, we prove that for any $\bm{x}\in [0,1]^n$, $h(\bm{x})\preceq
\vectorones[n]$. Recall that $x_{-i}=(\tilde{A}\bm{x})_i=\sum_j
\tilde{a}_{ij}x_j$. That is, $x_{-i}$ is the weighted average of all
the $x_j$'s except $x_i$ and the value of $x_{-i}$ does not depend on
$x_i$ since $\tilde{a}_{ii}=0$. Moreover, since $\sum_j
\tilde{a}_{ij}=1$ for any $i\in V$, $x_{-i}$ is also in the interval
$[0,1]$.  According to equation~\eqref{eq_self_social_two_prod_h_map},
rewrite the $i$-th entry of $h(\bm{x})$ as
\begin{equation*}
h(\bm{x})_i=\delta_{12}+\delta_{11}\alpha_i x_{-i}+\eta_i x_i,
\end{equation*}
where
$\eta_i=1-\delta_{12}-\delta_{21}-\delta_{22}\alpha_i+(\delta_{22}-\delta_{11})\alpha_i
x_{-i}$. The maximum value of $\eta_i$ is
$1-\delta_{12}-\delta_{21}-\delta_{11}\alpha_i$, obtained when
$x_{-i}=1$. Therefore,
\begin{equation*}
\eta_i x_i \le \max (1-\delta_{12}-\delta_{21}-\delta_{11}\alpha_i,\text{ }0).
\end{equation*}
Then we have
\begin{equation*}
\begin{split}
h(\bm{x})_i & \le \delta_{12}+\delta_{11}\alpha_i+\max(1-\delta_{12}-\delta_{21}-\delta_{11}\alpha_i,\text{ }0)\\
& = \max(\delta_{22}, \text{ }\delta_{12}+\delta_{11}\alpha_i)<1. 
\end{split}
\end{equation*}
The inequality above leads to $h(\bm{x})\preceq \vectorones[n]$ for any $\bm{x}\in[0,1]^n$.
\end{proof}

\begin{theorem}[Existence and uniqueness of the fixed point for two-product self-social NCPM]\label{thm_self_social_fixed_point_two_product} The map $h$ defined by equation~\eqref{eq_self_social_two_prod_h_map}, with parameters $\delta_{11}$, $\delta_{12}$, $\delta_{21}$, $\delta_{22}$, $\alpha_1$, $\dots$, $\alpha_n$ all in the interval $(0,1)$ and $\delta_{22}\ge \delta_{11}$, possesses a unique fixed point point $\bm{p}^*$ in $[0,1]^n$.
\end{theorem}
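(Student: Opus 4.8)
The plan is to separate existence from uniqueness. Existence is already available: Lemma~\ref{lem_self_social_h_invariant} shows that $h$ maps the compact convex cube $[0,1]^n$ into itself, so Brouwer's theorem furnishes at least one fixed point; in fact the contraction I build below will reprove existence through the Banach fixed-point theorem. The real work is uniqueness, and my strategy is to recast the fixed-point equation so that each coordinate is pinned down by its neighbor-average, and then show the resulting self-map is a sup-norm contraction. The point of this detour is that a direct contraction estimate on $h$ fails — near the corner $x_i=1$, $x_{-i}=0$ with $\delta_{22}$ close to $1$, the induced $\infty$-norm of the Jacobian of $h$ exceeds $1$ — so contractivity must be recovered by solving each scalar equation explicitly.

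First I would rewrite $\bm{x}=h(\bm{x})$ componentwise. Writing $x_{-i}=(\tilde{A}\bm{x})_i$ and using $\delta_{12}=1-\delta_{11}$, $\delta_{21}=1-\delta_{22}$, equation~\eqref{eq_self_social_two_prod_h_map} shows the $i$-th fixed-point equation is \emph{linear} in $x_i$, with coefficient $\delta_{12}+\delta_{21}+\delta_{22}\alpha_i-(\delta_{22}-\delta_{11})\alpha_i x_{-i}$; this equals $\delta_{12}+\delta_{21}+\delta_{11}\alpha_i>0$ at $x_{-i}=1$ and decreases in $x_{-i}$, hence is strictly positive on $[0,1]$. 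I can therefore solve uniquely
\[
x_i=g_i(x_{-i}):=\frac{\delta_{12}+\delta_{11}\alpha_i\,x_{-i}}{\delta_{12}+\delta_{21}+\delta_{22}\alpha_i-(\delta_{22}-\delta_{11})\alpha_i\,x_{-i}},
\]
so that $\bm{p}^*$ is a fixed point of $h$ if and only if it is a fixed point of $\Phi(\bm{x}):=\big(g_i((\tilde{A}\bm{x})_i)\big)_{i=1}^n$. The difference numerator-minus-denominator equals $-\delta_{21}-\delta_{22}\alpha_i(1-x_{-i})<0$, so each $g_i$ sends $[0,1]$ into $[0,1]$; since $\tilde{A}$ preserves $[0,1]^n$, $\Phi$ is a self-map of the complete metric space $([0,1]^n,\|\cdot\|_\infty)$.

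Next I would differentiate. Each $g_i$ is strictly increasing with $g_i'(u)=\big(\delta_{11}\alpha_i c_i+\delta_{12}(\delta_{22}-\delta_{11})\alpha_i\big)\big/\big(c_i-(\delta_{22}-\delta_{11})\alpha_i u\big)^2$, where $c_i=\delta_{12}+\delta_{21}+\delta_{22}\alpha_i$; the numerator is a positive constant and the denominator decreases in $u$, so $g_i'$ attains its maximum over $[0,1]$ at $u=1$. The Jacobian of $\Phi$ is $\diag\big(g_i'((\tilde{A}\bm{x})_i)\big)\tilde{A}$, whose induced $\infty$-norm is $\max_i g_i'((\tilde{A}\bm{x})_i)\le\max_i g_i'(1)$ because the rows of $\tilde{A}$ sum to one and $g_i'>0$. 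By the mean-value inequality on the convex set $[0,1]^n$, $\Phi$ is then a contraction with constant $L=\max_i g_i'(1)$, and everything reduces to showing $L<1$.

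Verifying $g_i'(1)<1$ is the crux. With $a=\delta_{11}$, $d=\delta_{22}$ (so $d\ge a$) and $\alpha=\alpha_i$, it is the inequality $\alpha(a+d-2ad+ad\alpha)<(2-a-d+a\alpha)^2$. I would expand the difference $F(\alpha)$ of the two sides and observe that, as a quadratic in $\alpha$, its leading coefficient is $a(a-d)\le 0$; hence $F$ is concave on $[0,1]$ and its minimum occurs at an endpoint. At $\alpha=0$ one gets $F=(2-a-d)^2>0$, and at $\alpha=1$ the value factors as $(1-d)(4-a-d)>0$ for $a,d\in(0,1)$, so $F(\alpha)>0$ throughout $(0,1)$ and $L<1$. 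The Banach fixed-point theorem then yields the unique $\bm{p}^*\in[0,1]^n$, establishing Theorem~\ref{thm_self_social_fixed_point_two_product}. I expect this endpoint/concavity verification of $g_i'(1)<1$ to be the only genuinely delicate step, since the borderline case $\delta_{11}\to 0,\ \delta_{22}\to 1,\ \alpha_i\to 1$ gives $g_i'(1)\to 1$ and is excluded precisely by the open-interval hypotheses on the parameters.
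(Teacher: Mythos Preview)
Your argument is correct. Both you and the paper follow the same overall plan---reformulate the fixed-point equation as $\bm{x}=\text{(auxiliary map)}(\bm{x})$ and show this auxiliary map is an $\ell_\infty$-contraction on $[0,1]^n$---but the auxiliary maps differ. The paper only divides through by the diagonal coefficient $K_i=\delta_{12}+\delta_{21}+\delta_{22}\alpha_i$, obtaining the still-quadratic map
\[
T(\bm{x})=\delta_{12}K^{-1}\vectorones[n]+\delta_{11}K^{-1}\diag(\bm{\alpha})\tilde{A}\bm{x}+(\delta_{22}-\delta_{11})K^{-1}\diag(\bm{\alpha})\diag(\bm{x})\tilde{A}\bm{x},
\]
and bounds $|x_ix_{-i}-y_iy_{-i}|\le 2\lVert\bm{x}-\bm{y}\rVert_\infty$ to get the contraction constant $\epsilon_i=(2\delta_{22}-\delta_{11})\alpha_i/K_i$. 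You go one step further and solve the $i$-th equation completely for $x_i$, producing the rational map $\Phi$; contractivity then reduces to the scalar bound $g_i'(1)<1$, which you verify by an endpoint/concavity argument. Your route gives a sharper Lipschitz constant (for instance at $\delta_{11}=0.2$, $\delta_{22}=0.8$, $\alpha_i=0.5$ you get roughly $0.31$ versus the paper's $0.5$) and a cleaner separation of the analysis into a one-variable problem. The paper's choice of $T$, on the other hand, has the advantage that it connects directly to $h$ via $h(\bm{x})=KT(\bm{x})+(I-K)\bm{x}$, a relation exploited later (Proposition~\ref{prop_global_stability_sufficient}) to upgrade to global stability of $h$ itself under an extra hypothesis on $\alpha_i$.
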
 
\smallskip
\begin{proof}
According to Lemma~\ref{lem_self_social_h_invariant}, for any $\bm{p}(t)\in [0,1]^n$, $\bm{p}(t+1)$ is still in $[0,1]^n$. According to the Brower fixed point theorem, there exists $\bm{p}^*$ such that $h(\bm{p}^*)=\bm{p}^*$. This concludes the proof of the existence of a fixed point.

Any fixed point of map $h$ should satisfy $h(\bm{p}^*)=\bm{p}^*$, that is,
\begin{equation}\label{eq_two_prod_fixed_point_equation}
\begin{split}
\vectorzeros[n] & = \delta_{12}\vectorones[n] + \delta_{11}\diag(\bm{\alpha})\tilde{A}\bm{p}^*\\
&\quad +(\delta_{22}-\delta_{11})\diag(\bm{\alpha})\diag(\bm{p}^*)\tilde{A}\bm{p}^*\\
&\quad -(\delta_{12}+\delta_{21})\bm{p}^*-\delta_{22}\diag(\bm{\alpha})\bm{p}^*.
\end{split}
\end{equation}
Therefore, 
\begin{equation*}
\begin{split}
\bm{p}^* & = \delta_{12}K^{-1}\vectorones[n]+\delta_{11}K^{-1}\diag(\bm{\alpha})\tilde{A}\bm{p}^*\\
& \quad + (\delta_{22}-\delta_{11})K^{-1}\diag(\bm{\alpha})\diag(\bm{p}^*)\tilde{A}\bm{p}^*,
\end{split}
\end{equation*}
where $K=(\delta_{12}+\delta_{21})I+\delta_{22}\diag(\bm{\alpha})$ is a positive diagonal matrix. Define a map $T:\mathbb{R}^n\rightarrow \mathbb{R}^n$ by 
\begin{equation}\label{eq_map_T}
\begin{split}
T(\bm{x}) & = \delta_{12}K^{-1} \vectorones[n] + \delta_{11}K^{-1}\diag(\bm{\alpha})\tilde{A}\bm{x}\\
& \quad +(\delta_{22}-\delta_{11})K^{-1}\diag(\bm{\alpha})\diag(\bm{x})\tilde{A}\bm{x}.
\end{split}
\end{equation}
The the existence and uniqueness of the fixed point in $[0,1]^n$ for the map $h$ is equivalent to the existence and uniqueness of the fixed point for the map $T$. Now we prove that $T$ has a unique fixed point in $[0,1]^n$ by showing that $T$ maps $[0,1]^n$ to $[0,1]^n$ and $T$ is a contraction map.

For any $\bm{x}$ and $\bm{y}\in [0,1]^n$, define the distance $d(\bm{x},\bm{y})$ by $d(\bm{x},\bm{y})=\lVert \bm{x}-\bm{y} \rVert_{\infty}$. Then $([0,1]^n,d)$ is a complete metric space.

According to equation~\eqref{eq_map_T}, since $K^{-1}$, $\diag(\bm{\alpha})$, $\tilde{A}$, $\delta_{22}-\delta_{11}$ and $\diag(\bm{x})$ are all nonnegative, for any $\bm{x}$, $\bm{y}\in [0,1]^n$ and $\bm{x}\preceq \bm{y}$, we have $T(\bm{x})\preceq T(\bm{y})$. Moreover,
\begin{equation*}
T(\vectorzeros[n])=\delta_{12}K^{-1}\vectorones[n]\succ \vectorzeros[n],
\end{equation*}  
and 
\begin{equation*}
\begin{split}
T(\vectorones[n]) & = \delta_{12}K^{-1}\vectorones[n]+\delta_{11}K^{-1}\bm{\alpha}+(\delta_{22}-\delta_{11})K^{-1}\bm{\alpha}\\
& = \delta_{12}K^{-1}\vectorones[n] +\delta_{22}K^{-1}\bm{\alpha}.
\end{split}
\end{equation*} 
Since 
\begin{equation*}
T(\vectorones[n])_i=\frac{\delta_{12}+\delta_{22}\alpha_i}{\delta_{12}+\delta_{21}+\delta_{22}\alpha_i}<1,
\end{equation*}
we have $T(\vectorones[n])\prec \vectorones[n]$. Therefore, for any $\bm{x}\in [0,1]^n$, $T(\bm{x})\in [0,1]^n$, i.e., $T$ maps $[0,1]^n$ to $[0,1]^n$.

Now we prove that $T$ is a contraction map. For any $\bm{x}$, $\bm{y}\in [0,1]^n$,
\begin{equation*}
\begin{split}
T(\bm{x})_i-T(\bm{y})_i & =\frac{\delta_{11}\alpha_i}{K_i} (x_{-i}-y_{-i}) \\
& \quad + \frac{(\delta_{22}-\delta_{11})\alpha_i}{K_i}(x_i x_{-i}-y_i y_{-i}).
\end{split}
\end{equation*}
Moreover,
\begin{equation*}
\begin{split}
\lvert x_{-i}-y_{-i} \rvert & \le \sum_{j=1}^n \tilde{a}_{ij} \lvert x_j-y_j \rvert \\
& \le (\sum_{j=1}^n \tilde{a}_{ij}) \max_{j} \lvert x_j-y_j \rvert  =\lVert \bm{x}-\bm{y} \rVert_{\infty},
\end{split}
\end{equation*}
and
\begin{equation*}
\begin{split}
\lvert x_i & x_{-i} - y_i y_{-i} \rvert\\
& \le \max \big( \max_i y_i^2 - \min_i x_i^2,\text{ }\max_i x_i^2-\min_i y_i^2 \big)\\
& \le 2\lVert \bm{x}-\bm{y} \rVert_{\infty}.
\end{split}
\end{equation*}
Therefore,
\begin{equation*}
\lvert T(\bm{x})_i - T(\bm{y})_i \rvert \le \epsilon_i \lVert \bm{x}-\bm{y} \rVert_{\infty},
\end{equation*}
where $\epsilon_i=\frac{(2\delta_{22}-\delta_{11})\alpha_i}{\delta_{12}+\delta_{21}+\delta_{22}\alpha_i}$. It is easy to check that $\epsilon_i<1$ for any $i\in V$ and $\epsilon_i$ does not depend on the $\bm{x}$ and $\bm{y}$. Let $\epsilon=\max_{i} \epsilon_i$. Then for any $\bm{x}$, $\bm{y} \in [0,1]^n$,
\begin{equation*}
\lVert T(\bm{x})-T(\bm{y}) \rVert_{\infty} \le \epsilon \lVert \bm{x}-\bm{y} \rVert_{\infty} \quad \text{with }\epsilon<1.
\end{equation*} 
Applying the Banach fixed point theorem, we know that the map $T$ possesses a unique fixed point $\bm{p}^*$ in $[0,1]^n$. In addition, for any $\bm{p}(0)$, the sequence $\{ \bm{p}(t) \}_{t\in \mathbb{N}}$ defined by $\bm{p}(t+1)=T\big(\bm{p}(t)\big)$ satisfies 
\begin{equation*}
\lim_{t\rightarrow \infty} \bm{p}(t)=\bm{p}^*.
\end{equation*}
\end{proof}

Theorem~\ref{thm_self_social_fixed_point_two_product} not only proves the existence and uniqueness of the fixed point $\bm{p}^*$ in $[0,1]^n$, but also implies some properties of $\bm{p}^*$ by introducing the map $T$. Two properties of $\bm{p}^*$ are given below.
\begin{corollary}[Lower and upper bound of the unique fixed point]\label{crly_self_social_bound_fixed_pt}
For the two-product self-social NCPM with $\delta_{22}\ge \delta_{11}$, the unique fixed point $\bm{p}^*$ in $[0,1]^n$ satisfies 
\begin{equation}
\frac{1}{2}\vectorones[n]\preceq \bm{p}^* \preceq \frac{\delta_{12}}{\delta_{12}+\delta_{21}}\vectorones[n]. 
\end{equation}
\end{corollary}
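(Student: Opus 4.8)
The plan is to avoid working with $h$ directly and instead exploit the auxiliary map $T$ of equation~\eqref{eq_map_T}, since the proof of Theorem~\ref{thm_self_social_fixed_point_two_product} already establishes everything I need about it: $T$ shares the unique fixed point $\bm{p}^*$, it maps $[0,1]^n$ into itself, it is monotone (if $\bm{x}\preceq\bm{y}$ then $T(\bm{x})\preceq T(\bm{y})$, shown there entrywise), and it is a contraction whose iterates converge to $\bm{p}^*$ from every starting point. The engine of the argument is the standard super/subsolution principle for monotone maps: if a vector $\bm{b}\in[0,1]^n$ satisfies $T(\bm{b})\preceq\bm{b}$, then monotonicity produces a decreasing iterate chain $\bm{b}\succeq T(\bm{b})\succeq T^2(\bm{b})\succeq\cdots$, which by the contraction property converges to $\bm{p}^*$, forcing $\bm{p}^*\preceq\bm{b}$; symmetrically, $T(\bm{a})\succeq\bm{a}$ forces $\bm{p}^*\succeq\bm{a}$. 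It therefore suffices to verify that $\bm{a}=\tfrac12\vectorones[n]$ is a subsolution and $\bm{b}=\tfrac{\delta_{12}}{\delta_{12}+\delta_{21}}\vectorones[n]$ is a supersolution.

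The second step is the observation that $T$ collapses to a scalar expression on constant vectors. Using $\tilde{A}\vectorones[n]=\vectorones[n]$, for $\bm{x}=c\vectorones[n]$ one has $\diag(\bm{\alpha})\tilde{A}\bm{x}=c\bm{\alpha}$ and $\diag(\bm{\alpha})\diag(\bm{x})\tilde{A}\bm{x}=c^2\bm{\alpha}$, so that the $i$-th entry of $T(c\vectorones[n])$ reads
\begin{equation*}
T(c\vectorones[n])_i=\frac{\delta_{12}+\delta_{11}\alpha_i c+(\delta_{22}-\delta_{11})\alpha_i c^2}{\delta_{12}+\delta_{21}+\delta_{22}\alpha_i},
\end{equation*}
with the denominator $K_i=\delta_{12}+\delta_{21}+\delta_{22}\alpha_i$ strictly positive. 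Clearing this denominator turns each of the two required comparisons, $T(\tfrac{\delta_{12}}{\delta_{12}+\delta_{21}}\vectorones[n])_i\le\tfrac{\delta_{12}}{\delta_{12}+\delta_{21}}$ and $T(\tfrac12\vectorones[n])_i\ge\tfrac12$, into an elementary polynomial inequality in $c$ and $\alpha_i$.

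The crucial algebraic input, which I would insert before checking these inequalities, is that row-stochasticity of $\Delta$ for $R=2$ gives $\delta_{11}+\delta_{12}=\delta_{21}+\delta_{22}=1$, hence the identity $\delta_{12}-\delta_{21}=\delta_{22}-\delta_{11}$. For the supersolution, substituting $c=\tfrac{\delta_{12}}{\delta_{12}+\delta_{21}}$ cancels the constant terms (since $c(\delta_{12}+\delta_{21})=\delta_{12}$) and reduces the inequality, after dividing by $\alpha_i c>0$, to $(\delta_{22}-\delta_{11})c\le\delta_{22}-\delta_{11}$, which holds because $c\le1$ and $\delta_{22}\ge\delta_{11}$. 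For the subsolution, substituting $c=\tfrac12$ and invoking the identity reduces the inequality to $\delta_{22}-\delta_{11}\ge\tfrac12(\delta_{22}-\delta_{11})\alpha_i$, which again holds since $\delta_{22}\ge\delta_{11}$ and $\alpha_i<1$. I expect the inequalities themselves to be routine; the genuine conceptual obstacle is the order comparison between the candidate bounds and the fixed point, and this is precisely what the monotonicity of $T$ established in Theorem~\ref{thm_self_social_fixed_point_two_product} supplies. It is also worth noting that the hypothesis $\delta_{22}\ge\delta_{11}$ enters in exactly the places above, confirming that it is the natural assumption for this comparison.
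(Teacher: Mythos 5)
Your proposal is correct and follows essentially the same route as the paper: the paper also verifies that $T(\tfrac12\vectorones[n])\succeq\tfrac12\vectorones[n]$ and $T(\tfrac{\delta_{12}}{\delta_{12}+\delta_{21}}\vectorones[n])\preceq\tfrac{\delta_{12}}{\delta_{12}+\delta_{21}}\vectorones[n]$ and then concludes via monotonicity and the contraction property of $T$ (phrased there as $T$ mapping the order interval $S$ into itself, so the unique fixed point lies in $S$; your monotone-iterate chain is the same mechanism). You have simply written out explicitly the entrywise computations the paper leaves as ``one can easily check.''
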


\begin{proof}
One can easily check that when $\delta_{22}\ge \delta_{11}$, $T(\frac{1}{2}\vectorones[n])\succeq \frac{1}{2}\vectorones[n]$ and $T(\frac{\delta_{12}}{\delta_{12}+\delta_{21}}\vectorones[n])\preceq \frac{\delta_{12}}{\delta_{12}+\delta_{21}}\vectorones[n]$. Therefore, $T$ maps $S=\{\bm{x}\in \mathbb{R}^n \,|\, \frac{1}{2}\vectorones[n] \preceq \bm{x} \preceq \frac{\delta_{12}}{\delta_{12}+\delta_{21}}\vectorones[n]\}$ to $S$ itself. Since $T$ is a contraction map, the unique fixed point $\bm{p}^*$ is in $S$. 
\end{proof}

This corollary has a meaningful interpretation. The condition $\delta_{22}\ge \delta_{11}$ is equivalent to $\delta_{12}\ge\delta_{21}$, which means that the nodes in state $H_1$ have a higher or equal tendency of converting to $H_2$ than the tendency of self conversion from $H_2$ to $H_1$. In this sense the product $H_2$ is advantageous to $H_1$ and therefore the fixed point is in favor of $H_2$, that is, $\bm{p}^*\ge \frac{1}{2}\vectorones[n]$.  

The following corollary gives an upper bound for the difference between $p^*_i$ and $p^*_{-i}=\sum_{j} \tilde{a}_{ij}p^*_j$.

\begin{corollary}[Difference between $p_i^*$ and $p_{-i}^*$]\label{crly_differenc_p_i_p_-i}
  For the two-product self-social NCPM with $\delta_{22}\ge
  \delta_{11}$, the unique fixed point satisfies, for any $i\in V$,
\begin{equation}\label{eq_diff_p_i_p_-i}
p_i^*-p_{-i}^* \le \frac{1-\frac{1}{2}\alpha_i}{\alpha_i} \frac{\delta_{22}-\delta_{11}}{\delta_{22}+\delta_{11}}.
\end{equation}
\end{corollary}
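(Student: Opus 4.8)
The plan is to work entirely with the scalar fixed-point equation obtained by reading off the $i$-th entry of the fixed-point identity~\eqref{eq_two_prod_fixed_point_equation}. Writing $p_i^{*}=(\bm{p}^*)_i$ and $p_{-i}^{*}=(\tilde{A}\bm{p}^*)_i$, that entry reads
\begin{equation*}
0=\delta_{12}+\delta_{11}\alpha_i p_{-i}^{*}+(\delta_{22}-\delta_{11})\alpha_i p_i^{*} p_{-i}^{*}-(\delta_{12}+\delta_{21})p_i^{*}-\delta_{22}\alpha_i p_i^{*}.
\end{equation*}
Collecting the three $\alpha_i$-terms and factoring, one rewrites this as
\begin{equation*}
\delta_{12}-(\delta_{12}+\delta_{21})p_i^{*}=\alpha_i\big(\delta_{22}p_i^{*}(1-p_{-i}^{*})-\delta_{11}p_{-i}^{*}(1-p_i^{*})\big).
\end{equation*}
Because $\Delta$ is row-stochastic, $\delta_{12}=1-\delta_{11}$ and $\delta_{21}=1-\delta_{22}$, so $\delta_{12}-\delta_{21}=\delta_{22}-\delta_{11}$; this identity is what ultimately makes the parameters $\delta_{12},\delta_{21}$ disappear from the final bound.

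Next I would introduce $\sigma=\tfrac12(\delta_{11}+\delta_{22})>0$ and $\tau=\tfrac12(\delta_{22}-\delta_{11})\ge 0$ and split the right-hand side using the elementary identities $p_i^{*}(1-p_{-i}^{*})-p_{-i}^{*}(1-p_i^{*})=p_i^{*}-p_{-i}^{*}$ and $p_i^{*}(1-p_{-i}^{*})+p_{-i}^{*}(1-p_i^{*})=p_i^{*}+p_{-i}^{*}-2p_i^{*}p_{-i}^{*}$. This turns the relation into
\begin{equation*}
\alpha_i\sigma\,(p_i^{*}-p_{-i}^{*})=\big(\delta_{12}-(\delta_{12}+\delta_{21})p_i^{*}\big)-\alpha_i\tau\,(p_i^{*}+p_{-i}^{*}-2p_i^{*}p_{-i}^{*}).
\end{equation*}
Since $\tau/\sigma=(\delta_{22}-\delta_{11})/(\delta_{22}+\delta_{11})$, dividing by $\alpha_i\sigma$ shows that the claimed inequality~\eqref{eq_diff_p_i_p_-i} is exactly equivalent to proving that the right-hand side above is at most $(1-\tfrac12\alpha_i)\tau$.

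The last step is a bookkeeping estimate driven by the two-sided bound $\tfrac12\vectorones[n]\preceq\bm{p}^*\preceq\tfrac{\delta_{12}}{\delta_{12}+\delta_{21}}\vectorones[n]$ of Corollary~\ref{crly_self_social_bound_fixed_pt} (which, as a convex combination, also forces $p_{-i}^{*}\in[\tfrac12,\tfrac{\delta_{12}}{\delta_{12}+\delta_{21}}]$). Using $\delta_{12}-\delta_{21}=2\tau$ one computes $\delta_{12}-(\delta_{12}+\delta_{21})p_i^{*}-\tau=-(\delta_{12}+\delta_{21})(p_i^{*}-\tfrac12)$, together with the factorization $\tfrac12-(p_i^{*}+p_{-i}^{*}-2p_i^{*}p_{-i}^{*})=2(p_i^{*}-\tfrac12)(p_{-i}^{*}-\tfrac12)$, so that the desired inequality collapses to
\begin{equation*}
(\delta_{12}+\delta_{21})(p_i^{*}-\tfrac12)\ge 2\alpha_i\tau\,(p_i^{*}-\tfrac12)(p_{-i}^{*}-\tfrac12).
\end{equation*}
For $p_i^{*}=\tfrac12$ both sides vanish; otherwise I divide by $p_i^{*}-\tfrac12>0$, and it remains to check $(\delta_{12}+\delta_{21})\ge 2\alpha_i\tau(p_{-i}^{*}-\tfrac12)$. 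Bounding $p_{-i}^{*}-\tfrac12\le\tau/(\delta_{12}+\delta_{21})$ from the corollary, this follows from $(\delta_{12}+\delta_{21})^2\ge 2\alpha_i\tau^2$, which holds because $\tau\le\tfrac12(\delta_{12}+\delta_{21})$ and $\alpha_i<1$. I expect the main obstacle to be spotting the symmetric $\sigma/\tau$ splitting together with the factorization $\tfrac12-\nu=2(p_i^{*}-\tfrac12)(p_{-i}^{*}-\tfrac12)$ for $\nu=p_i^{*}+p_{-i}^{*}-2p_i^{*}p_{-i}^{*}$: without these the quality parameters fail to cancel and one is left with the strictly weaker estimate $(\delta_{22}-\delta_{11})/\big(\alpha_i(\delta_{22}+\delta_{11})\big)$, missing the sharpening factor $1-\tfrac12\alpha_i$.
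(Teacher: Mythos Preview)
Your argument is correct. Both you and the paper start from the $i$-th scalar fixed-point identity and invoke the two-sided bound of Corollary~\ref{crly_self_social_bound_fixed_pt}, but the algebra diverges from there. The paper eliminates $p_{-i}^{*}$ to write $p_i^{*}-p_{-i}^{*}$ as a rational function of $p_i^{*}$ alone,
\[
p_i^{*}-p_{-i}^{*}=\frac{\delta_{12}-\big(\delta_{12}+\delta_{21}+\alpha_i(\delta_{22}-\delta_{11})(1-p_i^{*})\big)p_i^{*}}{\delta_{11}\alpha_i+(\delta_{22}-\delta_{11})\alpha_i p_i^{*}},
\]
and then simply asserts that this function attains its maximum on $[\tfrac12,\tfrac{\delta_{12}}{\delta_{12}+\delta_{21}}]$ at $p_i^{*}=\tfrac12$, where it equals the claimed bound. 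Your route instead keeps both $p_i^{*}$ and $p_{-i}^{*}$ in play, uses the symmetric $\sigma/\tau$ splitting together with the factorization $\tfrac12-(p_i^{*}+p_{-i}^{*}-2p_i^{*}p_{-i}^{*})=2(p_i^{*}-\tfrac12)(p_{-i}^{*}-\tfrac12)$, and reduces everything to the elementary inequality $(\delta_{12}+\delta_{21})^2\ge 2\alpha_i\tau^2$. What you gain is a fully verified chain of inequalities (the paper's maximization step is stated but not checked), at the cost of a slightly longer algebraic setup; what the paper's approach gains is that, once the one-variable expression is written down, the claimed value at $p_i^{*}=\tfrac12$ is immediate.
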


\begin{proof}
According to equation~\eqref{eq_two_prod_fixed_point_equation}, we have
\begin{equation*}
C_i p_i^* - C_{-i} p_{-i}^*=\delta_{12}-\delta_{12}p_i^*,
\end{equation*}
where $C_i=\delta_{21}+\delta_{22}\alpha_i$ and $C_{-i}=\delta_{11}\alpha_i +(\delta_{22}-\delta_{11})\alpha_i p_i^*$.

Firstly we point out that $C_i>C_{-i}$ because 
\begin{equation*}
C_i-C_{-i}=\delta_{21}+\alpha_i (\delta_{22}-\delta_{11})(1-p_i^*)>0.
\end{equation*}

Moreover, 
\begin{equation*}
\begin{split}
p_i^* - & p_{-i}^*\\ 
& =\frac{\delta_{12}-(\delta_{12}+C_i-C_{-i})p_i^*}{C_{-i}}\\
& =\frac{\delta_{12}-\big( \delta_{12}+\delta_{21}+\alpha_i (\delta_{22}-\delta_{11})(1-p_i^*) \big)p_i^*}{\delta_{11}\alpha_i+(\delta_{22}-\delta_{11})\alpha_i p_i^*}.
\end{split}
\end{equation*}
The right-hand side of the equation above with $\frac{1}{2}\le p_i^* \le \frac{\delta_{12}}{\delta_{12}+\delta_{21}}$ achieves its maximum value
\begin{equation*}
\frac{1-\frac{1}{2}\alpha_i}{\alpha_i} \frac{\delta_{22}-\delta_{11}}{\delta_{22}+\delta_{11}}
\end{equation*}
at $p_i^*=\frac{1}{2}$. This concludes the proof.
\end{proof} 

\subsection{Stability of the unique fixed point}
Notice that any sequence $\{\bm{p}(t)\}_{t\in \mathbb{N}}$ defined by $\bm{p}(t+1)=T(\bm{p}(t))$ converging to $\bm{p}^*$ does not necessarily lead to the global stability of $\bm{p}^*$ for the system~\eqref{eq_self_social_two_product_vector_form}. Further analysis is needed for the global or local stability of the same fixed point $\bm{p}^*$ in system~\eqref{eq_self_social_two_product_vector_form}.

First we consider a special case in which $\delta_{11}=\delta_{22}$.

\begin{proposition}[Global Stability for the two-product self-social NCPM with $\delta_{11}=\delta_{22}$]\label{prop_self_social_global_stability_delta_11=22}
For the two-product self-social NCPM given by equation~\eqref{eq_self_social_two_product_vector_form}, if $\delta_{11}=\delta_{22}$, then the system has a unique fixed point $\bm{p}^*=\frac{1}{2}\vectorones[n]$ and for any initial condition $\bm{p}(0)$, the solution sequence $\{\bm{p}(t)\}_{t\in \mathbb{N}}$ computed by $\bm{p}(t+1)=h\big( \bm{p}(t) \big)$ converges to $\bm{p}^*$ exponentially fast.
\end{proposition}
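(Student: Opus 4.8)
The plan is to exploit the fact that the hypothesis $\delta_{11}=\delta_{22}$ collapses the map $h$ into an affine map, after which the entire statement reduces to a single contraction estimate on a constant matrix.

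First I would record the structural simplification. When $\delta_{11}=\delta_{22}$, the coefficient $\delta_{22}-\delta_{11}$ multiplying the only nonlinear (quadratic) term in \eqref{eq_self_social_two_prod_h_map} vanishes, so $h$ becomes affine: $h(\bm{x})=M\bm{x}+\delta_{12}\vectorones[n]$ with $M=(1-\delta_{12}-\delta_{21})I+\delta_{11}\diag(\bm{\alpha})(\tilde{A}-I)$. Moreover, row-stochasticity of $\Delta$ forces $\delta_{12}=1-\delta_{11}=1-\delta_{22}=\delta_{21}$, so that $1-\delta_{12}-\delta_{21}=2\delta_{11}-1$. This reduces the whole analysis to a linear iteration.

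Second, the fixed point. Since $\tilde{A}\vectorones[n]=\vectorones[n]$, the term $\diag(\bm{\alpha})(\tilde{A}-I)\vectorones[n]$ vanishes, and a direct substitution, together with $\delta_{12}=\delta_{21}$, gives $h(\tfrac12\vectorones[n])=\tfrac12\vectorones[n]$. Uniqueness in $[0,1]^n$ is inherited directly from Theorem~\ref{thm_self_social_fixed_point_two_product}, whose hypothesis $\delta_{22}\ge\delta_{11}$ holds here with equality; hence $\bm{p}^*=\tfrac12\vectorones[n]$ is the unique fixed point and no separate uniqueness argument is needed.

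Third, global exponential stability. Because $h$ is affine and $\bm{p}^*$ is its fixed point, the error $\bm{e}(t)=\bm{p}(t)-\bm{p}^*$ obeys the purely linear recursion $\bm{e}(t+1)=M\bm{e}(t)$, so $\bm{e}(t)=M^t\bm{e}(0)$, and it suffices to show $M$ is a contraction in the $\infty$-norm. Writing out the $i$-th component, $(M\bm{z})_i=(2\delta_{11}-1-\delta_{11}\alpha_i)z_i+\delta_{11}\alpha_i\sum_j\tilde{a}_{ij}z_j$, and bounding $\lvert\sum_j\tilde{a}_{ij}z_j\rvert\le\lVert\bm{z}\rVert_\infty$ via $\sum_j\tilde{a}_{ij}=1$, I obtain $\lVert M\bm{z}\rVert_\infty\le\epsilon\lVert\bm{z}\rVert_\infty$ with $\epsilon=\max_i\big(\lvert 2\delta_{11}-1-\delta_{11}\alpha_i\rvert+\delta_{11}\alpha_i\big)$. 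The one point that requires care, and which I expect to be the only real obstacle, is verifying $\epsilon<1$ uniformly: splitting on the sign of $2\delta_{11}-1-\delta_{11}\alpha_i$, the per-index constant equals $2\delta_{11}-1$ in one case and $1-2\delta_{11}(1-\alpha_i)$ in the other, each strictly below $1$ since $0<\delta_{11}<1$ and $0<\alpha_i<1$; taking the finite maximum over $i$ preserves the strict inequality. Iterating then yields $\lVert\bm{p}(t)-\bm{p}^*\rVert_\infty\le\epsilon^t\lVert\bm{p}(0)-\bm{p}^*\rVert_\infty$, the claimed exponential convergence from an arbitrary initial condition $\bm{p}(0)$.
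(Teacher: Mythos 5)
Your proof is correct and follows essentially the same route as the paper: both reduce $h$ to the affine map $\bm{x}\mapsto M\bm{x}+\delta_{12}\vectorones[n]$ with the same matrix $M=(1-2\delta_{12})I+\delta_{11}\diag(\bm{\alpha})(\tilde{A}-I)$, verify $\tfrac12\vectorones[n]$ is the fixed point, invoke Theorem~\ref{thm_self_social_fixed_point_two_product} for uniqueness, and establish $\lVert M\rVert_\infty<1$ by the identical case split on the sign of the diagonal entry (your per-row constants $2\delta_{11}-1$ and $1-2\delta_{11}(1-\alpha_i)$ coincide with the paper's row sums $1-2\delta_{12}$ and $2\delta_{12}+2\delta_{11}\alpha_i-1$ after using $\delta_{12}=1-\delta_{11}$). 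No gaps.
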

\smallskip
\begin{proof}
With $\delta_{11}=\delta_{22}$, the map $h$ becomes
\begin{equation*}
h\big( \bm{x} \big)=\bm{x}+\delta_{12}\vectorones[n]-2\delta_{12}\bm{x}+\delta_{11}\diag(\bm{\alpha})\big( \tilde{A}\bm{x}-\bm{x} \big).
\end{equation*}
One can easily check that $\bm{p}^*=\frac{1}{2}\vectorones[n]$ is a fixed point. According to Theorem~\ref{thm_self_social_fixed_point_two_product}, the fixed point is unique. Let $\bm{p}(t)=\bm{y}(t)+\frac{1}{2}\vectorones[n]$. Then the two-product self-social NCPM becomes
\begin{equation*}
\bm{y}(t+1)=M \bm{y}(t),
\end{equation*}
where $M=(1-2\delta_{12})I+\delta_{11}\diag(\bm{\alpha})\tilde{A}-\delta_{11}\diag(\bm{\alpha})$.

For any $i\in V$, if $1-2\delta_{12}-\delta_{11}\alpha_i\ge 0$, then the $i$-th absolute row sum of $M$ is equal to
\begin{equation*}
\sum_{j=1}^n \lvert M_{ij} \rvert = 1-2\delta_{12}-\delta_{11}\alpha_i+\delta_{11}\alpha_i=1-2\delta_{12}<1,
\end{equation*}
and, if $1-2\delta_{12}-\delta_{11}\alpha_i<0$, the $i$-th absolute row sum satisfies
\begin{equation*}
\sum_{j=1}^n \lvert M_{ij} \rvert =2\delta_{12}+\delta_{11}\alpha_i+\delta_{11}\alpha_i-1<1.
\end{equation*}
Since $\rho(M)\le \lVert M \rVert_{\infty}=\max_i \sum_{j=1}^n \lvert M_{ij} \rvert$, the spectral radius of $M$ is strictly less than $1$. The fixed point $\bm{p}^*=\frac{1}{2}\vectorones[n]$ is exponentially stable for any initial condition $\bm{p}(t)\in [0,1]^n$.
\end{proof}

For the case $\delta_{22}>\delta_{11}$, we give two propositions on the sufficient conditions, which are inequalities on the parameters $\bm{\alpha}$, $\delta_{11}$ and $\delta_{22}$, for the local stability and global stability respectively. By ``global stability'' we mean the stability of $\bm{p}^*$ for any $\bm{p}(0)\in [0,1]^n$.

\begin{proposition}[Sufficient condition on local stability for two-product self-social NCPM with $\delta_{22}>\delta_{11}$]\label{prop_local_stability_two_product}
Consider the two-product self-social NCPM~\eqref{eq_self_social_two_product_vector_form} on the  connected network represented by the adjacency matrix $A$. Suppose $\delta_{22}>\delta_{11}$. Then the unique fixed point $\bm{p}^*$ is locally stable as long as 
\begin{equation}\label{eq_self_social_two_prod_suff_cond_local_stability}
\alpha_i<\frac{8\delta_{11}\delta_{22}}{(\delta_{22}-\delta_{11})^2+8\delta_{11}\delta_{22}}.
\end{equation}
\end{proposition}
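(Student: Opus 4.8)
The plan is to use Lyapunov's indirect method: linearize $h$ at its unique fixed point $\bm{p}^*$ and show that the Jacobian $J = Dh(\bm{p}^*)$ satisfies $\rho(J)<1$ whenever~\eqref{eq_self_social_two_prod_suff_cond_local_stability} holds. The cleanest starting point is the scalar form already obtained inside the proof of Lemma~\ref{lem_self_social_h_invariant}, namely $h(\bm{x})_i = \delta_{12} + \delta_{11}\alpha_i x_{-i} + \eta_i x_i$ with $\eta_i = 1 - \delta_{12} - \delta_{21} - \delta_{22}\alpha_i + (\delta_{22}-\delta_{11})\alpha_i x_{-i}$. Since $x_{-i} = \sum_j \tilde{a}_{ij}x_j$ does not involve $x_i$ (because $\tilde{a}_{ii}=0$), differentiating gives a diagonal entry $J_{ii} = \eta_i$ and nonnegative off-diagonal entries $J_{ik} = \alpha_i\big(\delta_{11} + (\delta_{22}-\delta_{11})p_i^*\big)\tilde{a}_{ik}$ for $k\neq i$, all evaluated at $\bm{x}=\bm{p}^*$. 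Writing $g_i = \delta_{11} + (\delta_{22}-\delta_{11})p_i^*$, the nonnegativity of the off-diagonal block together with $\sum_k \tilde{a}_{ik}=1$ makes the induced $\infty$-norm immediate: $\rho(J) \le \|J\|_\infty = \max_i\big(|\eta_i| + \alpha_i g_i\big)$.

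It then suffices to prove $|\eta_i| + \alpha_i g_i < 1$ for every $i$, i.e. both $\eta_i + \alpha_i g_i < 1$ and $-\eta_i + \alpha_i g_i < 1$. Here I would invoke row-stochasticity for $R=2$, $\delta_{11}+\delta_{12}=\delta_{21}+\delta_{22}=1$, to rewrite the constant $1-\delta_{12}-\delta_{21}$ as $\delta_{11}+\delta_{22}-1$. A short computation then produces the two clean expressions $\eta_i + \alpha_i g_i = (\delta_{11}+\delta_{22}-1) + (\delta_{22}-\delta_{11})\alpha_i(p_i^*+p_{-i}^*-1)$ and $-\eta_i + \alpha_i g_i = (1-\delta_{11}-\delta_{22}) + (\delta_{11}+\delta_{22})\alpha_i + (\delta_{22}-\delta_{11})\alpha_i(p_i^*-p_{-i}^*)$.

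The first inequality is unconditional: by the lower bound $p_i^* \ge \tfrac12$ of Corollary~\ref{crly_self_social_bound_fixed_pt} the correction term is nonnegative, while the upper bound $p_i^*\le \delta_{12}/(\delta_{12}+\delta_{21})$ yields $p_i^*+p_{-i}^*-1 \le (\delta_{22}-\delta_{11})/(2-\delta_{11}-\delta_{22})$, after which $\eta_i + \alpha_i g_i < 1$ reduces to $(\delta_{22}-\delta_{11})^2\alpha_i < (2-\delta_{11}-\delta_{22})^2$, always true since $\alpha_i<1$ and $\delta_{22}<1$. The second inequality is the binding one and is where the stated threshold arises. Rearranged it reads $(\delta_{22}-\delta_{11})\alpha_i(p_i^*-p_{-i}^*) < (\delta_{11}+\delta_{22})(1-\alpha_i)$, and I would bound the left side using Corollary~\ref{crly_differenc_p_i_p_-i}, $p_i^*-p_{-i}^* \le \tfrac{1-\frac12\alpha_i}{\alpha_i}\cdot\tfrac{\delta_{22}-\delta_{11}}{\delta_{22}+\delta_{11}}$. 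Substituting and clearing denominators turns the requirement into $(\delta_{22}-\delta_{11})^2(2-\alpha_i) < 2(\delta_{11}+\delta_{22})^2(1-\alpha_i)$; using $(\delta_{11}+\delta_{22})^2-(\delta_{22}-\delta_{11})^2 = 4\delta_{11}\delta_{22}$ and solving the resulting linear inequality in $\alpha_i$ returns precisely~\eqref{eq_self_social_two_prod_suff_cond_local_stability}, since $(\delta_{22}-\delta_{11})^2+8\delta_{11}\delta_{22} = \delta_{11}^2+6\delta_{11}\delta_{22}+\delta_{22}^2$.

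The main obstacle I anticipate is the bookkeeping in this last reduction rather than anything conceptual: the difference bound of Corollary~\ref{crly_differenc_p_i_p_-i} is exactly tight enough to generate the factor $8\delta_{11}\delta_{22}$, so the crude estimate $p_i^*-p_{-i}^* \le \delta_{12}/(\delta_{12}+\delta_{21}) - \tfrac12$ is too weak and must be avoided. One must also note that $p_i^*-p_{-i}^*$ may be negative, in which case the second inequality only becomes easier, so using the stated upper bound is legitimate. Everything else is the routine passage from $\|J\|_\infty<1$ to local asymptotic stability of $\bm{p}^*$ for the system $\bm{p}(t+1)=h(\bm{p}(t))$.
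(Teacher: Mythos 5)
Your proof is correct, and it leans on exactly the same two key lemmas as the paper (the bounds $\tfrac12\le p_i^*\le \delta_{12}/(\delta_{12}+\delta_{21})$ of Corollary~\ref{crly_self_social_bound_fixed_pt} and the difference bound of Corollary~\ref{crly_differenc_p_i_p_-i}), ending in the identical algebraic reduction to~\eqref{eq_self_social_two_prod_suff_cond_local_stability}. Where you genuinely differ is in how you get from the Jacobian to the two scalar row conditions. The paper writes the linearization matrix $M$ as $\tilde M-\delta_{12}I$ with $\tilde M$ nonnegative, argues that $\tilde M=DA+E$ has a real spectrum, and then controls $\lambda_{\max}(\tilde M)$ by Perron--Frobenius plus the $\infty$-norm and $\lambda_{\min}(\tilde M)$ by Gershgorin; you instead bound $\rho(J)\le\|J\|_\infty=\max_i(|\eta_i|+\alpha_i g_i)$ in one stroke. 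Unwinding the paper's two conditions $\lambda_{\max}(\tilde M)<1+\delta_{12}$ and $\lambda_{\min}(\tilde M)>-1+\delta_{12}$ shows they are precisely your pair $\eta_i+\alpha_i g_i<1$ and $-\eta_i+\alpha_i g_i<1$, so nothing is lost: your route is more elementary and shows that the Metzler structure and the reality of the spectrum, while true and perhaps of independent interest (the paper remarks on the Metzler matrix explicitly), are dispensable for this proposition. Your side observations are also sound: the first inequality holds unconditionally since $(\delta_{22}-\delta_{11})^2\alpha_i<(2-\delta_{11}-\delta_{22})^2$ follows from $\alpha_i<1$ and $\delta_{11},\delta_{22}<1$, and a possibly negative $p_i^*-p_{-i}^*$ only relaxes the binding second inequality. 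One cosmetic point: the paper's displayed $\tilde M^{(1)}$ has $(1-\delta_{12})I$ where consistency with its own row-sum computation (and with your $\eta_i$) requires $(1-\delta_{21})I$; your Jacobian is the correct one.
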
  
\smallskip
\begin{proof}
Let $\bm{p}(t)=\bm{y}(t)+\bm{p}^*$. Then system~\eqref{eq_self_social_two_product_vector_form} becomes
\begin{equation*}
\bm{y}(t+1)=M\bm{y}+(\delta_{22}-\delta_{11})\diag(\bm{\alpha})\diag(\bm{y}(t))\tilde{A}\bm{y}(t).
\end{equation*}
The right-hand side of the equation above is a linear term $M\bm{y}(t)$ with a constant matrix $M$, plus a quadratic term. The matrix $M$ can be decomposed as $M=\tilde{M}-\delta_{12}I$ and $\tilde{M}=\tilde{M}^{(1)}+\tilde{M}^{(2)}$ is further decomposed into a diagonal matrix $\tilde{M}^{(1)}$ and a matrix $\tilde{M}^{(2)}$ in which all the diagonal entries are $0$.
Since 
\begin{equation*}
\begin{split}
\tilde{M}^{(1)} & = (1-\delta_{12})I-\delta_{22}\diag(\bm{\alpha})\\
& \quad + (\delta_{22}-\delta_{11})\diag(\bm{\alpha})\diag(\tilde{A}\bm{p}^*)
\end{split}
\end{equation*}
is a positive diagonal matrix, and 
\begin{equation*}
\tilde{M}^{(2)}=\delta_{11}\diag(\bm{\alpha})\tilde{A} + (\delta_{22}-\delta_{11})\diag(\bm{\alpha})\diag(\bm{p}^*)\tilde{A}
\end{equation*}
is a matrix with all the diagonal entries being zero and all the off-diagonal entries being nonnegative, the matrix $\tilde{M}=\tilde{M}^{(1)}+\tilde{M}^{(2)}$ is nonnegative and, thereby, the matrix $M$ is thus a Metzler matrix.


Since $\tilde{A}=\diag(\frac{1}{N_1},\frac{1}{N_2},\dots,\frac{1}{N_n})A$, the matrix $\tilde{M}$ can be written in the form $DA+E$, where $A$ is symmetric and $D$, $E$ are positive diagonal matrix. One can easily prove that all the eigenvalues of any matrix in the form $\tilde{M}=DA+E$ are real since $\tilde{M}$ is similar to the symmetric matrix $D^{\frac{1}{2}}(A+D^{-1}E)D^{\frac{1}{2}}$.

The local stability of $\bm{p}^*$ is equivalent to the inequality $\rho(M)<1$, which is in turn equivalent to the intersection of the following two conditions: $\lambda_{\max}(\tilde{M})<1+\delta_{12}$ and $\lambda_{\min}(\tilde{M})>-1+\delta_{12}$.


First we prove $\lambda_{\max}(\tilde{M})<1+\delta_{12}$. Since $A$ is irreducible and $\bm{\alpha}\succ \vectorzeros[n]$, $\bm{p}^*\succ \vectorzeros[n]$, we have $\tilde{M}_{ij}>0$ if and only if $a_{ij}>0$ for any $i\neq j$. In addition, $\tilde{M}_{ii}>0$ for any $i\in V$. Therefore, $\tilde{M}$ is irreducible, aperiodic and thus primitive. According to the Perron-Frobenius theorem, $\lambda_{\max}(\tilde{M})=\rho(\tilde{M})$. We have $\rho(\tilde{M})\le \lVert \tilde{M} \rVert_{\infty}$ and for any $i\in V$,
\begin{equation*}
\sum_{j} \lvert \tilde{M}_{ij} \rvert = 1-\delta_{21}+(\delta_{22}-\delta_{11})\big( \alpha_i(p_{-i}^*+p_i^*)-\alpha_i \big).
\end{equation*}
According to Corollary~\ref{crly_self_social_bound_fixed_pt}, for any $i\in V$,
\begin{equation*}
1-\delta_{21}\le \sum_{j} \lvert \tilde{M}_{ij} \rvert \le 1-\delta_{21} + \frac{(\delta_{12}-\delta_{21})^2}{\delta_{12}+\delta_{21}}\alpha_i < 1+\delta_{12}.
\end{equation*}
Therefore, 
\begin{equation*}
\lambda_{\max}(\tilde{M})\le 1-\delta_{21}+\frac{(\delta_{12}-\delta_{21})^2}{\delta_{12}+\delta_{21}}\alpha_i < 1+\delta_{12}.
\end{equation*}

Now we prove $\lambda_{\min}(\tilde{M})>-1+\delta_{12}$. According to the Gershgorin circle theorem,
\begin{equation*}
\lambda_{\min}(\tilde{M})\ge \min_i (\tilde{M}_{ii}-\sum_{j\neq i} \lvert \tilde{M}_{ij} \rvert).
\end{equation*}
For any $i\in V$,
\begin{equation*}
\begin{split}
\tilde{M}_{ii}- & \sum_{j\neq i}\lvert \tilde{M}_{ij} \rvert \\
& = 1-\delta_{21}-\delta_{22}\alpha_i+(\delta_{22}-\delta_{11})\alpha_i \sum_j \tilde{a}_{ij}p_j^*\\
& \quad -\delta_{11}\alpha_i - (\delta_{22}-\delta_{11})\alpha_i p_i^*\\
& = 1-\delta_{21}-\alpha_i (\delta_{22}+\delta_{11})\\
& \quad -\alpha_i (\delta_{22}-\delta_{11})(p_i^*-p_{-i}^*).
\end{split}
\end{equation*}
According to Corollary~\ref{crly_differenc_p_i_p_-i},
\begin{equation*}
p_i^*-p_{-i}^* \le \frac{1-\frac{1}{2}\alpha_i}{\alpha_i} \frac{\delta_{22}-\delta_{11}}{\delta_{22}+\delta_{11}}.
\end{equation*}
Moreover, inequality~\eqref{eq_self_social_two_prod_suff_cond_local_stability} is necessary and sufficient to
\begin{equation*}
\frac{1-\frac{1}{2}\alpha_i}{\alpha_i} \frac{\delta_{22}-\delta_{11}}{\delta_{22}+\delta_{11}} < \frac{1-\alpha_i}{\alpha_i} \frac{\delta_{22}+\delta_{11}}{\delta_{22}-\delta_{11}}.
\end{equation*}
Therefore, 
\begin{equation*}
\begin{split}
\tilde{M}_{ii}- & \sum_{j\neq i}\lvert \tilde{M}_{ij} \rvert\\
& > 1-\delta_{21}-\alpha_i (\delta_{22}+\delta_{11})-(1-\alpha_i)(\delta_{22}+\delta_{11})\\
& = -1+\delta_{12},
\end{split}
\end{equation*}
for any $i\in V$. That is to say, the inequality~\eqref{eq_self_social_two_prod_suff_cond_local_stability} is sufficient to $\rho(M)<1$, i.e., the local stability of $\bm{p}^*$.
\end{proof}

From the proof we know that, around the unique fixed point, the linearized system is $\bm{y}(t+1)=M\bm{y}(t)$, where $M$ is a Metzler matrix and is Hurwitz stable. Usually the Metzler matrices are presented in continuous-time network dynamics models, for example, the epidemic spreading model~\cite{AF-AI-GS-JJT:07,AK-TB-BG:14}. In the proof of Proposition~\ref{prop_local_stability_two_product}, we provide an example for which the Metzler matrix appears in a stable discrete-time system. 

In the proposition below, a sufficient condition on the global stability of $\bm{p}^*$ is given.
\begin{proposition}[Sufficient condition on global stability for two-product self-social NCPM with $\delta_{22}>\delta_{11}$]\label{prop_global_stability_sufficient}
Consider the two-product self-social NCPM on the connected network with adjacency matrix $A$. Suppose $\delta_{22}>\delta_{11}$. If
\begin{equation}\label{eq_self_social_two_prod_suff_cond_global_stability}
\alpha_i<\frac{\delta_{22}+\delta_{11}}{3\delta_{22}-\delta_{11}},
\end{equation}
then
\begin{enumerate}
\item for any initial condition $\bm{p}(0)\in [0,1]^n$, the sequence $\{\bm{p}(t)\}_{t\in \mathbb{N}}$ defined by $\bm{p}(t+1)=f\big( \bm{p}(t) \big)$ satisfies $\bm{p}(t)\rightarrow \bm{p}^*$ exponentially fast as $t\rightarrow \infty$;

\item and moreover, the convergence rate is  upper bounded by $\max_i \big( \max(\epsilon_i, K_i\epsilon_i+K_i-1) \big)$, where $\epsilon_i$ and $K_i$ are respectively defined by $\epsilon_i=(2\delta_{22}-\delta_{11})\alpha_i/K_i$, and $K_i=\delta_{12}+\delta_{21}+\delta_{22}\alpha_i$.
\end{enumerate}
\end{proposition}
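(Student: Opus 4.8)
The plan is to show that, under hypothesis~\eqref{eq_self_social_two_prod_suff_cond_global_stability}, the map $h$ of~\eqref{eq_self_social_two_prod_h_map} is a contraction of the complete metric space $([0,1]^n,\lVert\cdot\rVert_\infty)$, so that the Banach fixed point theorem yields global exponential convergence of the dynamics $\bm{p}(t+1)=h(\bm{p}(t))$ in~\eqref{eq_self_social_two_product_vector_form} to the unique fixed point $\bm{p}^*$ already furnished by Theorem~\ref{thm_self_social_fixed_point_two_product}. Invariance of $[0,1]^n$ under $h$ is exactly Lemma~\ref{lem_self_social_h_invariant}, so the only genuine work is a contraction estimate for $h$ itself. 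Note that the earlier analysis only established that the auxiliary map $T$ of~\eqref{eq_map_T} is a contraction; since the $h$-orbit and the $T$-orbit are distinct, that estimate does not transfer for free.

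The key device I would use is an exact algebraic identity relating $h$ and $T$. Writing $K=\diag(K_i)$ with $K_i=\delta_{12}+\delta_{21}+\delta_{22}\alpha_i$, a direct comparison of $h(\bm{x})-\bm{x}$ with $K\big(T(\bm{x})-\bm{x}\big)$ shows the two vectors coincide, so that
\begin{equation*}
h(\bm{x})=(I-K)\bm{x}+K\,T(\bm{x}).
\end{equation*}
Thus $h$ is an entrywise relaxation of the contraction $T$ with coefficient $K_i$ in coordinate $i$. Since $\bm{p}^*$ is the common fixed point of $h$ and $T$, subtracting gives coordinatewise
\begin{equation*}
h(\bm{x})_i-p_i^*=(1-K_i)(x_i-p_i^*)+K_i\big(T(\bm{x})_i-T(\bm{p}^*)_i\big).
\end{equation*}
Invoking the per-coordinate bound $\lvert T(\bm{x})_i-T(\bm{p}^*)_i\rvert\le\epsilon_i\lVert\bm{x}-\bm{p}^*\rVert_\infty$ established inside the proof of Theorem~\ref{thm_self_social_fixed_point_two_product}, with $\epsilon_i=(2\delta_{22}-\delta_{11})\alpha_i/K_i<1$, the triangle inequality delivers $\lvert h(\bm{x})_i-p_i^*\rvert\le\big(\lvert1-K_i\rvert+K_i\epsilon_i\big)\lVert\bm{x}-\bm{p}^*\rVert_\infty$.

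It then remains to force the per-coordinate factor $\lvert1-K_i\rvert+K_i\epsilon_i$ strictly below one, and this is where the hypothesis enters. I would split into two cases. If $K_i\le1$ the factor equals $1-K_i+K_i\epsilon_i$, which is already below one since $\epsilon_i<1$. If $K_i\ge1$ the factor equals $K_i\epsilon_i+K_i-1$; substituting $K_i\epsilon_i=(2\delta_{22}-\delta_{11})\alpha_i$ together with $\delta_{12}=1-\delta_{11}$ and $\delta_{21}=1-\delta_{22}$ simplifies it to $1-\delta_{11}-\delta_{22}+(3\delta_{22}-\delta_{11})\alpha_i$, which is strictly less than one precisely when $\alpha_i<\frac{\delta_{11}+\delta_{22}}{3\delta_{22}-\delta_{11}}$ — exactly hypothesis~\eqref{eq_self_social_two_prod_suff_cond_global_stability}. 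Maximizing over $i$ produces a global contraction constant $\max_i(\lvert1-K_i\rvert+K_i\epsilon_i)<1$, which reduces to the reported $\max_i\big(\max(\epsilon_i,K_i\epsilon_i+K_i-1)\big)$ in the over-relaxed coordinates that dictate the bound; the contraction mapping theorem then gives $\bm{p}(t)\to\bm{p}^*$ for every $\bm{p}(0)\in[0,1]^n$ with $\lVert\bm{p}(t)-\bm{p}^*\rVert_\infty\le c^t\lVert\bm{p}(0)-\bm{p}^*\rVert_\infty$, yielding both claims simultaneously.

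The principal obstacle is the quadratic nonlinearity $\diag(\bm{x})\tilde{A}\bm{x}$, which blocks a direct spectral argument and is why the contraction of $T$ cannot simply be carried over. The relaxation identity tames this by reducing the estimate for $h$ to the already-available estimate for $T$, at the cost of the multiplier $K_i$. The genuinely delicate point is that $K_i$ may exceed one (over-relaxation), so the relaxation can amplify rather than damp the error; controlling this amplification is exactly what the threshold $\alpha_i<\frac{\delta_{11}+\delta_{22}}{3\delta_{22}-\delta_{11}}$ secures, and checking that this algebraic threshold is the sharp condition for $K_i\epsilon_i+K_i-1<1$ is the crux of the proof.
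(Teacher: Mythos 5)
Your proposal is correct and follows essentially the same route as the paper: the identity $h(\bm{x})=KT(\bm{x})+(I-K)\bm{x}$, the per-coordinate bound $\lvert T(\bm{x})_i-T(\bm{y})_i\rvert\le\epsilon_i\lVert\bm{x}-\bm{y}\rVert_\infty$ inherited from Theorem~\ref{thm_self_social_fixed_point_two_product}, and the case split on $K_i\lessgtr 1$ with the algebraic reduction of $K_i\epsilon_i+K_i-1<1$ to hypothesis~\eqref{eq_self_social_two_prod_suff_cond_global_stability} are exactly the paper's argument. The only cosmetic difference is that you estimate distances to $\bm{p}^*$ rather than between two arbitrary points, which the same per-coordinate bound covers anyway.
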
 
\smallskip
\begin{proof}
Observe that the maps $h$ and $T$ satisfy the following relation:
\begin{equation*}
h(\bm{x})=KT(\bm{x})+(I-K)\bm{x},
\end{equation*}
for any $\bm{x}\in [0,1]^n$, where $K=(\delta_{12}+\delta_{21})I+\delta_{22}\diag(\bm{\alpha})$. For any $\bm{x}$, $\bm{y}\in [0,1]^n$,
\begin{equation*}
\begin{split}
\lvert h(\bm{x})_i- & h( \bm{y})_i \rvert\\
& = \lvert K_i\big( T(\bm{x})_i-T(\bm{y})_i \big) + (1-K_i)(x_i-y_i) \rvert.
\end{split}
\end{equation*}
We estimate the upper bound of $\lvert h(\bm{x})_i-h(\bm{y})_i \rvert$ in terms of $\lVert \bm{x}-\bm{y} \rVert_{\infty}$ in two cases.

\emph{Case 1:} $\delta_{12}+\delta_{21}+\delta_{22}\alpha_i<1$ for any $i$, i.e, $\alpha_i<\frac{\delta_{11}}{\delta_{22}}+1-\frac{1}{\delta_{22}}$ for any $i$. First we point out that
\begin{equation*}
\frac{\delta_{11}}{\delta_{22}}+1-\frac{1}{\delta_{22}}<\frac{\delta_{11}+\delta_{22}}{3\delta_{22}-\delta_{11}}
\end{equation*}
always holds as long as $\delta_{11}<\delta_{22}$. Then recall that, for any $\bm{x}$, $\bm{y}\in [0,1]^n$,
\begin{equation*}
\lvert T(\bm{x})_i-T(\bm{y})_i \rvert \le \epsilon_i \lVert \bm{x}-\bm{y} \rVert_{\infty},
\end{equation*}
where $\epsilon_i=\frac{(2\delta_{22}-\delta_{11})\alpha_i}{K_i}<1$. Therefore,
\begin{equation*}
\lvert h(\bm{x})_i-h(\bm{y})_i \rvert \le (K_i\epsilon_i+1-K_i) \lVert \bm{x}-\bm{y} \rVert_{\infty},
\end{equation*}
for any $i \in V$. The coefficient $K_i\epsilon_i+1-K_i$ is always strictly less than $1$ because it is a convex combination of $\epsilon_i<1$ and $1$. Therefore, $h$ is a contraction map.

\emph{Case 2:} There exists some $i$ such that $\delta_{12}+\delta_{21}+\delta_{22}\alpha_i \ge 1$, i.e., $\alpha_i \ge \frac{\delta_{11}}{\delta_{22}}+1-\frac{1}{\delta_{22}}$. In this case, for any such $i$,
\begin{equation*}
\lvert h(\bm{x})_i-h(\bm{y})_i \rvert \le (K_i\epsilon_i+K_i-1) \lVert \bm{x}-\bm{y} \rVert_{\infty}.
\end{equation*}
If 
\begin{equation*}
\alpha_i<\frac{\delta_{11}+\delta_{22}}{3\delta_{22}-\delta_{11}},
\end{equation*}
then we have
\begin{equation*}
\begin{split}
K_i\epsilon_i+K_i-1 & = (3\delta_{22}-\delta_{11})\alpha_i + \delta_{12}+\delta_{21}-1\\
& < \delta_{11}+\delta_{22}+\delta_{12}+\delta_{21}-1\\
& = 1.
\end{split}
\end{equation*}
Therefore, $h$ is also a contraction map.

Combining Case 1 and Case 2 we conclude that if $\alpha_i<\frac{\delta_{11}+\delta_{22}}{3\delta_{22}-\delta_{11}}$ for any $i\in V$, then $h$ is a contraction map. According to Lemma~\ref{lem_self_social_h_invariant}, $h$ maps $[0,1]^n$ to $[0,1]^n$. Therefore, according to the Banach fixed point theorem, for any initial condition $\bm{p}(0) \in [0,1]^n$, the solution $\bm{p}(t)$ converges to $\bm{p}^*$ exponentially fast and the convergence rate is upper bounded by $\max_i \big( \max(\epsilon_i,K_i\epsilon_i+K_i-1) \big)$.
\end{proof}

Figure~\ref{fig_self_soc_loc_glo_cond_stability} plots the right-hand sides of inequalities~\eqref{eq_self_social_two_prod_suff_cond_local_stability} and~\eqref{eq_self_social_two_prod_suff_cond_global_stability}, respectively, as functions of the ratio $\frac{\delta_{11}}{\delta_{22}}$, for the case when $0<\frac{\delta_{11}}{\delta_{22}}<1$. One can observe that, for a large range of $\frac{\delta_{qq}}{\delta_{22}}$, the sufficient condition we propose for the global stability is more conservative than the sufficient condition for the local stability.

\begin{figure}
\begin{center}
\includegraphics[width=.6\linewidth]{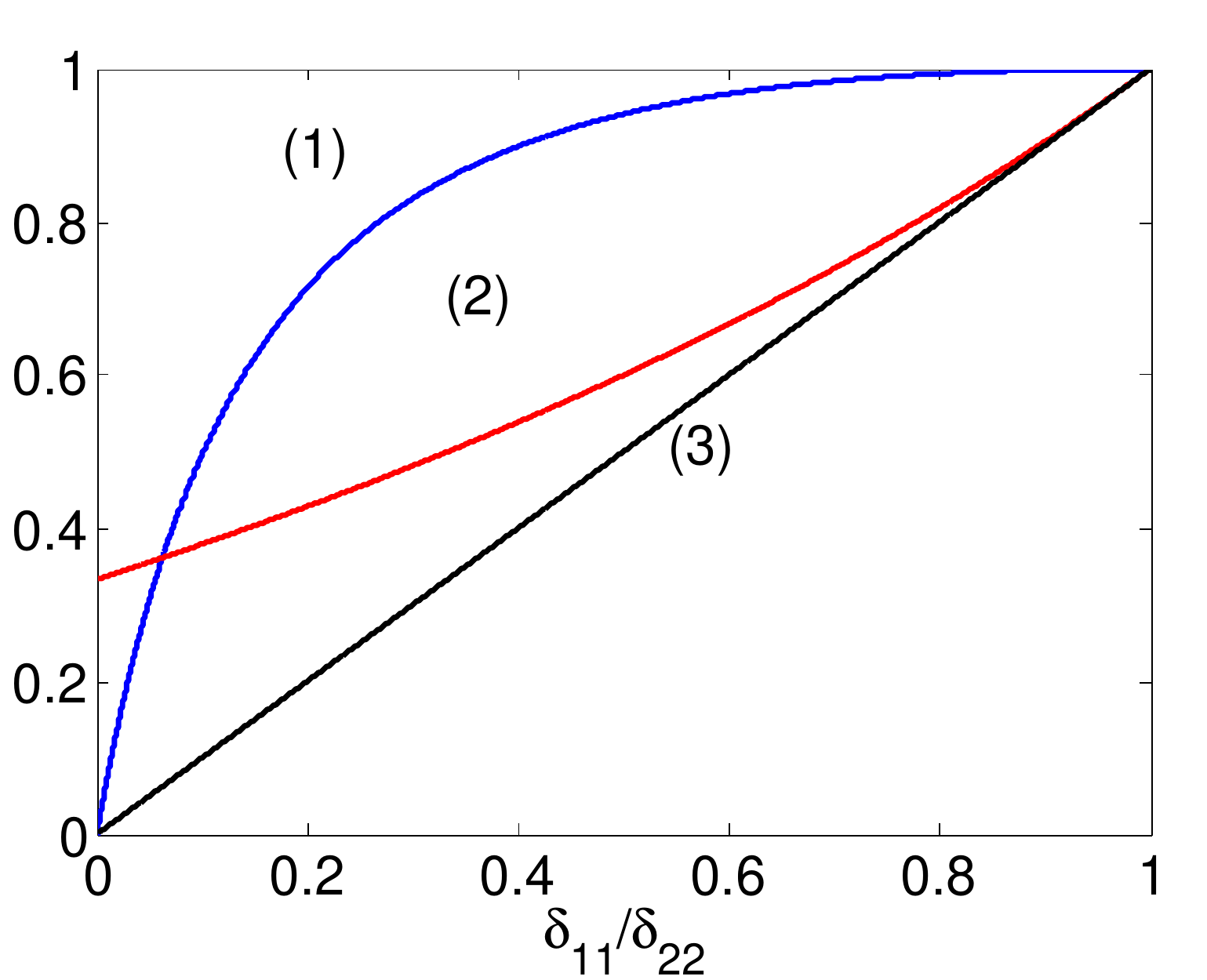}
\caption{This figure illustrates how the conditions for the local stability and global stability change with the ratio $\delta_{11}/\delta_{22}$. Curve (1) is $8\delta_{11}\delta_{22}/((\delta_{22})-\delta_{11})^2+8\delta_{11}\delta_{22})$, i.e, corresponding to the condition for local stability. Curve (2) is $(\delta_{22}+\delta_{11})/(3\delta_{22}-\delta_{11})$, corresponding to the condition for global stability. Curve (3) is $\delta_{11}/\delta_{22}$.}
\label{fig_self_soc_loc_glo_cond_stability}
\end{center}
\end{figure}

One major difference between the self-social and the social-self NCPM in the asymptotic property is that, in the self-social NCPM, every individual's state probability distribution is not necessarily identical. Moreover, distinct from the social-self NCPM, for any of the four cases of $G(\Delta)$ defined in Definition~\ref{def_foure_cases_product_conv_graph}, the asymptotic behavior of the self-social NCPM depends on not only the structure of $G(\Delta)$, but also the structure of the social network $G(\tilde{A})$ and the individual open-mindedness $\bm{\alpha}$. 


\section{Non-cooperative Multi-stage Competitive Propagation Games}

In this section, based on the social-self NCPM given by equation~\eqref{eq_mf_social_self_matrix}, we propose two types of non-cooperative, multi-player and multi-stage games. The players are the $R$ companies, each of which has a product competing in the social network. At each time step, based on the system's current product-adoption probability distribution, all the companies make decisions on the allocation of their investments, with limited budget, to maximize the probability that their products are adopted after the current time step. In the first subsection we discuss the model in which each company can invest both on seeding, e.g., advertisement and promotion, and on the product's quality; Then, in the second subsection, we discuss the model in which the products' quality is fixed and the companies can only invest on seeding.

All the notations in Table~\ref{table_notations} and the previous sections still apply and, in Table~\ref{table_notations_secV}, we introduce some additional notations and functions exclusively for this section.
\begin{table}[htbp]\caption{Notations and functions used in Section V}\label{table_notations_secV}
\begin{center}
\begin{tabular}{r p{5.8cm} }
\toprule
$X(t)$ & seeding matrix at time $t$. $X(t)=\big( x_{ir}(t) \big)_{n\times R}$, where $x_{ir}(t)\ge 0$ is company $r$'s investment on seeding for individual $i$. $\bm{x}_r(t)$ is the $r$-th column of $X(t)$ and $\bm{x}^{(i)}(t)$ is the $i$-th column of $X(t)$\\
$\bm{w}(t)$ & the quality investment vector at time $t$. $\bm{w}(t)\in \mathbb{R}^{R\times 1}$, and each entry $w_r(t)\ge 0$ is company $r$'s investment at time $t$ on product $H_r$'s quality\\ 
$\bm{c}$ & the budget vector. $\bm{c}\in \mathbb{R}^{R\times 1}$ and $\bm{c}\succ \vectorzeros[R]$. entry $c_r$ is the budget limit for company $r$\\
$\psi_r(\bm{x}^{(i)};\gamma)$ & $\psi_r:\mathbb{R}_{\ge 0}^{1\times R}\rightarrow \mathbb{R}_{\ge 0}$ defined by $\psi_r(\bm{x}^{(i)};\gamma)=x_{ir}/(\bm{x}^{(i)}\vectorones[R]+\gamma)$, with model parameter $\gamma >0$\\
$g_r(\bm{w})$ & $g_r:\mathbb{R}_{\ge 0}^{R\times 1} \rightarrow \mathbb{R}_{\ge 0}$ defined by $g_r(\bm{w}) = w_r/\vectorones[R]^{\top}\bm{w}$, if $w_r>0$; $g_r(\bm{w})=0$, if $w_r=0$\\
$\Omega$ & the set of all the players' possible actions. $\Omega = \{(X,\bm{w})| X\succeq \vectorzeros[n\times R],\bm{w}\succeq \vectorzeros[R], \vectorones[n]^{\top}X+\bm{w}^{\top}\preceq \bm{c}^{\top}\}$ \\
$\Omega_r$ & the action space for company $r$. $\Omega_r=\{(\bm{x}_r,w_r) \,|\, \bm{x}_r\succeq \vectorzeros[R],w_r\ge 0, \vectorones[n]^{\top}\bm{x}_r+w_r\le c_r\}$  \\
$\bm{\beta}_r(t)$ & $\bm{\beta}_r(t)=\big( \beta_{1r}(t),\dots,\beta_{nr}(t) \big)^{\top}=\tilde{A}\bm{p}_r(t)$\\
\bottomrule
\end{tabular}
\end{center}
\end{table} 
 
\subsection{Model 1: competitive seeding-quality game}
\emph{a) Model set-up:} 
The multi-stage competitive seeding-quality game is formalized as follows.

\emph{a.1) Players:} The players are the $R$ companies. Each company $r$ has a product $H_r$ competing on the network.

\emph{a.2) Players' actions:} At each time step, each company $r$ has two types of investments. The investment on seeding, i.e., $\bm{x}_r(t)$, changes the individuals' product-adoption probability in the social conversion process, while the investment on quality, i.e., $w_r(t)$, influences the product-conversion graph. The total investment is bounded by a fixed budget $c_r$, i.e., $\vectorones[n]^{\top}x_r(t)+w_r(t)\le c_r$. 

\emph{a.3) Rules:} We model the effect of investment on seeding as follows. For any individual $i\in V$, each company $r$'s investment $x_{ir}(t)$ creates a "virtual node" in the network, who is always adopting the product $H_r$. 
In the social conversion process, the probability that individual $i$ picks any company $r$'s virtual node is $\psi_r\big( \bm{x}^{(i)}(t);\gamma \big)$ for any $i\in V$ and $r\in \Theta$. The probability that individual $i$ picks individual $j$ in the social conversion process is then given by $\Big( 1-\sum_{s=1}^R \psi_s\big( \bm{x}^{(i)};\gamma \big) \Big)\tilde{a}_{ij}$. As for the investment on product quality, we assume that the product-conversion graph is associated with a rank-one adjacency matrix $[\delta_1 \vectorones[n],\delta_2 \vectorones[n],\dots,\delta_R \vectorones[n]]$ and $\delta_r=g_r(\bm{w}(t))$ is determined by all the companies' investments on product quality. We exclude the non-realistic case when $\bm{w}=\vectorzeros[R]$ by imposing an additional condition on the budget vector $\bm{c}$, which is specified later in this subsection. With each company $r$'s action $\big( \bm{x}_r(t),w_r(t) \big)$ at time $t$, the dynamics of the product-adoption probabilities $P(t)\in \mathbb{R}^{n\times R}_{\ge 0}$ is given by
\begin{equation}\label{eq_game_dyn_seeding_quality}
\begin{split}
p_{ir}(t)&= \alpha_i \frac{\gamma}{\bm{x}^{(i)}(t)\vectorones[R]+\gamma} \sum_{k=1}^n \tilde{a}_{ik} p_{kr}(t-1)\\
&\quad + \alpha_i\psi_r \big( \bm{x}^{(i)}(t);\gamma \big)+(1-\alpha_i)g_r\big( \bm{w}(t) \big),
\end{split}
\end{equation} 
for any $i\in V$ and $r\in \Theta$.

\emph{a.4) Pay-offs:} At each time step $t$, the pay-off for any company $r$, denoted by $u_r\big( X(t),\bm{w}(t) \big)$, is the total adoption probability of $H_r$, i.e., $u_r\big( X(t),\bm{w}(t) \big)=\vectorones[n]^{\top}\bm{p}_r(t)$, where $\bm{p}_r(t)$ is the $r$-th column vector of $P(t)$.

Any Nash equilibrium $\big(X^*(t),\bm{w}^*(t)\big)$ at stage $t$ for the multi-stage game defined above satisfies that, for any $r$,
\begin{equation}\label{eq_game_def_Nash}
u_r\big( X^*(t),\bm{w}^*(t) \big) \ge u_r\big( X(t),\bm{w}(t) \big),
\end{equation}
for any $\big( X(t),\bm{w}(t) \big)\in \Omega$ such that $\bm{x}_s(t)=\bm{x}_s^*(t)$ and $w_s(t)=w_s^*(t)$ for any $s\neq r$, and $\bm{x}_r(t)\neq \bm{x}_r^*(t)$ and $w_r(t)\neq w_r^*(t)$.

\emph{b) Theoretical Analysis of Nash equilibrium and system's dynamical behavior:}
The following theorem gives a closed-form expression of the Nash equilibrium at each stage and the system's asymptotic behavior when every player is adopting the policy at the Nash equilibrium.
\begin{theorem}[Competitive seeding-quality game]\label{thm_game_seeding_quality}
Consider the $R$-player multi-stage game described above in this subsection. Further assume that the budget limit $c_r$ for any company $r$ satisfies 
\begin{equation}\label{eq_game_seeding_quality_condition_budget}
c_r>\max\Big\{n\gamma \frac{\vectorones[n]^{\top}(\vectorones[n]-\bm{\alpha})}{\vectorones[n]^{\top}\bm{\alpha}}, (\frac{n}{\min_i \alpha_i}-1)\gamma\Big\}.
\end{equation}
Then we have the following conclusions: 

i) at each time step $t+1$, the Nash equilibrium $\big( X^*(t+1),\bm{w}^*(t+1) \big)$ is given by
\begin{align}
\label{eq_game_seeding_quality_Nash_eq_x} x_{ir}^*(t+1)&=\frac{\alpha_i}{n}c_r + \frac{\alpha_i \gamma}{n}\vectorones[n]^{\top}\bm{\beta}_r(t) - \beta_{ir}(t)\gamma,\\
\label{eq_game_seeding_quality_Nash_eq_w} w_r^*(t+1)&=\left(1-\frac{\vectorones[n]^{\top}\bm{\alpha}}{n}\right)\big( c_r+\vectorones[n]^{\top}\bm{\beta}_r(t) \gamma \big),
\end{align}
and $x^*_{ir}(t)>0$, $w^*_r(t)>0$ for any $i\in V,r\in \Theta$; 

ii) if $\big(X(t),\bm{w}(t)\big)=\big( X^*(t),\bm{w}^*(t) \big)$ for
any $t\in \mathbb{N}$ and $P(0)\in S_{nR}(\vectorones[n])$, then
$P(t)$, for $t\ge 1$, is a rank-one matrix of the form $\big[
  p_1(t)\vectorones[n],p_2(t)\vectorones[n],\dots,p_R(t)\vectorones[n]
  \big]$. Moreover, for any $r\in \Theta$, as $t\rightarrow \infty$, $p_r(t)$ converges to $c_r/\vectorones[R]^{\top}\bm{c}$ exponentially fast with the rate $n\gamma / (\vectorones[R]^{\top}\bm{c}+n\gamma)$.
\end{theorem}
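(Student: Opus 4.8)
The plan is to prove part (i) by recognizing that, with the rivals' actions held fixed, each company $r$ faces a \emph{concave} maximization over the budget simplex $\Omega_r$, so that a single Karush--Kuhn--Tucker (KKT) point is its unique best response; part (ii) then follows by substituting the resulting closed form back into~\eqref{eq_game_dyn_seeding_quality}. First I would make company $r$'s pay-off explicit: writing $S_i=\sum_{s=1}^R x_{is}(t+1)$ and $W=\sum_{s=1}^R w_s(t+1)$, equation~\eqref{eq_game_dyn_seeding_quality} gives
\[
u_r=\sum_{i=1}^n\alpha_i\,\frac{\gamma\beta_{ir}(t)+x_{ir}(t+1)}{S_i+\gamma}+\Big(\textstyle\sum_{i=1}^n(1-\alpha_i)\Big)\frac{w_r(t+1)}{W}.
\]
Because $\beta_{ir}(t)\le 1$ forces $\gamma\beta_{ir}(t)\le\gamma\le\big(S_i-x_{ir}(t+1)\big)+\gamma$, each seeding summand is concave in $x_{ir}(t+1)$; the quality term is concave in $w_r(t+1)$; and since the two groups of variables are separable, $u_r$ is jointly concave on $\Omega_r$, so it suffices to exhibit a KKT point.

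Second, I would verify that the claimed $\big(X^*(t+1),\bm{w}^*(t+1)\big)$ is a KKT point with an active budget constraint. The algebraic crux is the cancellation $\gamma\beta_{ir}(t)+x_{ir}^*(t+1)=\frac{\alpha_i}{n}\big(c_r+\gamma\vectorones[n]^{\top}\bm{\beta}_r(t)\big)$ together with, after summing the proposed actions over all companies and invoking $\sum_{r}\bm{\beta}_r(t)=\tilde{A}\vectorones[n]=\vectorones[n]$ and $\sum_{r}\vectorones[n]^{\top}\bm{\beta}_r(t)=n$, the uniform totals $S_i+\gamma=\tfrac{\alpha_i}{n}\big(\vectorones[R]^{\top}\bm{c}+n\gamma\big)$ and $W=\big(1-\tfrac1n\vectorones[n]^{\top}\bm{\alpha}\big)\big(\vectorones[R]^{\top}\bm{c}+n\gamma\big)$. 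Substituting these, both $\partial u_r/\partial x_{ir}$ and $\partial u_r/\partial w_r$ collapse to the same $i$-independent value
\[
\lambda_r=\frac{n\big(\vectorones[R]^{\top}\bm{c}+n\gamma-c_r-\gamma\vectorones[n]^{\top}\bm{\beta}_r(t)\big)}{\big(\vectorones[R]^{\top}\bm{c}+n\gamma\big)^2}>0,
\]
which serves as the common multiplier; a direct summation also confirms $\vectorones[n]^{\top}\bm{x}_r^*(t+1)+w_r^*(t+1)=c_r$, so the constraint is active and stationarity holds.

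Third --- and this is where hypothesis~\eqref{eq_game_seeding_quality_condition_budget} is consumed and where I expect the main obstacle --- I would check feasibility of the interior point, i.e.\ $x_{ir}^*(t+1)>0$ and $w_r^*(t+1)>0$, so that the KKT point is legitimate. Positivity of $w_r^*$ is immediate from $\alpha_i<1$. For the seeding entries I would bound $\frac{\alpha_i}{n}\big(c_r+\gamma\vectorones[n]^{\top}\bm{\beta}_r(t)\big)-\gamma\beta_{ir}(t)$ from below using $0\le\beta_{ir}(t)\le 1$ and $\vectorones[n]^{\top}\bm{\beta}_r(t)\le n$; the two terms of the $\max$ in~\eqref{eq_game_seeding_quality_condition_budget} are calibrated precisely to keep this lower bound positive for \emph{every} admissible $\bm{\beta}_r(t)$, including the unstructured initial value at $t=0$. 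This uniform-over-$\bm{\beta}_r(t)$ bookkeeping is the delicate part; strict concavity then promotes the verified KKT point to the unique best response, which characterizes the Nash equilibrium and completes part (i).

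Finally, for part (ii) I would substitute the equilibrium actions into~\eqref{eq_game_dyn_seeding_quality}. Using the uniform quantities above, the social and quality contributions share the common factor $\big(c_r+\gamma\vectorones[n]^{\top}\bm{\beta}_r(t)\big)/\big(\vectorones[R]^{\top}\bm{c}+n\gamma\big)$, so $p_{ir}(t+1)$ loses all dependence on $i$; hence $P(t+1)$ is rank one of the stated form for every $t\ge 0$. Denoting the common column entry by $p_r(t)$, rank-oneness gives $\bm{\beta}_r(t)=p_r(t)\vectorones[n]$ and thus $\vectorones[n]^{\top}\bm{\beta}_r(t)=n\,p_r(t)$ for $t\ge 1$, reducing the dynamics to the scalar affine recursion
\[
p_r(t+1)=\frac{n\gamma}{\vectorones[R]^{\top}\bm{c}+n\gamma}\,p_r(t)+\frac{c_r}{\vectorones[R]^{\top}\bm{c}+n\gamma}.
\]
This contraction has fixed point $c_r/\vectorones[R]^{\top}\bm{c}$ and factor $n\gamma/\big(\vectorones[R]^{\top}\bm{c}+n\gamma\big)$, giving the claimed exponential convergence; preservation of $\sum_r p_r(t)=1$ follows once more from $\sum_r\vectorones[n]^{\top}\bm{\beta}_r(t)=n$.
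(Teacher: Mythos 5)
Your proposal is correct: the payoff decomposition, the concavity of each summand in company $r$'s own variables, the identity $\gamma\beta_{ir}(t)+x^*_{ir}(t+1)=\tfrac{\alpha_i}{n}\big(c_r+\gamma\vectorones[n]^{\top}\bm{\beta}_r(t)\big)$, the uniform totals $S_i+\gamma=\tfrac{\alpha_i}{n}(\vectorones[R]^{\top}\bm{c}+n\gamma)$ and $W=(1-\tfrac{1}{n}\vectorones[n]^{\top}\bm{\alpha})(\vectorones[R]^{\top}\bm{c}+n\gamma)$, the common multiplier $\lambda_r$, the positivity bookkeeping via $c_r>(\tfrac{n}{\min_i\alpha_i}-1)\gamma$, and the scalar recursion in part (ii) all check out against the model equations. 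The route, however, is genuinely different from the paper's. The paper works \emph{forward} from the coupled stationarity conditions of all $R$ players: it forms the Lagrangian, eliminates $\mu_r$, shows certain ratios are independent of the product index $r$ and then of the node index $i$, deduces $\tilde{x}^*_{ir}/\tilde{x}^*_{jr}=\alpha_i/\alpha_j$, and solves the resulting linear relations for the closed forms. You instead \emph{guess and verify}: you substitute the claimed equilibrium, show all partial derivatives collapse to a single $i$-independent multiplier, and invoke concavity to promote the KKT point to a best response. Your approach buys something the paper omits: the paper only ever uses first-order \emph{necessary} conditions and never argues that its stationary point maximizes problem~\eqref{problem_seeding_quality_optimization}, so your concavity observation (each seeding summand is $1-(c-\gamma\beta_{ir})/(x_{ir}+c)$ with $c\ge\gamma\ge\gamma\beta_{ir}$, hence concave, and the variables separate) closes a real logical gap in establishing that the proposed profile is in fact a Nash equilibrium. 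What the paper's forward derivation buys in exchange is a uniqueness statement: any \emph{interior} equilibrium must satisfy the coupled first-order system, and the paper's elimination shows that system pins down the closed form, whereas your strict-concavity remark only yields uniqueness of the best response to the fixed rival profile, not uniqueness of the equilibrium itself. A complete treatment would combine the two: your verification plus concavity for existence, the paper's elimination for uniqueness among interior points.
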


\begin{proof}
Since we only discuss time step $t+1$ in this proof, for simplicity of notations and without causing any confusion, we use $x_{ir}$ ($w_r$, $x^*_{ir}$, $w^*_r$ resp.) for $x_{ir}(t+1)$ ($w_r(t+1)$, $x^*_{ir}(t+1)$, $w^*_r(t+1)$ resp.). 

If company $r$ knows the actions of all the other companies at time step $t+1$, i.e., $(\bm{x}_s,w_s)$, for any $s\neq r$, the optimal response for company $r$ is the solution to the following optimization problem: 
\begin{equation}\label{problem_seeding_quality_optimization}
\begin{aligned}
& \underset{(\bm{x},w)\in \Omega_r}{\text{minimize}}
&& -\vectorones[n]^{\top}\bm{p}_r(t+1)\\
& \text{subject to}
&& \vectorones[n]^{\top}\bm{x}+w-c_r\le 0.
\end{aligned}
\end{equation}
Let $\tilde{x}_{ir}=x_{ir}+\beta_{ir}(t)\gamma$ and $L_r(\bm{x}_r,w_r,\mu_r)=-\vectorones[n]^{\top}\bm{p}_r(t+1)+\mu_r \vectorones[n]^{\top}\bm{x}_r+\mu_r w_r - \mu_r c_r$, for any $i\in V$ and $r\in \Theta$. The solution to the optimization problem~\eqref{problem_seeding_quality_optimization} satisfies
\begin{align}
\label{eq_game_seeding_quality_partial_L_x} \frac{\partial L_r}{\partial x_{ir}} & = -\alpha_i \frac{\sum_{s\neq r}\tilde{x}_{is}}{(\sum_{s=1}^R \tilde{x}_{is})^2} + \mu_r =0,\\
\label{eq_game_seeding_quality_partial_L_w} \frac{\partial L_r}{\partial w_r} & = -\vectorones[n]^{\top}(\vectorones[n]-\bm{\alpha})\frac{\sum_{s\neq r}w_s}{(\vectorones[R]^{\top}\bm{w})^2} + \mu_r = 0,\\
\label{eq_game_seeding_quality_partial_L_mu} \frac{\partial L_r}{\partial \mu_r} & = \vectorones[n]^{\top}\bm{x}_r + w_r -c_r = 0.
\end{align}

According to equation~\eqref{eq_game_def_Nash}, $(\bm{x}^*_r,w^*_r)$ is the solution to the optimization problem~\eqref{problem_seeding_quality_optimization} with $(\bm{x}_s,w_s)=(\bm{x}^*_s,w^*_s)$ for any $s\neq r$. One immediate result is that $\vectorones[n]^{\top}\bm{x}^*_r + w^*_r -c_r = 0$ for any $r\in \Theta$. Then, according to equation~\eqref{eq_game_seeding_quality_partial_L_x}, we have that, at the Nash equilibrium at time step $t+1$,
\begin{equation*}
\frac{1}{\sqrt{\mu_r}} = \frac{1}{\sum_{k=1}^n \sqrt{\alpha_k \sum_{s\neq r}\tilde{x}^*_{ks}}} \sum_{s=1}^R \big( c_s-w^*_s+\vectorones[n]^{\top}\bm{\beta}_s \gamma \big),
\end{equation*} 
and therefore,
\begin{equation}\label{eq_game_seeding_quality_proof_1}
\frac{\sqrt{\alpha_i \sum_{s\neq r}\tilde{x}_{is}^*}}{\sum\limits_{k=1}^n \sqrt{\alpha_k \sum\limits_{s\neq r}\tilde{x}_{ks}^*}} = \frac{\sum\limits_{s=1}^R \tilde{x}_{is}^*}{\sum\limits_{s=1}^R \big( c_s-w_s^*+\vectorones[n]^{\top}\bm{\beta}_s(t) \gamma \big)}.
\end{equation}
The right-hand side of the equation above does not depend on the product index $r$. Therefore,
\begin{equation*}
\frac{\sqrt{\alpha_i \sum_{s\neq r}\tilde{x}_{is}^*}}{\sum_{k=1}^n \sqrt{\alpha_k \sum_{s\neq r}\tilde{x}_{ks}^*}} = \frac{\sqrt{\alpha_i \sum_{s\neq \tau}\tilde{x}_{is}^*}}{\sum_{k=1}^n \sqrt{\alpha_k \sum_{s\neq \tau}\tilde{x}_{ks}^*}},
\end{equation*}
for any $r,\tau \in \Theta$. The equation above leads to
\begin{equation*}
\frac{\sum_{s\neq r}\tilde{x}^*_{is}}{\sum_{s\neq \tau}\tilde{x}^*_{is}} = \left( \frac{\sum_{k=1}^n \sqrt{\alpha_k \sum_{s\neq r} \tilde{x}_{ks}^*}}{\sum_{k=1}^n \sqrt{\alpha_k \sum_{s\neq \tau} \tilde{x}_{ks}^*}} \right)^2.
\end{equation*}
Since the right-hand side of the equation above does not depend on the individual index $i$, we have 
\begin{equation*}
\frac{\sum_{s\neq r}\tilde{x}^*_{is}}{\sum_{s\neq r} \tilde{x}^*_{js}} = \frac{\sum_{s\neq \tau}\tilde{x}^*_{is}}{\sum_{s\neq \tau} \tilde{x}^*_{js}} = 
\frac{\sum_{s=1}^R \tilde{x}^*_{is}}{\sum_{s=1}^R \tilde{x}^*_{js}} = 
\frac{\tilde{x}^*_{ir}}{\tilde{x}^*_{jr}},
\end{equation*}
for any $r,\tau \in \Theta$. Combine the equation above with equation~\eqref{eq_game_seeding_quality_proof_1} and then we obtain
\begin{equation*}
\frac{\sum_{s=1}^R \tilde{x}^*_{is}}{\sum_{s=1}^R \tilde{x}^*_{js}} = \sqrt{\frac{\alpha_i}{\alpha_j}} \sqrt{\frac{\sum_{s\neq r}\tilde{x}^*_{is}}{\sum_{s\neq r}\tilde{x}^*_{js}}}\quad \Rightarrow \quad \frac{\tilde{x}^*_{ir}}{\tilde{x}^*_{jr}} = \frac{\alpha_i}{\alpha_j},
\end{equation*}
for any $r\in \Theta$. Therefore,
\begin{equation}\label{eq_game_seeding_quality_proof_2}
\tilde{x}^*_{ir} = \frac{\alpha_i}{\vectorones[n]^{\top}\bm{\alpha}} \big( c_r-w^*_r + \vectorones[n]^{\top}\bm{\beta}_r(t)\gamma \big).
\end{equation}

Combining equation~\eqref{eq_game_seeding_quality_proof_2} and~\eqref{eq_game_seeding_quality_partial_L_w}, we obtain
\begin{equation*}
\frac{c_r-w^*_r+\vectorones[n]^{\top}\bm{\beta_r}(t)\gamma}{w^*_r}=\frac{c_{\tau}-w^*_{\tau}+\vectorones[n]^{\top}\bm{\beta}_{\tau}(t)\gamma}{w^*_{\tau}} = \eta,
\end{equation*}
for any $r,\tau \in \Theta$ and some constant $\eta$. Substitute the equation above back into equation~\eqref{eq_game_seeding_quality_partial_L_w}, we solve that $\eta = \vectorones[n]^{\top}\bm{\alpha}/\vectorones[n]^{\top}(\vectorones[n]-\bm{\alpha})$. Therefore, we obtain equation~\eqref{eq_game_seeding_quality_Nash_eq_w} and by substituting equation~\eqref{eq_game_seeding_quality_Nash_eq_w} into equation~\eqref{eq_game_seeding_quality_proof_2} we obtain equation~\eqref{eq_game_seeding_quality_Nash_eq_x}. Moreover, one can check that equation~\eqref{eq_game_seeding_quality_condition_budget} guarantees $\tilde{x}^*_{ir}>0$ and $w^*_r>0$ for any $i\in V$ and $r\in \Theta$. This concludes the proof for Conclusion i).

Substituting euqation~\eqref{eq_game_seeding_quality_Nash_eq_x} and~\eqref{eq_game_seeding_quality_Nash_eq_w} into the dynamical system~\eqref{eq_game_dyn_seeding_quality}, after simplification, we have
\begin{equation*}
\bm{p}_r(t+1) = \frac{c_r+\vectorones[n]^{\top}\tilde{A}\bm{p}_r(t)\gamma}{\vectorones[R]^{\top}\bm{c}+n\gamma} \vectorones[n] = p_r(t+1) \vectorones[n].
\end{equation*}
Therefore, $p_r(t+1) = \big(c_r+n\gamma p_r(t)\big)/(\vectorones[R]^{\top}\bm{c}+n \gamma)$ for any $t\ge 1$.
One can check that the equation above leads to all the results in Conclusion ii). 
\end{proof}

\emph{c) Interpretations and Remarks: }The basic idea of seeding-quality trade-off in the competitive seeding-quality game is similar to the work by Fazeli et. al.~\cite{AF-AA-AJ:15} but our model is essentially different from~\cite{AF-AA-AJ:15} in that our model is multi-stage, and takes the products, rather than the individuals, as the players. Moreover, our model is based on a different network propagation dynamics.

Theorem~\ref{thm_game_seeding_quality} provides some strategic insights on the investment decisions and the seeding-quality trade-off.

\emph{c.1) Interpretation of $\beta_{ir}(t)$:} By definition, $\beta_{ir}(t)$ is the average probability, among all the neighbors of individual $i$, of adopting product $H_r$ at time step $t$, while $\vectorones[n]^{\top}\bm{\beta}_r(t)/n=\sum_{l=1}^n \sum_{k=1}^n \tilde{a}_{lk}\beta_{kr}(t)/n$ is a convext combination of all the entries of $\bm{\beta}_r(t)$ and characterizes the current overall acceptance of product $H_r$.

\emph{c.2) Seeding-quality trade-off:} According to equation~\eqref{eq_game_seeding_quality_Nash_eq_w}, at the Nash equilibrium, the investment on $H_r$'s product quality monotonically decreases with $\vectorones[n]^{\top}\bm{\alpha}/n$, and increases with $\vectorones[n]^{\top}\bm{\beta}_r$. 
This observation implies that: 1) in a society with relatively low open-mindedness, the competing companies should relatively emphasize more on improving their products' quality, rather than seeding, and vice versa; 2) for products which are currently not widely adopted, seeding is relatively more efficient than improving the product's quality.

\emph{c.3) Allocation of seeding resources among the individuals:} According to equation~\eqref{eq_game_seeding_quality_Nash_eq_x}, for any company $r$, at the Nash equilibrium at each time step $t+1$, the investment on seeding for any individual $i$, i.e., $x_{ir}(t+1)$, increases with individual $i$'s open-mindedness, since it is easier for a more open-minded individual to be influence by seeding. Moreover, if we rewrite equation~\eqref{eq_game_seeding_quality_Nash_eq_x} as
\begin{equation*}
x^*_{ir}(t+1) = \frac{\alpha_i}{n} c_r + \frac{\alpha_i \gamma}{n} \sum_{k\neq i} \beta_{kr}(t) - (1-\frac{\alpha_i}{n})\gamma\beta_{ir}(t),
\end{equation*}  
one would observe that $x^*_{ir}(t)$ monotonically decreases with $\beta_{ir}(t)$, which denotes the average probability of adopting $H_r$ among individual $i$'s neighbors. A possible explanation is that, with large $\beta_{ir}(t)$, individual $i$ is very likely to be converted to $H_r$ due to her neighbors, and thereby the seeding for individual $i$ is relatively not necessary. Once again we rewrite equation~\eqref{eq_game_seeding_quality_Nash_eq_x} as
\begin{equation*}
\begin{split}
x^*_{ir}(t+1) & = \frac{\alpha_i}{n}c_r + \frac{\alpha_i \gamma}{n}\sum_{l=1}^n \tilde{a}_{li}p_{ir}(t)\\
&\quad + \frac{\alpha_i \gamma}{n} \sum_{l=1}^n \sum_{k\neq i}\tilde{a}_{lk}p_{kl}(t) - \gamma \sum_{k=1}^n \tilde{a}_{ik}p_{kr}(t).
\end{split}
\end{equation*}  
On the right-hand side of the equation above, only the second term contains $p_{ir}(t)$. Therefore, $x^*_{ir}(t)$ increases with $\sum_{l=1}^n \tilde{a}_{li}p_{ir}(t)$, in which $\sum_{l=1}^n \tilde{a}_{li}$ is individual $i$'s in-degree, reflecting $i$'s potential of influencing the others, and $\sum_{l=1}^n \tilde{a}_{li}p_{ir}(t)$ characterizes individual $i$'s potential of converting other individuals to product $H_r$. 

\emph{c.4) Nash equilibrium on the boundary:} Without equation~\eqref{eq_game_seeding_quality_condition_budget}, the right-hand sides of equation~\eqref{eq_game_seeding_quality_Nash_eq_x} and~\eqref{eq_game_seeding_quality_Nash_eq_w} could be non-positive. In this case, the analysis becomes more complicated and the Nash equilibrium would be on the boundary of $\Omega$, i.e., some of the $x^*_{ir}(t)$ or $w^*_r(t)$ should be 0. 

\emph{c.5 Preset quality:} In order to make the quality function $g_r(\bm{w})$ smooth at $\bm{w}=\vectorzeros[R]$, we can modify its definition as $g_r(\bm{w};\bm{\xi},u)=(w_r+\xi_r u)/(\vectorones[R]^T\bm{w}+u)$, where $u>0$ and $\bm{xi}\succ \vectorzeros[R]$ are model parameters and $\vectorones[R]^T \bm{\xi}=1$. The parameter $\xi_r$ characterizes product $H_r$'s preset relative quality. Based on the same argument of the proof for Theorem~\ref{thm_game_seeding_quality}, one can check that, for the Nash equilibrium at which all the individuals' investments on both seeding and quality are positive, $w_r^*(t+1)+u\vectorones[n]^{\top}\bm{\alpha}\xi_r/n$ is equal to the right-hand side of equation~\eqref{eq_game_seeding_quality_Nash_eq_w} and thereby any company $r$'s investment on quality decreases with $H_r$'s preset relative quality.   


\subsection{Model 2: competitive seeding game}
In this subsection we consider the case when the products' relative quality is fixed and the competing companies can only invest on seeding. The model set-up is the same with the game proposed in Section V.A, except that the action for any company $r$ is $\bm{x}_r\in \mathbb{R}^{n\times 1}_{\ge 0}$, constrained by $\vectorones[n]^{\top}\bm{x}_r\le c_r$. The dynamics of $P(t)=\big( p_{ir}(t) \big)_{n\times R}$, with seeding actions $X(t)$, is given by
\begin{equation}\label{eq_game_seeding_only_dyn_sys}
\begin{split}
p_{ir}(t+1) &= \alpha_i \frac{\gamma\beta_{ir}(t)+x_{ir}(t)}{\bm{x}^{(i)}(t)\vectorones[R]+\gamma}\\
&\quad + (1-\alpha_i)\sum_{s=1}^R p_{is}(t)\delta_{sr}.
\end{split}
\end{equation}

The following theorem gives the closed-form expression of the unique Nash equilibrium at each time step and the system's dynamical behavior if all the companies are adopting the policies at the Nash equilibrium.
\begin{theorem}[Competitive propagation game with seeding only]\label{thm_game_seeding_only}
Consider the competitive seeding game described in this subsection. Assume that $c_r>(\vectorones[n]^{\top}\bm{\alpha}/\min_i \alpha_i - 1)\gamma$ for any $r\in \Theta$. Then we have:

i) at each time step, the Nash equilibrium $X^*(t)$ is given by
\begin{equation*}
\bm{x}^*_r(t) = \frac{c_r+\vectorones[n]^{\top}\bm{\beta}_r(t)\gamma}{\vectorones[n]^{\top}\bm{\alpha}} \bm{\alpha} - \gamma \bm{\beta}_r(t),
\end{equation*}
and $\bm{x}^*_r(t)\succ \vectorzeros[n]$, for any $r\in \Theta$;

ii) if $X(t)=X^*(t)$ for any $t\in \mathbb{N}$, then the matrix form of system~\eqref{eq_game_seeding_only_dyn_sys} is given by 
\begin{equation}\label{eq_game_seeding_only_Nash_dyn}
\begin{split}
P(t+1)&=\diag(\bm{\alpha})\frac{\vectorones[n]\bm{c}^{\top}+\gamma (\vectorones[n]\vectorones[n]^{\top})\tilde{A}P(t)}{\vectorones[n]^{\top}\bm{c}+n\gamma}\\
&\quad + (I-\diag(\bm{\alpha}))P(t)\Delta.
\end{split}
\end{equation}
Moreover, there exists a unique fixed point $P^*\in S_{nR}(\vectorones[n])$ for the system above, and, for any $P(0)\in S_{nR}(\vectorones[n])$, $P(t)$ converges to $P^*$ exponentially fast with the convergence rate $\eta = \max_i \alpha_i n \gamma /(\vectorones[n]^{\top}\bm{c}+n\gamma)+(1-\alpha_i)\zeta(\Delta)$, 
where $0\le \zeta(\Delta)=1-\sum_{r=1}^R \min_s \delta_{sr}\le 1$.
\end{theorem}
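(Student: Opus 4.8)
The plan is to mirror the two-part structure of the proof of Theorem~\ref{thm_game_seeding_quality}. For Conclusion i), I fix the time step and suppress the index. When company $r$ best-responds to the fixed actions $\bm{x}_s$ ($s\neq r$), its payoff $\vectorones[n]^{\top}\bm{p}_r(t+1)$ depends on $\bm{x}_r$ only through the social term $\sum_i \alpha_i(\gamma\beta_{ir}+x_{ir})/(\bm{x}^{(i)}\vectorones[R]+\gamma)$, since the self term $(1-\alpha_i)\sum_s p_{is}(t)\delta_{sr}$ is a constant determined by $P(t)$. Introducing $\tilde{x}_{ir}=x_{ir}+\gamma\beta_{ir}(t)$ as before, the crucial simplification is that $\sum_{s=1}^R \beta_{is}(t)=(\tilde{A}\vectorones[n])_i=1$, because $\tilde{A}$ is row-stochastic and $P(t)\vectorones[R]=\vectorones[n]$; this collapses the denominator to $\bm{x}^{(i)}\vectorones[R]+\gamma=\sum_{s=1}^R\tilde{x}_{is}$. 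Hence the objective becomes $\sum_i \alpha_i\,\tilde{x}_{ir}/\sum_s\tilde{x}_{is}$ and the budget constraint becomes $\sum_i \tilde{x}_{ir}\le c_r+\gamma\vectorones[n]^{\top}\bm{\beta}_r(t)$.

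Each summand is concave in $x_{ir}$ with the other companies' variables held fixed, so the KKT stationarity condition $\alpha_i\sum_{s\neq r}\tilde{x}_{is}/(\sum_s\tilde{x}_{is})^2=\mu_r$ together with the active budget constraint is sufficient for company $r$'s global optimum. These are exactly the analogues of~\eqref{eq_game_seeding_quality_partial_L_x} and~\eqref{eq_game_seeding_quality_partial_L_mu} with the quality variable removed, so I would reuse verbatim the symmetry manipulation of that proof: take the square-root form of the stationarity equation, sum over $i$, exploit that the resulting right-hand side is independent of $r$, and eliminate the index dependence to reach $\tilde{x}_{ir}^*/\tilde{x}_{jr}^*=\alpha_i/\alpha_j$. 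This yields $\tilde{x}_{ir}^*=\alpha_i\big(c_r+\gamma\vectorones[n]^{\top}\bm{\beta}_r(t)\big)/\vectorones[n]^{\top}\bm{\alpha}$, and subtracting $\gamma\beta_{ir}(t)$ gives the claimed closed form. For positivity I bound the worst case: since $0\le\beta_{ir}\le\min(\vectorones[n]^{\top}\bm{\beta}_r,1)$ and the coefficient of $\vectorones[n]^{\top}\bm{\beta}_r$ is positive while that of $\beta_{ir}$ is negative, the entry $x_{ir}^*$ is minimized at $\beta_{ir}=\vectorones[n]^{\top}\bm{\beta}_r=1$, where it equals $\alpha_i(c_r+\gamma)/\vectorones[n]^{\top}\bm{\alpha}-\gamma$; requiring this to be positive for the least open-minded node gives precisely the hypothesis $c_r>(\vectorones[n]^{\top}\bm{\alpha}/\min_i\alpha_i-1)\gamma$.

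For Conclusion ii), I substitute $\tilde{x}_{ir}^*$ back into~\eqref{eq_game_seeding_only_dyn_sys}. The social-term numerator is $\tilde{x}_{ir}^*$ and its denominator is $\sum_s\tilde{x}_{is}^*=\alpha_i(\vectorones[R]^{\top}\bm{c}+n\gamma)/\vectorones[n]^{\top}\bm{\alpha}$, using $\sum_s\vectorones[n]^{\top}\bm{\beta}_s=\vectorones[n]^{\top}\tilde{A}\vectorones[n]=n$; the factors $\alpha_i/\vectorones[n]^{\top}\bm{\alpha}$ cancel, leaving $p_{ir}(t+1)=\alpha_i\big(c_r+\gamma\vectorones[n]^{\top}\tilde{A}\bm{p}_r(t)\big)/(\vectorones[R]^{\top}\bm{c}+n\gamma)+(1-\alpha_i)\sum_s p_{is}(t)\delta_{sr}$, whose matrix form is exactly~\eqref{eq_game_seeding_only_Nash_dyn}. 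Writing $F(P)$ for the right-hand side, a direct computation using $\tilde{A}\vectorones[n]=\vectorones[n]$, $\Delta\vectorones[R]=\vectorones[R]$ and $\bm{c}^{\top}\vectorones[R]=\vectorones[R]^{\top}\bm{c}$ shows $F(P)\vectorones[R]=\bm{\alpha}+(\vectorones[n]-\bm{\alpha})=\vectorones[n]$ and $F(P)\succeq\vectorzeros[n\times R]$, so $F$ maps $S_{nR}(\vectorones[n])$ into itself.

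Finally I show $F$ is a contraction in the row-wise norm $d(X,Y)=\lVert X-Y\rVert_{\infty}$. For $E=P-P'$ with $E\vectorones[R]=\vectorzeros[n]$, the $i$-th row of $F(P)-F(P')$ equals $\tfrac{\alpha_i n\gamma}{\vectorones[R]^{\top}\bm{c}+n\gamma}\cdot\tfrac{1}{n}\vectorones[n]^{\top}\tilde{A}E+(1-\alpha_i)\bm{e}^{(i)}\Delta$, where $\bm{e}^{(i)}$ is the $i$-th row of $E$. I bound the first piece by $\lVert\vectorones[n]^{\top}\tilde{A}E\rVert_1\le n\lVert E\rVert_{\infty}$ (the column sums of $\tilde{A}$ are nonnegative and total $n$), and the second by the estimate from Case~1 of Theorem~\ref{thm_asym_behav_social_self_NCPM}, which gives $\lVert\bm{e}^{(i)}\Delta\rVert_1\le\zeta(\Delta)\lVert\bm{e}^{(i)}\rVert_1$ precisely because $\bm{e}^{(i)}$ sums to zero. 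The per-row factor is thus $\tfrac{\alpha_i n\gamma}{\vectorones[R]^{\top}\bm{c}+n\gamma}+(1-\alpha_i)\zeta(\Delta)$, which is strictly below $\alpha_i+(1-\alpha_i)=1$ since $\vectorones[R]^{\top}\bm{c}>0$ and $\alpha_i>0$; hence $\eta=\max_i\big(\alpha_i n\gamma/(\vectorones[R]^{\top}\bm{c}+n\gamma)+(1-\alpha_i)\zeta(\Delta)\big)<1$, and the Banach fixed point theorem yields the unique $P^*$ and exponential convergence at rate $\eta$. I expect the main obstacle to be the equilibrium extraction of step i), namely establishing that the only KKT solution is the symmetric one $\tilde{x}_{ir}^*\propto\alpha_i$ and verifying that the budget condition keeps every coordinate strictly interior, since on the boundary the closed form breaks down.
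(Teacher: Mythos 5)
Your proposal is correct and follows exactly the route the paper indicates: the paper's "proof" is only a two-sentence sketch saying that part i) mirrors the KKT/symmetry argument of Theorem~\ref{thm_game_seeding_quality} and part ii) follows the Case~1 contraction argument of Theorem~\ref{thm_asym_behav_social_self_NCPM} plus the Banach fixed point theorem, and you have filled in those details faithfully (including the key observations $\sum_s\beta_{is}(t)=1$, $\sum_s\vectorones[n]^{\top}\bm{\beta}_s(t)=n$, and the per-row contraction factor $\alpha_i n\gamma/(\vectorones[R]^{\top}\bm{c}+n\gamma)+(1-\alpha_i)\zeta(\Delta)<1$, which correctly does not need $\zeta(\Delta)<1$). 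No gaps beyond those already present in the paper's own treatment.
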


Proof of Conclusion i) in Theorem~\ref{thm_game_seeding_only} is similar to the proof of Conclusion i) in Theorem~\ref{thm_game_seeding_quality}. As for Conclusion ii) in Theorem~\ref{thm_game_seeding_only}, following the same line of argument in the proof of Case 1 in Theorem~\ref{thm_asym_behav_social_self_NCPM} and applying the Banach fixed point theorem, one can check that equation~\eqref{eq_game_seeding_only_Nash_dyn} defines a contraction map. 

Interpretations on the allocation of seeding investments among the individuals, for the competitive seeding-quality game proposed in the previous subsection, also applies to the competitive seeding game proposed in this subsection. One main difference in dynamical property between these two game-theoretic models is that, in the competitive seeding game defined in Section V.B, the matrix $P(t)$ does not become a rank-one matrix with all the players adopting the policies at the Nash equilibrium, and moreover, the fixed point $P^*$ for system~\eqref{eq_game_seeding_only_Nash_dyn} depends not only on the budgets vector $\bm{c}$, but also on $\bm{\alpha}$ and $\tilde{A}$.

\subsection{Simulation illustration}
Simulation work has been conducted for the two game-theoretic models analyzed in this section, on a competitive propagation system with $R=2$, $n=5$ and $\tilde{A}$ being strongly connected. The model parameters are set as $\bm{\alpha}=(0.8,0.85,0.75,0.84,0.76)^{\top}$, $\gamma = 100$, $c_1=600$, $c_2=900$, such that the conditions on $c_r$ in Theorem~\ref{thm_game_seeding_quality} and~\ref{thm_game_seeding_only} are satisfied. For the competitive seeding game, $\Delta$ is randomly generated.
Figure~\ref{fig_game_seeding_quality} shows that, for the competitive seeding-quality game, the average probability of adopting $H_1$ ($H_2$ resp.) converges to $c_1/(c_1+c_2)$ ($c_2/(c_1+c_2)$ resp.), as predicted by Theorem~\ref{thm_game_seeding_quality}. In addition, Figure~\ref{fig_game_seeding_only} shows that, for the competitive seeding game, the average probability of adopting $H_1$ (or $H_2$ resp.) also converges. Moreover, one can observe, from both Figure~\ref{fig_game_seeding_quality} and~\ref{fig_game_seeding_only}, that the company adopting the Nash policy, with the other company just randomly allocating the investment, has higher pay-off than in the case in which both sides are adopting the policies at the Nash equilibrium.

\begin{figure}
\begin{center}
\subfigure[competitive seeding-quality game]{\label{fig_game_seeding_quality} \includegraphics[width=.46\linewidth]{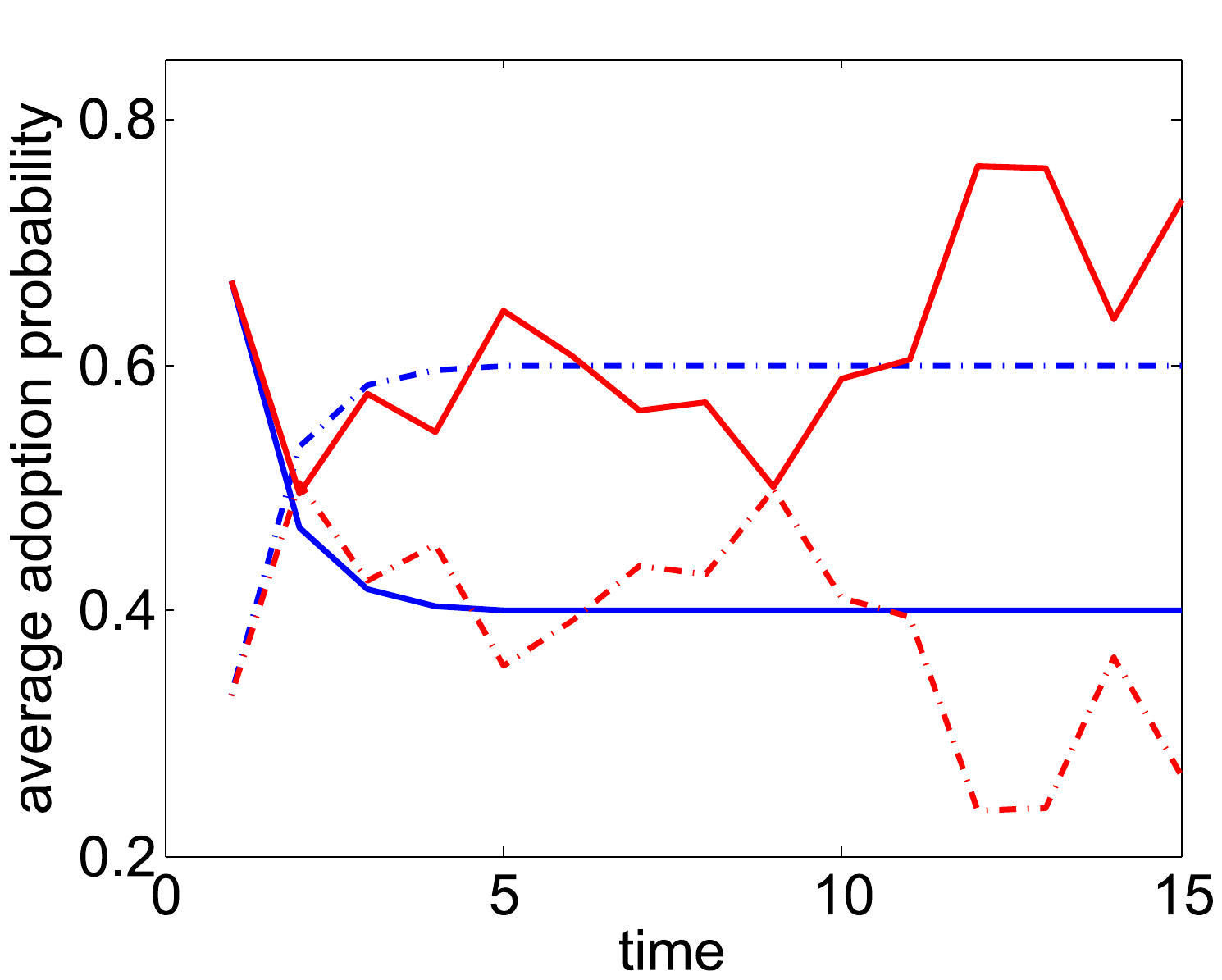}}
\subfigure[competitive seeding game]{\label{fig_game_seeding_only} \includegraphics[width=.46\linewidth]{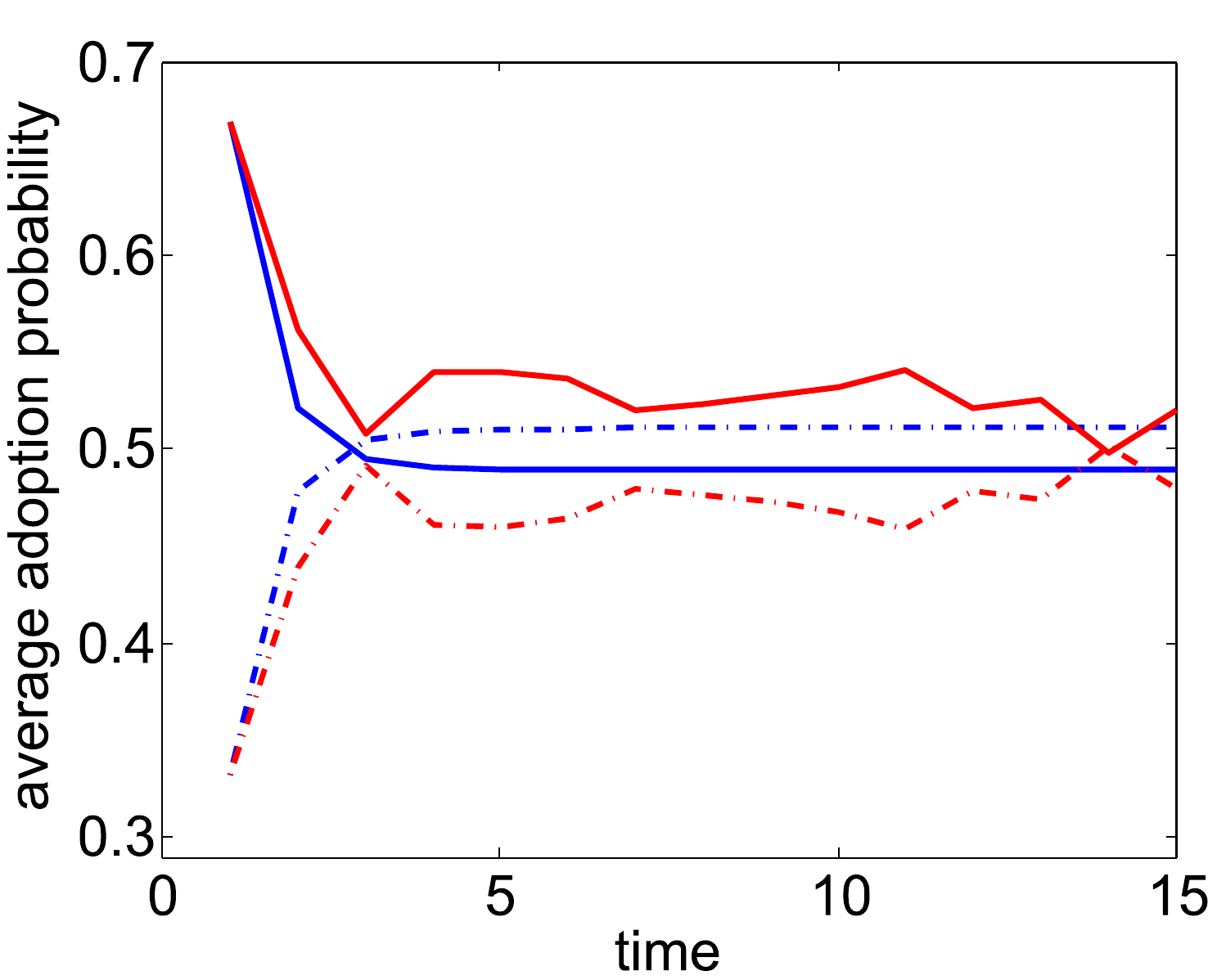}}
\caption{Evolution of the average adoption probabilities $\vectorones[n]^{\top}\bm{p}_r(t)/n$ for the two game models proposed in Section V. The solid curves represent the average adoption probability for product $H_1$, while the dash curves represents the average adoption probability for product $H_2$. The blue curves correspond to the case in which both companies are adopting the Nash policies, while the red curves correspond to the case when company 1 is adopting the Nash policy but company 2 is randomly allocating the investment on seeding and product quality.}
\label{fig_game}
\end{center}
\end{figure}

\section{Conclusion}
This paper discusses a class of competitive propagation models based on two product-adoption mechanisms: the social conversion and the self conversion. Applying the independence approximation we propose two difference equations systems, referred to as the social-self NCPM and the self-social NCPM respectively. Theoretical analysis reveals that the structure of the product-conversion graph plays an important role in determining the nodes' asymptotic state probability distributions. Simulation results reveal the high accuracy of the independence approximation and the asymptotic behavior of the original social-self Markov chain model. Based on the social-self NCPM, we propose a class of multi-stage competitive propagation games and discuss the trade-off between seeding and quality at the unique Nash equilibrium. One possible future work is the deliberative investigation on the Nash equilibrium on the boundary. Another open problem is the stability analysis of the self-social NCPM with $R>2$. Simulation results support the claim that, for the self-social NCPM with $R>2$, there also exists a unique fixed point $P^*$ and, for any initial condition $P(0)\in S_{nR}(\vectorones[n])$, the solution $P(t)$ to equation~\eqref{eq_mf_self_social_matrix} converges to $P^*$. We leave this statement as a conjecture.

\bibliographystyle{plain}
\bibliography{alias,FB,Main}

\end{document}